\documentclass[twoside]{article}

\usepackage[accepted]{aistats2026} 

\usepackage[utf8]{inputenc} 
\usepackage[T1]{fontenc}    
\usepackage[round,authoryear]{natbib}
\usepackage{hyperref}       
\usepackage{url}            
\usepackage{booktabs}       
\usepackage{amsfonts}       
\usepackage{nicefrac}       
\usepackage{microtype}      
\usepackage{xcolor}         
\usepackage{amsmath} 
 \usepackage{graphicx}
 \usepackage{subfig}
 \usepackage{float} 
\usepackage{amsmath}
\usepackage{amssymb}
\usepackage{mathtools}
\usepackage{wrapfig}
\usepackage{amsthm}
\usepackage{subcaption}
\usepackage{multirow} 
\usepackage[table]{xcolor}  
\usepackage{colortbl}       
\usepackage{tabularx}
\usepackage{booktabs}
\usepackage{siunitx}
\usepackage{multirow}
\usepackage[table]{xcolor}
\usepackage{ragged2e}
\usepackage{caption}
\theoremstyle{plain}
\newtheorem{theorem}{Theorem}[section]

\theoremstyle{definition}

\theoremstyle{remark}

\usepackage{algorithm}
\usepackage{algorithmic}
\usepackage{pifont}
\definecolor{cvprblue}{rgb}{0.21,0.49,0.74}
\definecolor{greyL}{RGB}{230,248,255}

\begin{document}

\runningauthor{Dou et al.}
%

%

\twocolumn[

\aistatstitle{AMRM-Pure: Semantic-Preserving Adversarial Purification}

\aistatsauthor{
\begin{tabular}{c}
Zhihao Dou$^{1,2,*}$ \quad
Zhiqiang Gao$^{1,*,\dagger}$ \quad
Dongfei Cui$^{3}$ \quad
Weida Wang$^{4}$ \quad
Qinjian Zhao$^{1}$ \\
Dinggen Zhang$^{1}$ \quad
Jun Yan$^{5,\dagger}$ \quad
Zeke Xie$^{6}$ \quad
Shufei Zhang$^{7,\dagger}$
\end{tabular}
}

\aistatsaddress{
$^{1}$ Wenzhou-Kean University; 
$^{2}$ Case Western Reserve University; 
$^{3}$ Northeast Electric Power University;\\
$^{4}$ Fudan University; 
$^{5}$ Shanghai Ocean University; 
$^{6}$ HKUST (Guangzhou); 
$^{7}$ Shanghai AI Lab
}
\vspace{-2em}
\begin{center}
{\footnotesize
$^*$Equal contribution.\quad 
$^\dagger$Corresponding authors \quad \\
}
\end{center}

]


\begin{abstract}
  Adversarial purification is a defense technique that employs generative models to remove adversarial perturbations. Current methods often rely on powerful generators, typically diffusion models, and focus on reducing the gap between adversarial and clean samples in the feature space, while overlooking semantic correlation within a single sample. 
  To address this issue, we explore adversarial purification from the perspective of preserving semantic relationships among image patches.
  We employ an \textbf{A}ttentive \textbf{M}ask \textbf{R}econstruction \textbf{M}odel (\textbf{AMRM}), which shows superior performance. Our theoretical and experimental analysis reveals that AMRM is highly sensitive to adversarial noise, as such noise significantly distorts patch relationships. Based on this observation, we propose AMRM-Pure, a purification framework that denoises adversarial inputs by preserving patch-level semantics, and formulate this process as a tractable optimization problem with respect to the input. To further enhance robustness, we finetune AMRM-Pure with classification loss to strengthen semantic consistency. We apply our insight to two AMRM architectures, including Mask Autoencoder (MAE) and MaskDiT. Extensive experiments confirm the effectiveness of our method, establishing new state-of-the-art performance across multiple benchmarks.
\end{abstract}

\section{INTRODUCTION}

Deep Neural Networks (DNNs) are vulnerable to adversarial examples~\citep{12,13,14,15}, which are imperceptible to humans. However, these inputs with the malicious perturbations can cause DNNs to make erroneous predictions. Adversarial training~\citep{madry2018towards,zhang2019theoretically} is the state-of-the-art method for defending against adversarial attacks. However, the trade-off between generalization and robustness remains a concern~\citep{zhang2019theoretically}, especially against unseen adversarial examples. Furthermore, adversarial training incurs significantly higher computational costs compared to standard training.
\begin{figure*}[ht]

  \begin{minipage}[b]{0.15\linewidth}
    \centering
    \subfloat[\label{9o}]{\includegraphics[width=\linewidth]{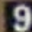}}
  \end{minipage}
   \hfill
   \begin{minipage}[b]{0.15\linewidth}
    \centering
    \subfloat[\label{9ma_a}]{\includegraphics[width=\linewidth]{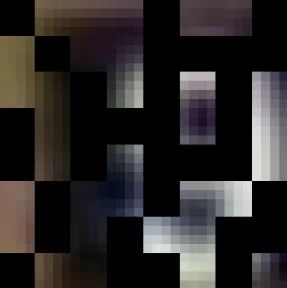}}
  \end{minipage}
   \hfill
  \begin{minipage}[b]{0.15\linewidth}
    \centering
    \subfloat[\label{9c}]{\includegraphics[width=\linewidth]{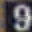}}
  \end{minipage}
   \hfill
  \begin{minipage}[b]{0.15\linewidth}
    \centering
    \subfloat[\label{adv9}]{\includegraphics[width=\linewidth]{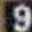}}
  \end{minipage}
   \hfill
  \begin{minipage}[b]{0.15\linewidth}
    \centering
    \subfloat[ \label{9advrec}]{\includegraphics[width=\linewidth]{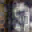}}
  \end{minipage}
  \hfill
  \begin{minipage}[b]{0.15\linewidth}
    \centering
    \subfloat[ \label{deadvb}]{\includegraphics[width=\linewidth]{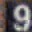}}
  \end{minipage}
  
  \caption{
  (a) Original image,
  (b) Masked image,
  (c) Clean image reconstruction from MAE,
  (d) Adversarial example under AutoAttack, 
  (e) Reconstruction of adversarial example under AutoAttack from MAE, 
  (f) Reconstruction of the denoised image under AutoAttack from MAE (denoised by our AMRM-Pure$_{\text{MAE}}$).}
  \label{effect}
  \vspace{-.15in}
\end{figure*}

Alternatively, another notable defense strategy is adversarial purification, which attracts widespread attention. Adversarial purification can be broadly classified into two categories, including purification with generative models \citep{yoon2021adversarial,nie2022diffusion,lin2024adversarial,bai2024diffusion,zhang2024classifier} and adaptation-based purification \citep{shi2021online}. Generative model-based approaches are the most widely used methods in adversarial purification, typically harnessing the powerful capabilities of generative models (e.g., diffusion) to transform the distribution of adversarial examples to that of clean samples \citep{nie2022diffusion}. 
Future efforts will aim to further enhance the denoising capabilities of the purification model through various approaches. These include leveraging contrastive guidance to steer diffusion models \citep{bai2024diffusion}, integrating classifier confidence guidance into the denoising process \citep{zhang2024classifier}, and fine-tuning the purification model with adversarial loss for robust optimization \citep{lin2024adversarial}.

The aforementioned methods primarily focus on aligning adversarial examples with the semantic distribution of clean samples, but neglect the semantic relationships among different patches within a sample. To fill this gap, we use an attentive mask reconstruction model (AMRM) to investigate how adversarial perturbations distort the semantic relationships among image patches. 
AMRM divides an image into multiple patches, masks a subset of them, and reconstructs the masked patches by using a self-attention \citep{vaswani2017attention} mechanism to explicitly capture dependencies among visible patches, such as the Mask Autoencoder (MAE) \citep{he2022masked} or MaskDiT \citep{zheng2024fast}.
%
To design a robust purification method, we first identify an intriguing phenomenon of the simple MAE so that an easier analysis.
%
Specifically, in the case of adversarial examples subjected to tiny, visually imperceptible perturbations, the reconstruction performance of MAE is severely compromised, dealing a devastating blow. As a typical example shown in Figure~\ref{9o} and\ref{adv9}, although the clean example and adversarial example appear very similar, MAE's reconstruction outputs in Figures~\ref{9c} and \ref{9advrec} exhibit significant differences. The reconstruction of perturbed data, as illustrated in Figure \ref{9advrec}, still displays poor quality. These findings suggest that the reconstruction capability of MAE is highly sensitive to adversarial perturbations, although these perturbations are visually imperceptible. Motivated by this observation, we consider preserving the semantic relationships among image patches as a novel mechanism for adversarial purification, a direction that has not been fully explored in existing works.

Based on such a research motivation, our proposed study aims to fill the gap. In this paper, through a series of analyses, we conjecture that this phenomenon results from adversarial perturbations that could easily distort semantic relations within patches, i.e., causing variation in the attention matrix (AMV), which leads to degraded image reconstruction quality of MAE. Through rigorous theoretical derivations and empirical experiments, we provide compelling evidence of the sensitivity of MAE to this AMV. 
Concretely, as the patch attention matrix essentially reflects how different semantic patches may be related to the masked patch, altering the attention matrix means a semantic change when the masked patch is reconstructed. Figures \ref{o_1} and \ref{adv_1} show that the reconstruction of the target patch in the red square depends on the similar patches in the clean image, while for adversarial images, high importance is assigned to distant and dissimilar patches. This suggests that the adversarial perturbations alter the semantic relations among patches.
Moreover, our findings reveal that the reconstruction loss of adversarial examples is lower-bounded by the sum of the loss for clean examples and the AMV. 
Meanwhile, we reveal that the sensitivity of AMV is transferable to a diffusion-based AMRM, e.g., MaskDiT, as shown in empirical analysis in Figure \ref{fig:combined_purification}.
%
Drawing inspiration from this finding, we propose a novel AMRM-Pure method which purifies adversarial perturbations by minimizing AMV, ultimately resulting in a inter-patch semantic preserving framework.
%

AMRM-Pure leverages the inherent sensitivity of AMV to adversarial noise, thereby achieving enhanced robustness. To realize AMRM-Pure, we introduce two variants: AMRM-Pure$_{\text{MAE}}$ based on MAE \citep{he2022masked} and AMRM-Pure$_{\text{MaskDiT}}$ based on MaskDiT \citep{zheng2024fast}. 
Furthermore, based on the insight of the previous work \citep{lin2024adversarial,zhang2024classifier},
we propose a Robust AMRM-Pure based on MAE (RAMRM-Pure$_{\text{MaskDiT}}$) and Robust MRM-Pure based on MaskDiT (RAMRM-Pure$_{\text{MaskDiT}}$) that leverages classification loss to fine-tune the purification model, significantly improving its inter-patch semantic preservation capability. We have extensively evaluated our method by comparing the important adversarial training and adversarial purification methods on various challenging adaptive attack benchmarks. Our method achieves state-of-the-art (SOTA) performance on four datasets, e.g., CIFAR-10 \citep{krizhevsky2009learning}, CIFAR-100 \citep{krizhevsky2009learning}, SVHN \citep{netzer2011reading}, and ImageNet \citep{deng2009imagenet}.

%


In summary, our main contributions are as follows:

1) We investigate the susceptibility of attention-based MRM to noise interference from both theoretical and empirical perspectives and disclose that the noise induces the deviation of semantic relations among patches, resulting in a degradation of the quality of the reconstruction. On the basis of our findings, we devise a novel and efficient purification technique, called AMRM-Pure, which is theoretically proven by rigorous analysis.

2) By successfully applying our approach to MAE and MaskDiT, we introduce AMRM-Pure$_{\text{MAE}}$ and AMRM-Pure$_{\text{MaskDiT}}$. Meanwhile, we further propose RAMRM-Pure$_{\text{MAE}}$ and RAMRM-Pure$_{\text{MaskDiT}}$, which incorporates classification loss to fine-tune the purification model, significantly improving standard and robust accuracy.

3) Extensive experiments have been conducted to validate the effectiveness of our MRM-Pure on various benchmarks, showing that our approach consistently achieves favorable outcomes after denoising processes.

\section{PRELIMINARIES AND RELATED WORK}
\label{sec:formatting}


\subsection{Adversarial Training} 
Adversarial training (AT) is a technique that enhances the robustness of a neural network by augmenting training samples with additional adversarial examples~\citep{goodfellow2014explaining,kurakin2016adversarial,tramer2017ensemble,zhang2019theoretically,dou2024improving}. 
%
%
Since AT typically involves a high computational cost, some studies~\citep{wong2020fast,liu2021using,vivek2020single} have focused on exploring ways to accelerate the training process by a one-step training strategy. In addition, the diffusion model has been employed for extensive data augmentation for adversarial training in many proposals~\citep{wang2023better,gowal2021improving,sehwag2021robust}, which enlarged the original dataset and enhanced the robust generalization.

\subsection{Adversarial Purification}
Generative models have shown great promise in purifying adversarial examples, drawing significant attention in robustness research. The early milestone study \citep{samangouei2018defense} introduced Defense-GAN, using GANs for purification. Song et al.~\citep{song2017pixeldefend} proposed the PixelDefense method, which employs the autoregressive models to mitigate the perturbations. Score-based generative models have also been applied for defense \citep{yoon2021adversarial}. Leveraging diffusion models, {DiffPure} \citep{nie2022diffusion} uses Stochastic Differential Equation (SDE) diffusion \citep{song2020score} for the denoising procedure, achieving robustness. Recent works \citep{li2024adbm,liu2025towards} further improve robustness by fine-tuning diffusion models. Lin et al.~\citep{lin2024adversarial} proposed a hybrid approach combining adversarial training with purification. It is significant to reconstruct the data without semantic information changes. 
IDC \citep{mei2025efficient} redesigns diffusion models from generating high-quality images to producing distinguishable label images, proposing an efficient image-to-image diffusion classifier that significantly reduces computational cost while improving adversarial robustness.
Thus, Bai et al.~\citep{bai2024diffusion} introduced contrastive guidance in diffusion models to enhance purification while preserving semantics. The adversarial purification method can be combined with other machine learning paradigms. For example, the framework of Self-supervised Online Adversarial Purification (SOAP) \citep{shi2021online} achieves notable results by integrating self-supervised tasks during training, further boosting robustness.

\begin{figure}[t]
    \centering
    \includegraphics[scale=0.24]{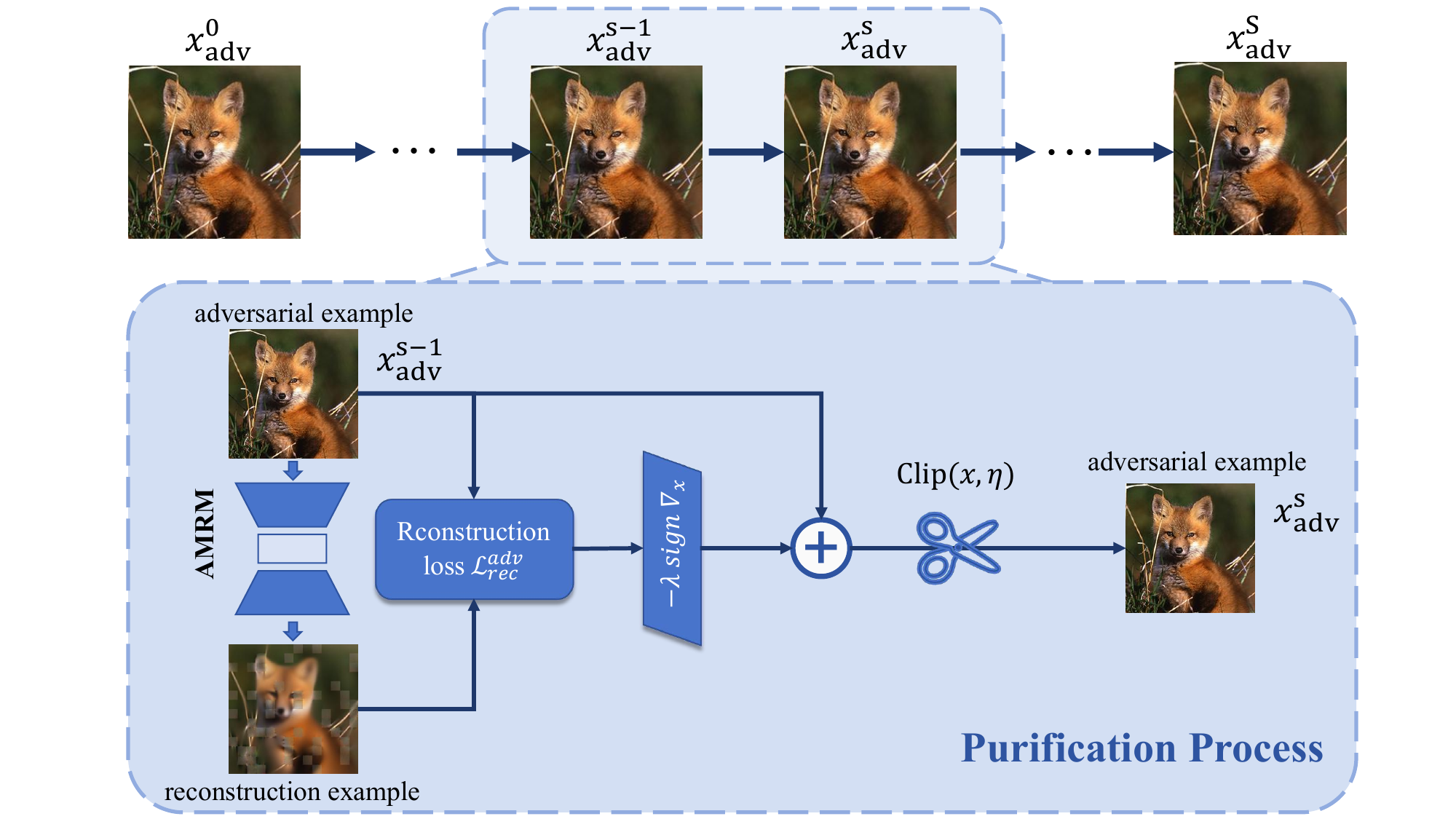}
    \caption{Overview of the proposed AMRM-Pure.}
    \label{fig:pipeline}
    \vspace{-0.5cm}
\end{figure}

There are several purification based on MAE \citep{zhou2023eliminating,wu2022denoising,you2023beyond}. DIR \citep{zhou2023eliminating} introduces a joint training framework of classifier and MAE under adversarial training, where the MAE restores robust features from unmasked patches to mitigate adversarial noise. Following the denoising autoencoder paradigm, DMAE \citep{wu2022denoising} and NIM-MAE \citep{you2023beyond} incorporate Gaussian noise into masked image modeling, which is eliminated through the encoding–decoding process. Specifically, DMAE focuses on achieving robust pre-training against Gaussian noise to enhance generalization and robustness, though it shows limited effectiveness against adversarial perturbations. NIM-MAE leverages pre-trained models to remove adversarial noise, yet its robustness performance still leaves room for improvement.
Unlike previous methods, ARM-Pure leverages the sensitivity of semantic patch relationships to adversarial perturbations and employs optimization-based denoising to reduce AMV, effectively minimizing semantic variations in adversarial examples. This novel perspective has not been studied in previous research.


\begin{figure*}[t]
    \begin{minipage}[b]{0.23\linewidth}
    \centering
    \subfloat[\label{m_1}]{\includegraphics[width=\linewidth]{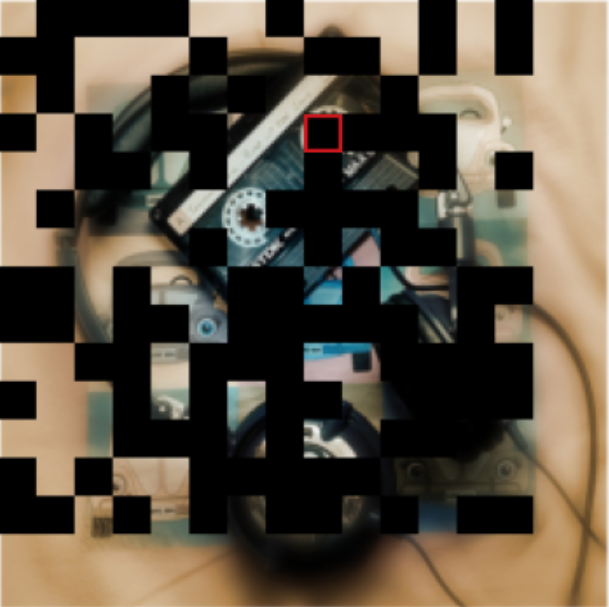}}
  \end{minipage}
   \hfill
  \begin{minipage}[b]{0.23\linewidth}
    \centering
    \subfloat[\label{o_1}]{\includegraphics[width=\linewidth]{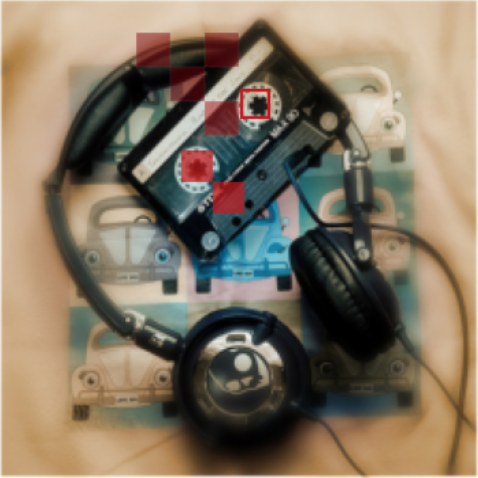}}
  \end{minipage}
   \hfill
  \begin{minipage}[b]{0.23\linewidth}
    \centering
    \subfloat[\label{adv_1}]{\includegraphics[width=\linewidth]{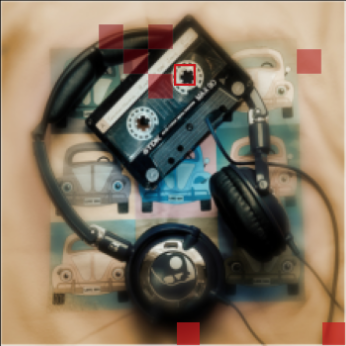}}
  \end{minipage}
  \hfill
  \begin{minipage}[b]{0.23\linewidth}
    \centering
    \subfloat[\label{pur_1}]{\includegraphics[width=\linewidth]{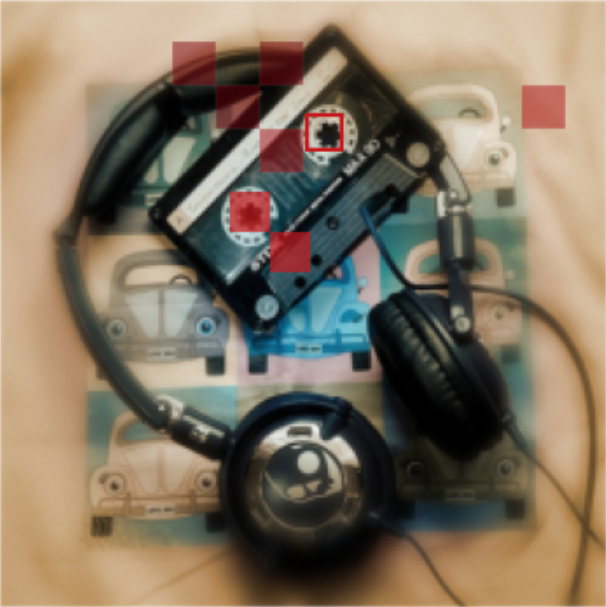}}
  \end{minipage}
  \hfill
   \begin{minipage}[b]{0.038\linewidth}
    \centering
    \vspace*{0.01cm}
    {\includegraphics[width=\linewidth]{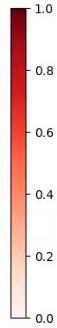}}
  \end{minipage}
  \hfill
  \caption{
The first column, Figure (a) represents the Mask Matrix.
The second column, Figure (b) illustrates the Attention Weights for clean samples.
The third column, Figure (c) depicts the Attention Weights for adversarial examples.
The fourth column, Figure (d) showcases the Attention Weights for denoised samples (by our AMRM-Pure$_{\text{MAE}}$). 
  Patches with a deeper red color mean the elements with more attention. The data is sampled from the ImageNet dataset~\citep{deng2009imagenet}.}
  \label{amv}
    \vspace{-.15in}
\end{figure*}

\subsection{Preliminary of Masked Autoencoder (MAE)} \label{MAE_pre}
MAE ~\citep{he2022masked} is briefly introduced within the context of adversarial robustness in this subsection. A clean input sample \( x \), drawn from the dataset \( X \), is partitioned into \( n \) patch vectors of dimension \( d \), forming \( \bar{x} \in \mathbb{R}^{n \times d} \). The matrix \( \bar{x} \) can be randomly divided into \( m = (1-\rho)n \) masked patch vectors and \( (n-m) \) visible patch vectors, where \( \rho \) is the mask ratio.
MAE uses an encoder-decoder architecture. The encoder, \( f(\cdot) \), produces \(\mathbf{V}^{enc} \in \mathbb{R}^{m \times d_e}\), where \(\mathbf{V}^{enc} = f(x_1)\), and \(x_1\) is the visible portion of input \(x\). Here, \(d_e\) is the dimension of each patch feature in \(\mathbf{V}^{enc}\). The decoder, \(g(\cdot)\), maps \(\mathbf{V}^{enc}\) back to pixel space, producing \(\mathbf{V}^{dec} \in \mathbb{R}^{(n-m) \times d}\), i.e., \(g(\mathbf{V}^{enc}) = \mathbf{V}^{dec}\), which reconstructs masked patches \(x_2\). Reconstruction quality is measured using Mean Squared Error (MSE) loss as follows:
\begin{equation}
    \mathcal{L}_{\text{rec}}(x_1)=\frac{1}{N(n-m)}\sum_{i=1}^{N}||g(f(x_{1,i}))-x_{2,i}||^{2},
    \label{mae:loss}
\end{equation}
where $N$ represents the sample number in $X$.

The MAE structure consists of multiple self-attention layers, where attention captures semantic relationships between input patches. At the \(t\)-th layer, the input features are \(\mathbf{Z}^t \in \mathbb{R}^{n_t \times d_t}\), with \(n_t\) patches and \(d_t\)-dimensional patch features. Weight matrices \(\mathbf{W}_Q^t\), \(\mathbf{W}_K^t\), and \(\mathbf{W}_V^t\) generate the query \(\mathbf{Q}^t\), key \(\mathbf{K}^t\), and value \(\mathbf{V}^t\) matrices, all in \(\mathbb{R}^{n_t \times d_t}\).

The self-attention matrix \(\mathbf{A}^t\) is computed as:
\[
\mathbf{A}^t = \text{softmax}\left(\frac{\mathbf{Q}^t (\mathbf{K}^t)^T}{\sqrt{d_t}}\right),
\]
quantifying similarities between \(\mathbf{Q}^t\) and \(\mathbf{K}^t\). The \(j\)-th output patch feature \(e_j\) is a weighted sum of value vectors:
\[
e_j = \sum_{i=1}^n a_{ji}^t v_i^t, \quad a_{ji}^t = \frac{q_{ji}^t k_{ji}^t}{\sum_{o=1}^n q_{jo}^t},
\]
where \(a_{ji}^t\) indicates how much \(v_i^t\) contributes to \(e_j\).

\section{THEORETICAL ANALYSIS} \label{Sec:Theory}

In this section, we theoretically analyze how adversarial perturbations affect semantic relationships among patches.
We take MAE as a representative AMRM architecture due to its simplicity and typical design, and investigate the link between AMV variation and decoder reconstruction loss, both theoretically and empirically.
Given the structural similarity, we further extend this analysis to MaskDiT (Sec. \ref{maskdit}), and our results confirm that the theory also holds, demonstrating strong transferability.

\subsection{Adversarial Perturbation Induces Attention Matrix Variation in MAE}
\label{sec:senstively}

Given a clean sample \(x\) and its adversarial counterpart \(x_{adv}\), the attention matrices and input features at the \(t\)-th layer in MAE are \(\mathbf{A}^t\) and \(\mathbf{Z}^t\) for \(x\), and \(\mathbf{A}_{adv}^t\) and \(\mathbf{Z}_{adv}^t\) for \(x_{adv}\), respectively. To save the space, more defination can be founded in Appendix \ref{MAE_pre}. The attention matrix variation (AMV) at layer $t$ induced by adversarial perturbation is formally defined as $\mathbf{A}_{adv}^t - \mathbf{A}^t$, where \(\mathbf{W}_K^t, \mathbf{W}_Q^t \in \mathbb{R}^{n_t \times d_t}\) are the weight matrices at the \(t\)-th layer, \textcolor{black}{and \(N\) represents} the number of training samples. AMV indicates a shift in MAE’s focal points on the image, reflecting a change in the inter-patch semantic information being captured, as $\mathbf{A}^t_{adv}$ misaligns attention toward irrelevant regions and distorts the overall interpretative context (see Figure~\ref{amv}).
To quantify how such noise affects the attention matrix, we derive Theorem~3.1 to formally express the impact of perturbation $\delta_t$ on AMV, revealing the inherent AMV sensitivity of the MAE.

\begin{theorem}	
\label{thm1}
      Let $\delta_t = \textbf{Z}_{adv}^t - \textbf{Z}^t$ denotes the latent feature shift caused by the adversarial perturbation at layer $t$ in MAE. With a set $\{\omega_i\}_{i=0}^{k}$ and kernel coefficient $\omega_i \in \mathcal{N}\left(0, \mathbf{I}_{d}\right)$, it holds that:
      {\scriptsize 
      \begin{align*}
    ||\mathbf{A}_{adv}^{t}-\mathbf{A}^{t}||_2 &\ge
    \gamma \left\| \left[\left( \mathbf{Y} - \mathbf{B} \mathbf{Q}^t \right)^\top \mathbf{W}^t_Q + \left( \mathbf{Y} - \mathbf{B} \mathbf{K}^t \right)^\top \mathbf{W}^t_K \right]\delta_t \right\|_2,
\end{align*}

\begin{align*}
\mathbf{B} &= \sum_{i=0}^{k} \exp\!\left( \omega_i^\top (\mathbf{Q}^t + \mathbf{K}^t) \right), &
\mathbf{Y} &= \sum_{i=0}^{k} \exp\!\left( \omega_i^\top (\mathbf{Q}^t + \mathbf{K}^t) \right)\omega_i, \\
\gamma &= \frac{\exp\!\left( -\tfrac{\|\mathbf{Q}^t\|^2 + \|\mathbf{K}^t\|^2}{2} \right)}{m}.
\end{align*}

      } 
\end{theorem}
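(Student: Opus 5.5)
We will prove the bound by linearizing the attention map around the clean features $\mathbf{Z}^t$. The nonlinearity is handled with the Gaussian random-feature representation of the softmax (exponential) kernel, as in Performer-type linear attention. The starting identity is that for any $q,k$,
\[
\exp(q^\top k) = \exp\!\left(-\tfrac{\|q\|^2+\|k\|^2}{2}\right)\mathbb{E}_{\omega\sim\mathcal{N}(0,\mathbf{I}_d)}\big[\exp(\omega^\top(q+k))\big].
\]
Replacing the expectation by the empirical average over $\{\omega_i\}_{i=0}^{k}$ gives a positive random-feature form of the unnormalized attention scores. After row normalization, the Gaussian prefactor and the normalization over the $m$ visible patches combine into the constant $\gamma$. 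We therefore write $\mathbf{A}^t \approx \gamma\,\mathbf{B}(\mathbf{Q}^t,\mathbf{K}^t)$, with $\mathbf{B}=\sum_i \exp(\omega_i^\top(\mathbf{Q}^t+\mathbf{K}^t))$ as in the statement. We do the same for the adversarial input, where $\mathbf{Q}^t_{adv}=\mathbf{W}_Q^t\mathbf{Z}^t_{adv}$ and $\mathbf{K}^t_{adv}=\mathbf{W}_K^t\mathbf{Z}^t_{adv}$.

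Next we expand the difference $\mathbf{A}^t_{adv}-\mathbf{A}^t$ to first order in $\delta_t$. Since $\mathbf{Q}^t_{adv}-\mathbf{Q}^t=\mathbf{W}^t_Q\delta_t$ and $\mathbf{K}^t_{adv}-\mathbf{K}^t=\mathbf{W}^t_K\delta_t$, there are two contributions, each treated by the chain rule.

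\emph{Exponential term.} Differentiating $\exp(\omega_i^\top(\mathbf{Q}+\mathbf{K}))$ gives $\exp(\cdot)\,\omega_i$. Summed over $i$, this produces $\mathbf{Y}$.

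\emph{Gaussian prefactor.} Differentiating $\exp(-(\|\mathbf{Q}\|^2+\|\mathbf{K}\|^2)/2)$ produces $-\mathbf{Q}^t$ (resp. $-\mathbf{K}^t$) times the prefactor. Multiplied by $\mathbf{B}$, this gives the $-\mathbf{B}\mathbf{Q}^t$ and $-\mathbf{B}\mathbf{K}^t$ terms.

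Collecting terms, the Jacobian applied to $\delta_t$ is
\[
\gamma\big[(\mathbf{Y}-\mathbf{B}\mathbf{Q}^t)^\top\mathbf{W}^t_Q+(\mathbf{Y}-\mathbf{B}\mathbf{K}^t)^\top\mathbf{W}^t_K\big]\delta_t,
\]
which is exactly the quantity inside the norm in Theorem~\ref{thm1}.

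To turn this first-order identity into an inequality, we would use a mean-value or convexity argument rather than a bare Taylor expansion. The map $u\mapsto \exp(\omega^\top u)$ is convex, and so is $u\mapsto \exp(-\|u\|^2/2)$ on the relevant region. The convex part therefore satisfies $F(u+h)-F(u)\ge \nabla F(u)^\top h$ entrywise. Taking norms then gives $\|\mathbf{A}^t_{adv}-\mathbf{A}^t\|_2\ge\|\nabla F\,\delta_t\|_2$, provided the gradient-direction term dominates. If that fails, we fall back to an explicit remainder estimate: the remainder is nonnegative in the direction of the gradient, so it can only increase the norm.

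The main obstacle is making this last step rigorous. The softmax normalization and the Gaussian prefactor are not globally convex, and the random-feature average only equals the kernel in expectation. We would first try to absorb both issues into the definition of $\gamma$, using $1/m$ as a lower bound on the row normalizer after the prefactor is factored out. We would then state the inequality for the random-feature approximation of the attention, holding for the sampled set $\{\omega_i\}$, with the convexity lower bound applied to the positive exponential features only. We expect this step to require an implicit assumption, namely a small perturbation or first-order regime, and we would state that assumption explicitly.
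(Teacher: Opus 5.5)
Your route is the paper's route. Its appendix proof also replaces the softmax by Performer random features and expands the Gaussian prefactor and the exponential sum to first order, which is where $\mathbf{Y}$, $-\mathbf{B}\mathbf{Q}^t$ and $-\mathbf{B}\mathbf{K}^t$ come from. It then substitutes $\Delta\mathbf{Q}^t=\mathbf{W}^t_Q\delta_t$ and $\Delta\mathbf{K}^t=\mathbf{W}^t_K\delta_t$. The genuine gap is the step you flag yourself, upgrading this first-order identity to ``$\ge$'', and the repairs you sketch do not work.

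(i) The map $u\mapsto e^{-\|u\|^2/2}$ is not convex on any open set once $d\ge 2$. Its Hessian $(uu^\top-\mathbf{I})e^{-\|u\|^2/2}$ has eigenvalue $-e^{-\|u\|^2/2}$ on $u^\perp$, so there is no tangent-line lower bound for the prefactor. What $e^x\ge 1+x$ actually gives carries the extra terms $-\tfrac{1}{2}\|\Delta\mathbf{Q}^t\|^2-\tfrac{1}{2}\|\Delta\mathbf{K}^t\|^2$.

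(ii) An entrywise bound $a\ge b$ does not give $\|a\|_2\ge\|b\|_2$ when $b$ has negative entries (take $a=0$, $b=-1$). Likewise, write the change as $\ell+r$ with $\ell$ the linear term. An entrywise-nonnegative remainder can still shrink the norm ($\ell=-1$, $r=1$); you would need $\langle r,\ell\rangle\ge 0$. Since $\ell$ is odd in $\delta_t$, one of $\pm\delta_t$ always gives $\ell$ negative entries, where convexity buys nothing.

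(iii) A small-perturbation assumption cannot rescue an exact ``$\ge$''. For $k=0$, each entry of the surrogate is, up to the factor $e^{\|\omega_0\|^2}/m$, the Gaussian bump $e^{-\|\mathbf{Q}^t-\omega_0\|^2/2-\|\mathbf{K}^t-\omega_0\|^2/2}$, whose Hessian is negative definite near its peak. Taking $\delta_t$ along the pulled-back gradient there makes $\ell>0$ and the second-order term negative. The true change is then strictly smaller than $\ell$ for every sufficiently small step.

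(iv) The softmax normalizer moves with $\delta_t$ and cannot be absorbed into the constant $1/m$. A perturbation that shifts every logit of a row by the same amount leaves that row of $\mathbf{A}^t$ unchanged, while the frozen-normalizer expression changes at first order.

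The paper does not close this gap either. Its prefactor inequality omits those quadratic terms; it fails, e.g., for nonzero $\Delta\mathbf{Q}^t\perp\mathbf{Q}^t$ with $\Delta\mathbf{K}^t=0$. It also drops the second-order cross term of the product, never accounts for the normalizer, and its last line passes from ``$\approx$'' to ``$\ge$'' by assertion.

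What the shared argument actually supports is a first-order statement about the unnormalized random-feature surrogate $F(\mathbf{Q},\mathbf{K})=\tfrac{1}{m}e^{-(\|\mathbf{Q}\|^2+\|\mathbf{K}\|^2)/2}\sum_i e^{\omega_i^\top(\mathbf{Q}+\mathbf{K})}$. Its partial gradients at the clean point are exactly $\gamma(\mathbf{Y}-\mathbf{B}\mathbf{Q}^t)$ and $\gamma(\mathbf{Y}-\mathbf{B}\mathbf{K}^t)$. Taylor's theorem with integral remainder plus the reverse triangle inequality then gives rigorously
\[
\|F(\mathbf{Q}^t_{adv},\mathbf{K}^t_{adv})-F(\mathbf{Q}^t,\mathbf{K}^t)\|_2\ge\gamma\bigl\|[(\mathbf{Y}-\mathbf{B}\mathbf{Q}^t)^\top\mathbf{W}^t_Q+(\mathbf{Y}-\mathbf{B}\mathbf{K}^t)^\top\mathbf{W}^t_K]\delta_t\bigr\|_2-\tfrac{M}{2}\|\delta_t\|_2^2,
\]
with $M$ a bound on the second derivative of $\delta\mapsto F(\mathbf{Q}^t+\mathbf{W}^t_Q\delta,\mathbf{K}^t+\mathbf{W}^t_K\delta)$ along the segment. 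That inequality, with the remainder kept and the surrogate in place of the true $\mathbf{A}^t$, is the statement you can actually prove. By (iii), no extra hypothesis lets you delete the remainder.
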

\begin{proof}
The proof can be seen in Appendix \ref{proof_thm1}.
\end{proof}

Theorem~\ref{thm1} suggests that even minor shifts in the latent features ($\delta_t$) may have the ability to cause disproportionately large changes in AMV, especially due to the high dimensionality $d_t$ of internal projection matrices. Notably, in MAE/AMRM, where $d_t \gg \text{input dimension}$, the sensitivity is further amplified. This analysis reveals the intrinsic vulnerability of MAE’s attention mechanism under adversarial conditions, offering a theoretical foundation for the empirical trends shown in Figure \ref{trens_forard} (a-c) of Section \ref{sec:emp}.

\subsection{Impact of Decoder Attention Shifts on Adversarial Reconstruction in MAE}

To deepen the understanding of how attention pattern distortions affect output quality in MAE, we present a theoretical lower bound on the reconstruction loss under adversarial conditions. This analysis extends the discussion of AMV sensitivity in Section 3.1 and reveals how attention shifts in the decoder layer affect reconstruction loss.
First, let $\mathcal{L}_{\text{rec}}^{adv}$ denote the average reconstruction loss for adversarial examples, corresponding to the reconstruction loss $\mathcal{L}_{\text{rec}}$ for clean samples as Eq.~(\ref{mae:loss}) in Appendix \ref{MAE_pre}.

\begin{theorem}	
\label{thm2}
Let \(\mathbf{A}^{\mathrm{dec}}_{i,t}\) denote the attention matrix at the \(t\)-th layer of the MAE decoder for the \(i\)-th sample in the dataset, and let \(\mathbf{A}^{\mathrm{dec}}_{\mathrm{adv},i,t}\) denote the corresponding attention matrix for the adversarial examples. With ratio constants \(C_A\) and \(H\), it holds that:
{\footnotesize 
\begin{align*}
    \mathcal{L}^{\mathrm{adv}}_{\mathrm{\text{rec}}} \geq \frac{1}{2} \mathcal{L}_{\mathrm{\text{rec}}} + \frac{1}{2NT} \sum_{t=1}^{T} \sum_{i=1}^{N} \left[ H C_{A} \left\| \mathbf{A}^{\mathrm{dec}}_{\mathrm{adv},i,t} - \mathbf{A}^{\mathrm{dec}}_{i,t} \right\|^2 - c_{rec} \right].
\end{align*}
} 
$c_{rec}$ is the reconstruction bias, which symbolizes the disparity between the output of MAE and the original, unmasked image. The definition of $c_{rec}$ can be found in Appendix \ref{exp:assumption}. 
\end{theorem}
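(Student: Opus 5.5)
The bound will follow from two ingredients. The first is the elementary inequality $\|a-b\|^2 \ge \tfrac12\|a-c\|^2 - \|b-c\|^2$, a consequence of $\|a-c\|^2 \le 2\|a-b\|^2 + 2\|b-c\|^2$. The second is an assumption, which I expect is what Appendix \ref{exp:assumption} supplies, tying the discrepancy between clean and adversarial decoder outputs to the decoder attention variation through the constants $H$ and $C_A$. Throughout, write $\hat x_{2,i} = g(f(x_{1,i}))$ for the clean reconstruction of the $i$-th sample and $\hat x^{adv}_{2,i} = g(f(x^{adv}_{1,i}))$ for the adversarial one. Then
\[
\mathcal{L}^{\mathrm{adv}}_{\mathrm{rec}}=\frac{1}{N(n-m)}\sum_{i=1}^N\|\hat x^{adv}_{2,i}-x_{2,i}\|^2 .
\]

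\textbf{Step 1 (triangle split).} For each sample I will apply $\|\hat x^{adv}_{2,i}-x_{2,i}\|^2 \ge \tfrac12\|\hat x^{adv}_{2,i}-\hat x_{2,i}\|^2 - \|\hat x_{2,i}-x_{2,i}\|^2$. Splitting this way would give the wrong coefficient on the clean loss, so I will instead decompose $\|\hat x^{adv}_{2,i}-x_{2,i}\|^2$ directly as
\[
\|\hat x^{adv}_{2,i}-\hat x_{2,i}\|^2+\|\hat x_{2,i}-x_{2,i}\|^2+2\langle \hat x^{adv}_{2,i}-\hat x_{2,i},\,\hat x_{2,i}-x_{2,i}\rangle .
\]
I will then lower bound the cross term so that half of each squared term survives. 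The reconstruction bias $c_{rec}$, which measures how far the MAE output is from the unmasked image, is exactly what is needed to absorb the residual cross-term and bias contributions. Averaging over $i$ and dividing by $n-m$ then gives
\[
\mathcal{L}^{\mathrm{adv}}_{\mathrm{rec}}\ge \tfrac12\mathcal{L}_{\mathrm{rec}}+\tfrac{1}{2N}\sum_i\Big(\tfrac{1}{n-m}\|\hat x^{adv}_{2,i}-\hat x_{2,i}\|^2 - c_{rec}\Big).
\]
The normalisation by $n-m$ is absorbed into $H$.

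\textbf{Step 2 (output gap to attention gap, layerwise).} Each decoder layer $t$ outputs $\mathbf{A}^{\mathrm{dec}}_{i,t}\mathbf{V}_{i,t}$. The output difference decomposes as
\[
(\mathbf{A}_{adv}-\mathbf{A})\mathbf{V}_{adv}+\mathbf{A}(\mathbf{V}_{adv}-\mathbf{V}).
\]
I will lower bound its squared norm by $C_A\|\mathbf{A}^{\mathrm{dec}}_{\mathrm{adv},i,t}-\mathbf{A}^{\mathrm{dec}}_{i,t}\|^2$. Here $C_A$ plays the role of a smallest singular value of the value matrix, a ratio constant that absorbs the second term. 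The remaining layers and the output projection propagate this gap to the pixel output with a ratio constant $H$, playing the role of a lower Lipschitz or co-coercivity constant of the decoder's tail. Since this holds for every $t$, I will average the $T$ layerwise bounds, which produces the factor $\frac{1}{T}\sum_t$. The constant $c_{rec}$ is independent of $t$, so it can be pulled inside the double sum to match the stated form.

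\textbf{Main obstacle.} Step 2 is the crux: a lower bound on output change in terms of attention change is not true for general networks. I expect it to rest on the assumptions in Appendix \ref{exp:assumption}, namely the existence of the ratio constants $H$ and $C_A$. My first attempt will be to state these explicitly as empirical ratios, $\|\Delta \mathrm{out}\|^2/\|\Delta \mathbf{A}\|^2 \ge HC_A$, and to justify them using the empirical trends in Section \ref{sec:emp}, rather than to derive them from the architecture.
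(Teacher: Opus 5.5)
Your Step 1 contains a genuine gap: the claim that $c_{rec}$ absorbs the cross term is false under the paper's definition of $c_{rec}$. Write $u_i=\hat x^{adv}_{2,i}-\hat x_{2,i}$ and $v_i=\hat x_{2,i}-x_{2,i}$. Your displayed bound needs, per sample, $\frac{1}{n-m}\|u_i+v_i\|^2\ge\frac{1}{2(n-m)}(\|u_i\|^2+\|v_i\|^2)-\frac12 c_{rec}$. However,
\[
\|u_i+v_i\|^2-\tfrac12\|u_i\|^2-\tfrac12\|v_i\|^2=\tfrac12\|u_i+2v_i\|^2-\tfrac32\|v_i\|^2 ,
\]
and this equals $-\tfrac32\|v_i\|^2$ at $u_i=-2v_i$. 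So the cross term can only be absorbed if $\frac{1}{n-m}\|\hat x_{2,i}-x_{2,i}\|^2\le c_{rec}/3$, that is, if $c_{rec}$ dominates the \emph{trained} model's clean reconstruction error, or if you assume $\langle u_i,v_i\rangle\ge 0$. The paper gives neither. Assumption 1 bounds $\|g(f_g(a))-a\|_2$ for a pseudo-inverse encoder $f_g$, not $\|g(f(x_{1,i}))-x_{2,i}\|$. Under the hypothesis you would need, the $\tfrac12\mathcal{L}_{rec}$ term is in effect paid for by $c_{rec}$.

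The paper uses $c_{rec}$ differently, and following it would not close your gap. It adds $\|g(f_g(x_{2,i}))-x_{2,i}\|^2-c_{rec}\le 0$ and applies $\|a\|^2+\|b\|^2\ge\tfrac12\|a+b\|^2$. It then regroups $a+b$ into the clean residual, the output gap $g(f(x^{adv}_{1,i}))-g(f(x_{1,i}))$, and $g(f_g(x_{2,i}))-x^{adv}_{2,i}$, and drops the cross terms of this three-way split without justification. Note also that the paper's $\mathcal{L}^{adv}_{rec}$ uses the adversarial target $x^{adv}_{2,i}=x_{2,i}+\delta$, the only one available during purification. This produces a $-\|\delta\|^2$ term the paper must cancel later. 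By using the clean target $x_{2,i}$ you are bounding a different quantity.

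Step 2, as you acknowledge, is postulated rather than derived. The postulate $\|\Delta\mathrm{out}\|^2\ge HC_A\|\mathbf{A}^{\mathrm{dec}}_{\mathrm{adv},i,t}-\mathbf{A}^{\mathrm{dec}}_{i,t}\|^2$ is essentially the content of the theorem. Your heuristic, that a smallest singular value of the values lets $C_A$ ``absorb'' $\mathbf{A}(\mathbf{V}_{adv}-\mathbf{V})$, fails for the same cross-term reason: that term can cancel $(\mathbf{A}_{adv}-\mathbf{A})\mathbf{V}_{adv}$.

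The paper's route is structured differently:
\begin{itemize}
\item It uses Cao et al.'s interpolation view $g(f(x))=\mathbf{A}^{dec}\mathbf{V}^{enco}$ with a single aggregated decoder attention.
\item It defines $C_A$ by $\|(\mathbf{A}^{dec}_{adv_i}-\mathbf{A}^{dec}_i)\mathbf{V}^{enco}_{adv_i}\|^2\ge C_A\|\mathbf{A}^{dec}_{adv_i}-\mathbf{A}^{dec}_i\|^2$.
\item It cancels the $\delta$ term with the claim $\|\mathbf{A}^{dec}_i(\mathbf{V}^{enco}_i-\mathbf{V}^{enco}_{adv_i})\|^2\ge\|\delta\|^2$.
\item It obtains $H=L^2T^2$ and the factor $\frac1T\sum_t$ from the aggregation map $\Phi$ of Assumption 2, not by averaging $T$ layerwise bounds pushed through a decoder tail.
\end{itemize}
Assumption 2 is an \emph{upper} Lipschitz bound, whereas the final step needs $\|\mathbf{A}^{dec}_{adv,i}-\mathbf{A}^{dec}_i\|^2$ bounded \emph{below} by $\frac HT\sum_t\|\mathbf{A}^{dec}_{adv,i,t}-\mathbf{A}^{dec}_{i,t}\|^2$; the paper writes $\approx$ here. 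Your co-Lipschitz reading of $H$ is therefore the direction the argument actually requires. As it stands, though, your argument yields the stated inequality only after adding two hypotheses beyond the paper's: a per-sample bound of the clean reconstruction error by $c_{rec}/3$, and the Step 2 ratio inequality itself.
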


\begin{proof}
The proof can be seen in the Appendix \ref{proof:thm2}.
\end{proof}

It shows that the lower bound of the reconstruction loss for adversarial data can be decomposed into three components: the average reconstruction loss for clean data $\mathcal{L}_{\text{rec}}$, the average attention matrix variation for the MAE decoder at each layer $\frac{1}{2NT}\sum_{t=1}^{T}\sum_{i=1}^{N}||(A^{dec}_{adv_{i,t}}-A^{dec}_{i,t})||^2$, and constant terms. 
Theorem~3.2 demonstrates that adversarial distortions in AMV of decoder lead to increases in reconstruction loss $\mathcal{L}_{\text{rec}}$, and Figure~\ref{trens_forard} provides further evidence of this phenomenon. Notably, the reconstruction loss is shown to be consistent with the degree of AMV, confirming a strong correlation between inter-patch semantic relationships and output degradation.
This theorem builds a theoretical foundation of robust MAE-based purification.

\subsection{Empirical Validation}
\label{sec:emp}


This section empirically validates our theoretical analysis.
We first assess the impact of MAE reconstruction on adversarial perturbations, then analyze the correlation between attention matrix variation and reconstruction loss.
Finally, we show that adversarial perturbations alter semantic relationships within MAE patches and confirm the sensitivity of AMV to such perturbations.
\begin{figure*}
    \centering
    \includegraphics[width=1.0\linewidth]{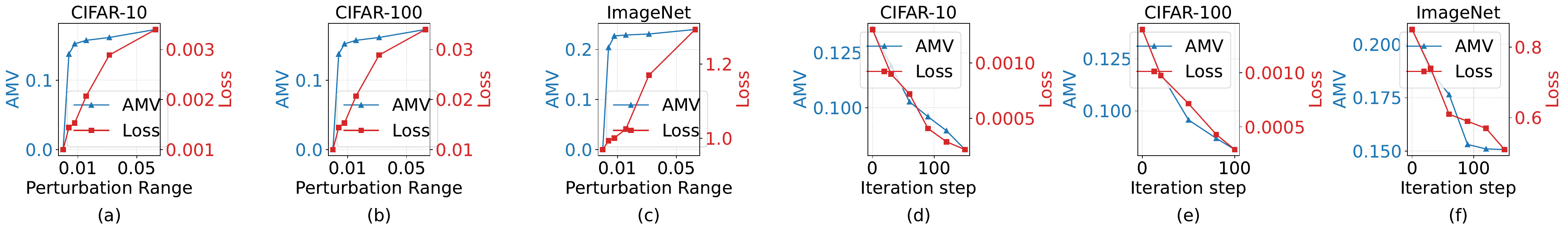} 
    \caption{Trends of MAE reconstruction loss and attention matrix variation under AutoAttack and during purification across multiple datasets. (a–c): Under AutoAttack on CIFAR-10, CIFAR-100, and ImageNet with different attack budget. (d–f): During the purification process with AMRM-Pure$_{\text{MAE}}$ on CIFAR-10, CIFAR-100, and ImageNet. }
    \label{trens_forard}
    \vspace{-0.15cm}
\end{figure*}

\textbf{Perturbation Leads to Degraded Reconstruction Quality.} To empirically investigate how perturbation influences reconstruction, an image is randomly selected from the SVHN dataset. We then compare the reconstruction results of the clean data and the adversarial example as shown in Figure~\ref{effect}. 
The adversarial example generated by the AutoAttack procedure~\citep{croce2020reliable} in Figure \ref{adv9} looks almost identical to its clean counterpart in Figure~\ref{9o} visually.  However, its reconstruction (Figure \ref{9advrec}) is significantly different from that of the clean sample (Figure \ref{9c}).
Likewise, its reconstruction result also shows significant differences with a reconstruction of its clean sample (Figure~\ref{9c}). 
These phenomena emphasize the substantial impact of adversarial perturbations on the overall outcome of the reconstruction.

\textbf{Visualization of Attention Matrix Variation.}
To check how the semantic relationship between different patches changes under perturbation, we show the degree of importance of visible patches for reconstructing the masked patch in Figure~\ref{amv}.
We select an image from ImageNet~\citep{deng2009imagenet} and randomly generate a mask matrix. A specific masked patch is considered as the target patch (marked red in a box) in Figure~\ref{m_1}.
%
%
{Then, the visible patches are fed into the MAE to perform the reconstruction. The degree of importance of visible patches for reconstructing the target patch, which is determined by the corresponding values of the attention matrix, is illustrated in Figure~\ref{o_1} (e.g., the last layer attention of the MAE decoder).}
Meanwhile, we also display the corresponding visualizations of the adversarial example and the denoised example for the same mask matrix and target patch position.

%
In generating attention weights for the target patch (red box), the approach involves using the patch itself as the query vector and the remaining patches as key vectors. Through self-attention, the attention weights are determined. Higher weights indicate a stronger semantic similarity between the target patch and the patch itself. 
Figures~\ref{o_1}, \ref{adv_1}, and \ref{pur_1} show the importance degree of visible patches in the clean sample, adversarial example, and denoised example, respectively. Patches that are closer to red are more significant.
As demonstrated in Figure~\ref{o_1}, when a hole in the tape is used as the target patch, the visible patch of another similar hole and the surrounding patches serve as the most important basis for reconstructing the target patch.
%
However, as illustrated in Figure~\ref{adv_1}, some distant and irrelevant visible patches with large color and shape differences are taken or focused to reconstruct the adversarial perturbed target patch. This indicates that the adversarial perturbation leads MAE to erroneous attention. More examples can be seen in Appendix \ref{visualization_more}.

\textbf{Analysis of AMV Sensitivity and Consistency with Reconstruction Loss.}
To validate Theorem \ref{thm2} and the effect of AMV on the reconstruction quality, we visualize the changes of reconstruction loss and AMV values with respect to the intensity of adversarial noise in Figure \ref{trens_forard}.
We evaluate reconstruction loss and average AMV $\frac{1}{N}\sum_{i=1}^N||\mathbf{A}^{dec}_{adv,i}-\mathbf{A}^{dec}_i||_2$ on 150 images from CIFAR-10, CIFAR-100, and ImageNet with a 0.5 mask ratio, using AutoAttack to generate adversarial examples.
As shown in Figure~\ref{trens_forard} (a–c), AMV is highly sensitive, rising sharply even under small perturbations (e.g., $\delta=0.01$), and then steadily increasing until reaching its softmax-bounded upper limit.
Reconstruction loss and AMV follow the same trend, aligning with our theoretical analysis in Theorem \ref{thm2} of their relationship and AMV’s sensitivity to perturbations. 


\subsection{A General AMRM Framework}
\label{maskdit}

The aforementioned analysis based on simple MAE reveals \textit{a key vulnerability in attentive masked image modeling: inter-patch semantic information (captured by AMV) is sensitive to noise.} 
To show the universality of this observation, we further illustrate this sensitivity of AMV on a more powerful generative model, such as MaskDiT~\citep{zheng2024fast}, which is a powerful diffusion-based AMRM that preserves its masked autoencoder architecture while integrating forward and reverse diffusion steps for high-quality image generation. 
Given their architectural alignment, the MaskDiT indeed agrees with the analysis presented in Theorem \ref{thm1} in Sec \ref{sec:senstively}.
As shown in Figure~\ref{fig:combined_purification_maskdit} of Appendix \ref{maskdit_val}, empirical comparisons show that MaskDiT exhibits consistent patterns with MAE under adversarial settings.
Specifically, the AMV of MaskDiT remains highly sensitive to noise, so that its reconstruction loss and AMV consistently maintain alignment. As such, MaskDiT continues to satisfy Theorem \ref{thm1} and \ref{thm2}.
Motivated by this phenomenon, we propose AMRM-Pure$_{\text{MaskDiT}}$, which leverages MaskDiT's generative capacity while retaining MAE's inter-patch semantic sensitivity, further improving purification performance by optimizing its reconstruction loss (Eq.~(\ref{reconst_update})).

\section{METHOD} \label{Sec:method}

\subsection{Adversarial Purification with AMRM}


As discussed in Section~\ref{Sec:Theory}, the attention mechanism of AMRM  is highly sensitive to adversarial perturbations, which distort inter-patch semantic relationships and degrade reconstruction quality. Based on this sensitivity, we propose a purification scheme, AMRM-Pure, which formulates denoising as an optimization problem that minimizes semantic variations.

We denote the clean data as $x$, and the adversarial example as $x_{adv}$. 
In the context of denoising, the objective is to induce a modification $\Delta$ on $x_{adv}$ such that the attention matrix of the denoised examples, $atten(x_{adv}+\Delta)$, closely aligns with the attention matrix of clean data samples $atten(x)$. 
Therefore, the learning objective of denoising can be formed as an Attention Matrix Variation Minimization problem, which is denoted as:
\begin{equation}
\begin{aligned}
    &\min_{\Delta} \mathcal{L}(\Delta) = ||\text{atten}(x_{adv}+\Delta) - \text{atten}(x)||_2,\\     
    &\hspace{0.2cm} \text{s.t.} \hspace{0.2cm} ||\Delta||_{\infty} \leq C_e,
\end{aligned} 
\label{eq:1}
\end{equation}
where $C_e$ is a small constant.

In addressing the AMV Minimization problem, the denoising process strives to mitigate adversarial perturbations on $x_{\text{adv}}$. Its goal is to achieve a consistent attention matrix between the denoised images and the clean images of MAE. This alignment ensures that the patch relationship of the denoised image closely approximates that of clean samples, thereby minimizing the impact of adversarial perturbations in the denoising outcome.


Since clean attention $atten(x)$ is unavailable during inference, directly minimizing AMV Eq.~(\ref{eq:1}) is intractable. 
Instead, inspired by Theorem~\ref{thm2}, we minimize the AMRM reconstruction loss as a tractable surrogate. Reconstruction loss not only enables efficient optimization without requiring clean attention (tractability), but also can reduce AMV caused by perturbations (owing to the consistent trend presented in Theorem~\ref{thm2}), thereby aligning with the inter-patch semantic structure of the clean input.
As such, our approach instead minimizes the AMRM reconstruction loss for adversarial purification. 
\begin{align}
    &min_{\Delta} \mathcal{L}_{\text{rec}}(x_{adv}+\Delta) \iff min_{\Delta}\mathcal{L}(\Delta), 
    \nonumber\\
 &\quad\qquad s.t.\quad ||\Delta||_{\infty} \leq C_e.
\label{reconst_update}
\end{align}
To address this problem, we employ the standard Projected Gradient Descent (PGD) method~\citep{madry2018towards}. 
In this approach, the modifications are iteratively added to adversarial examples, and the total number of iterations is denoted as $S$. At the $s$-th iteration, the denoising process is denoted as:
\begin{equation}
\begin{aligned}
x_{adv}^{s} &= Clip(x_{adv}^{s-1} - \lambda \cdot \Delta_{s},\eta),\\
\Delta_s &= \text{sign} (\nabla_{x}  L_{\text{rec}}(x_{adv}^{s-1})).    
\end{aligned} 
\label{results}
\end{equation}

\begin{table}[ht]
\centering
\setlength{\tabcolsep}{3pt} 
\renewcommand{\arraystretch}{0.9}
\tiny
\caption{Clean and robust accuracy (\%) on CIFAR-10 obtained by different purification methods. WideResNet is commonly abbreviated as WRN.}
\begin{tabular}{c|c|ccc}
\toprule
\multirow{2}{*}{Method}          & \multirow{2}{*}{Classifier} & \multirow{2}{*}{Std Acc} & \multicolumn{2}{c}{Robust Acc}   \\ \cline{4-5} 
                                 &                               &                          & $\ell_{\infty}$ & $\ell_{2}$     \\ \midrule
Shi et al.~\citep{shi2021online}             & WRN-28-10             & 91.89                    & 4.56            & 7.25           \\
Yoon et al.~\citep{yoon2021adversarial}       & WRN-70-16             & 87.93                    & 37.65           & 57.81          \\
Zhang et al.~\citep{zhang2023enhancing}       & WRN-70-16             & \textbf{93.16}           & 22.07           & 35.74          \\
Diffpure \citep{nie2022diffusion}            & WRN-70-16             & 92.50                    & 42.20           & 60.80          \\
COUP \citep{zhang2024classifier}             & WRN-28-10             & 90.33                    & 41.72           & 57.25          \\
ADBM \citep{li2024adbm}                      & WRN-70-16             & 91.90                    & 47.70           & 63.30          \\
ADDT$_{w/ \text{Diffpure}}$ \citep{liu2025towards} & WRN-28-10     & 89.94                    & 55.76           & -              \\ \midrule
\cellcolor{greyL}AMRM-Pure$_{\text{MAE}}$       & \cellcolor{greyL}WRN-28-10 & \cellcolor{greyL}88.57  & \cellcolor{greyL}40.53  & \cellcolor{greyL}53.50  \\
\cellcolor{greyL}AMRM-Pure$_{\text{MaskDiT}}$    & \cellcolor{greyL}WRN-28-10 & \cellcolor{greyL}92.03  & \cellcolor{greyL}50.57  & \cellcolor{greyL}64.53 \\
\cellcolor{greyL}RAMRM-Pure$_{\text{MAE}}$      & \cellcolor{greyL}WRN-28-10 & \cellcolor{greyL}90.09  & \cellcolor{greyL}45.15  & \cellcolor{greyL}60.72  \\
\cellcolor{greyL}RAMRM-Pure$_{\text{MaskDiT}}$  & \cellcolor{greyL}WRN-28-10 & \cellcolor{greyL}93.11 & \cellcolor{greyL}\textbf{62.13}  & \cellcolor{greyL}\textbf{73.57}  \\
\bottomrule
\end{tabular}
\label{cifar_10_full}
\end{table}

Here $\mathcal{L}_{\text{rec}}$ signifies the AMRM reconstruction loss defined in Eq.~(\ref{mae:loss}), with $\lambda$ representing the step size and $\eta$ as the clipping threshold. The overall modification $\Delta$ is composed of individual iteration modification $\Delta_s$ for $s \in [1,  S]$. The purpose of $\Delta$ is to guide the attention distribution of adversarial examples $atten(x_{adv})$ towards the clean sample distribution $atten(x)$. The denoising algorithm and pipeline of our AMRM-Pure can be seen in Algorithm \ref{alg1} (Supplement \ref{sec:algo}) and the whole process is in Figure \ref{fig:pipeline}. 
For AMRM-Pure, when the mechanism is applied to the MAE framework, we denote it as AMRM-Pure$_{\text{MAE}}$; when applied to the MaskDiT framework, we denote it as AMRM-Pure$_{\text{MaskDiT}}$.
To empirically validate the theory of AMRM-Pure, we also plot the trends of the MAE reconstruction loss and AMV with increasing denoising iterations in Figure~\ref{trens_forard} (d-f) and Fig \ref{fig:combined_purification_maskdit} (b)  for MaskDiT.  We randomly selected 100 adversarial examples from each dataset, perturbed using AutoAttack. For CIFAR10 and CIFAR100, the perturbation magnitude is set to $\ell_{\infty}=\frac{8}{255}$, while for ImageNet, it is set to $\ell_{\infty}=\frac{4}{255}$. As observed,  both the AMV and loss exhibit a similar downward trend as the number of purification iterations increases. This supports the validity of our theory.

Furthermore, we provide strict convergence analysis within the Appendix \ref{Convergence}.





\subsection{Robust Purification Model}

As the study of AToP \citep{lin2024adversarial} shows, further fine-tuning a purification model using classification loss can enhance its robustness against both seen and unseen attacks.
Following this insight, we propose a two-stage fine-tuning method to develop Robust AMRM-Pure$_{\text{MAE}}$ (RAMRM-Pure$_{\text{MAE}}$) and Robust AMRM-Pure$_{\text{MaskDiT}}$ (RAMRM-Pure$_{\text{MaskDiT}}$) variants to enhance the semantic relationship-preserving capabilities. For more details about our method, refer to Appendix \ref{robust:purification}.

Figure~\ref{fig:purification_iterations}  (see Appendix \ref{robust:purification}) shows a quantitative analysis of enhanced semantic relationships using AMV as an evaluation metric.
We randomly select 100 images from the CIFAR-10 dataset and examined the AMV of AMRM-Pure$_{\text{MAE}}$ and RAMRM-Pure$_{\text{MAE}}$ under the AutoAttack with an attack budget of $\frac{8}{255}$ across different purification iterations.
Specifically, the initial AMV (without purification) of RAMRM-Pure$_{\text{MAE}}$ is higher than that of AMRM-Pure$_{\text{MAE}}$. This suggests that Robust MAE is more sensitive to interpatch semantic information changes caused by adversarial attacks. However, with the progression of purification iterations, the AMV of RAMRM-Pure$_{\text{MAE}}$ decreases significantly, highlighting its superior capability in preserving semantic integrity compared to AMRM-Pure$_{\text{MAE}}$. 
Furthermore, as shown in Figure~\ref{fig:purification_iterations_md} (Appendix~\ref{robust:purification}), both AMRM-Pure$_{\text{MaskDiT}}$ and RAMRM-Pure$_{\text{MaskDiT}}$ exhibit similar trends.

\section{EXPERIMENT}
\label{sec:exp}

\subsection{Experimental Setting}


\noindent \textbf{Datasets and Classifier.}
In this section, we validate the robustness of our purification method, AMRM-Pure, on four benchmark datasets, including CIFAR-10 \citep{krizhevsky2009learning}, CIFAR-100 \citep{krizhevsky2009learning}, SVHN \citep{netzer2011reading}, and ImageNet \citep{deng2009imagenet}. We use WideResNet-28-10~\citep{zagoruyko2016wide} as the main classifier for CIFAR-10, CIFAR-100, and SVHN, and ResNet-101~\citep{he2016deep} as the main classifier for ImageNet.

\begin{table}[t]
\tiny
\centering
\caption{Clean and robust accuracy (\%) on CIFAR-100 obtained by different purification methods. The experiment are implemented on WideResNet-28-10.}
\label{cifar_100}
\begin{tabular}{c|ccc}
\midrule[1.2pt]
\multirow{2}{*}{Method} & \multirow{2}{*}{Std Acc} & \multicolumn{2}{c}{Robust Acc} \\ \cline{3-4} 
                        &                          & $\ell_{\infty}$   & $\ell_{2}$  \\ \midrule
Diffpure \citep{nie2022diffusion} & 45.23                    & 11.57           & 31.53         \\
COUP \citep{zhang2024classifier}  & 65.71                    & 15.22           & 34.28         \\
ADDT$_{w/ \text{DDPM}}$ \citep{liu2025towards}  & 66.02                    & 18.85           & 36.57         \\ \hline
\cellcolor{greyL}AMRM-Pure$_{\text{MAE}}$                         & \cellcolor{greyL}65.34                    & \cellcolor{greyL}14.28           & \cellcolor{greyL}29.29         \\ 
\cellcolor{greyL}AMRM-Pure$_{\text{MaskDiT}}$                     & \cellcolor{greyL}\textbf{70.03}           & \cellcolor{greyL}24.39           & \cellcolor{greyL}36.51         \\
\cellcolor{greyL}RAMRM-Pure$_{\text{MAE}}$                        & \cellcolor{greyL}66.28                    & \cellcolor{greyL}19.53           & \cellcolor{greyL}31.58         \\
\cellcolor{greyL}RAMRM-Pure$_{\text{MaskDiT}}$                     & \cellcolor{greyL}69.87                    & \cellcolor{greyL}\textbf{29.91}  & \cellcolor{greyL}\textbf{43.27} \\ \midrule[1.2pt]
\end{tabular}
\label{tab:cifar100}
\end{table}

\begin{table}[t]
\tiny
\centering
\caption{Clean and robust accuracy (\%) on SVHN obtained by different purification methods. The experiment are implemented on WideResNet-28-10.}
\label{svhn}
\begin{tabular}{c|ccc}
\midrule[1.2pt]
\multirow{2}{*}{Method} & \multirow{2}{*}{Std Acc} & \multicolumn{2}{c}{Robust Acc} \\ \cline{3-4} 
                        &                          & $\ell_{\infty}$   & $\ell_{2}$  \\ \midrule
Diffpure \citep{nie2022diffusion} & 93.90                    & 39.70           & 63.30            \\
COUP \citep{zhang2024classifier}  & 92.07                    & 41.62           & 63.97             \\
ADBM \citep{li2024adbm}           & 93.50                    & 47.90           & 65.70             \\ \hline
\cellcolor{greyL}AMRM-Pure$_{\text{MAE}}$                         & \cellcolor{greyL}94.54                    & \cellcolor{greyL}27.59           & \cellcolor{greyL}55.29           \\
\cellcolor{greyL}AMRM-Pure$_{\text{MaskDiT}}$                     & \cellcolor{greyL}94.91                    & \cellcolor{greyL}46.57           & \cellcolor{greyL}66.38          \\
\cellcolor{greyL}RAMRM-Pure$_{\text{MAE}}$                        & \cellcolor{greyL}94.47                    & \cellcolor{greyL}39.15           & \cellcolor{greyL}60.51            \\
\cellcolor{greyL}RAMRM-Pure$_{\text{MaskDiT}}$                     & \cellcolor{greyL}\textbf{95.39}                  & \cellcolor{greyL}\textbf{55.90}  & \cellcolor{greyL}\textbf{70.18}  \\ \midrule[1.2pt]
\end{tabular}
\label{tab:svhn}
\end{table}

\noindent \textbf{Adversarial Attacks.}
Several studies \citep{chen2023robust,li2024adbm,liu2025towards} show that the AutoAttack method~\citep{croce2020reliable} tends to overestimate the robustness of diffusion models, primarily due to the presence of gradient obfuscation, which prevents the attack from effectively exploiting the true vulnerabilities of the model. To address this issue, recent studies \citep{chen2023robust,li2024adbm,liu2025towards} have adopted the gradient checkpointing technique to efficiently extract complete gradients throughout the diffusion process. Furthermore, Li et al.~\citep{li2024adbm} have further demonstrated that, compared to AutoAttack, the combination of PGD + EOT is more effective in evaluating the adaptive defense mechanisms of diffusion models. In line with these studies \citep{li2024adbm}, we employ the PGD200 + EOT20 configuration with $\ell_{\infty}(\epsilon=\frac{8}{255})$ and $\ell_{2}(\epsilon=1)$, utilizing the exact gradient computation method described in \citep{li2024adbm,liu2025towards} to ensure a more robust evaluation of defense performance in our experiments. Our threat model is purifier followed by a classifier. To ensure fair comparison with adversarial training methods, we use AutoAttack with full gradient settings \citep{chen2023robust} as the evaluation protocol, ensuring objectivity and comparability.

\begin{table*}[t]
\centering
\tiny
\caption{Comparison with adversarial training under Autoattack ($\epsilon = \frac{8}{255}$)}
\label{tab:adv_training}
\begin{tabular}{c|c|c|cc|cc|cc}
\midrule[1.2pt]
\multirow{2}{*}{Method} & \multirow{2}{*}{Extra data} & \multirow{2}{*}{Architecture} & \multicolumn{2}{c|}{CIFAR10} & \multicolumn{2}{c|}{CIFAR100} & \multicolumn{2}{c}{SVHN} \\ \cline{4-9} 
                        &                             &                               & Std Acc & $\ell_{\infty}$ & Std Acc & $\ell_{\infty}$ & Std Acc & $\ell_{\infty}$ \\ \midrule
Rebuffi et al.\citep{rebuffi2021fixing}  & \ding{51} & WRN-28-10 & 87.33 & 60.73 & 62.41 & 32.06 & 94.34 & 60.90 \\
Pang et al. \citep{pang2022robustness} & \ding{51} & WRN-28-10 & 88.10 & 61.51 & 62.08 & 31.40 & -- & -- \\
Wang et al.\citep{wang2023better}     & \ding{51} & WRN-28-10 & 91.12 & 63.35 & 68.06 & 35.65 & 95.19 & 61.85 \\ \midrule
\cellcolor{greyL}AMRM-Pure$_{\text{MAE}}$      & \cellcolor{greyL}\ding{55} & \cellcolor{greyL}WRN-28-10 & \cellcolor{greyL}88.57 & \cellcolor{greyL}40.65 & \cellcolor{greyL}65.34 & \cellcolor{greyL}16.77 & \cellcolor{greyL}94.54 & \cellcolor{greyL}47.03 \\
\cellcolor{greyL}AMRM-Pure$_{\text{MaskDiT}}$  & \cellcolor{greyL}\ding{55} & \cellcolor{greyL}WRN-28-10 & \cellcolor{greyL}92.03 & \cellcolor{greyL}64.97 & \cellcolor{greyL}\textbf{70.03} & \cellcolor{greyL}33.51 & \cellcolor{greyL}94.91 & \cellcolor{greyL}64.15 \\
\cellcolor{greyL}RAMRM-Pure$_{\text{MAE}}$     & \cellcolor{greyL}\ding{55} & \cellcolor{greyL}WRN-28-10 & \cellcolor{greyL}90.09 & \cellcolor{greyL}47.15 & \cellcolor{greyL}66.28 & \cellcolor{greyL}22.15 & \cellcolor{greyL}94.47 & \cellcolor{greyL}50.03 \\
\cellcolor{greyL}RAMRM-Pure$_{\text{MaskDiT}}$ & \cellcolor{greyL}\ding{55} & \cellcolor{greyL}WRN-28-10 & \cellcolor{greyL}\textbf{93.11} & \cellcolor{greyL}\textbf{75.83} & \cellcolor{greyL} 69.87 & \cellcolor{greyL}\textbf{40.13} & \cellcolor{greyL}\textbf{95.39} & \cellcolor{greyL}\textbf{66.12} \\ \midrule [1.2pt]
\end{tabular}
\end{table*}

\noindent \textbf{Evaluation Metrics.}
To evaluate the model's performance, we employ two metrics for classification: robust accuracy (\textbf{Robust Acc}) and standard accuracy (\textbf{Std Acc}), which are tested respectively on adversarial examples and clean samples. Due to the high computational cost of testing models with multiple attacks, we follow previous work \citep{nie2022diffusion,lin2024adversarial,li2024adbm} and randomly select 512 test samples from each testing dataset.  All results are from 5 different random seeds and we use its average values. The standard deviations of the experiments will be provided separately in the Supplementary Materials (not in the Appendix).


\subsection{Compare with the State-of-the-art}

We compare our results with state-of-the-art methods across four datasets: CIFAR-10, CIFAR-100, SVHN, and ImageNet. \textbf{Due to the space limitations, we provide detailed comparisons under the PGD200 + EoT20 attack for CIFAR-10, CIFAR-100, and SVHN in main paper. More experimental results, including performance on ImageNet, transferability of the fine-tuned purification model, defense against extra attacks, ablation studies, sensitivity analysis, and evaluations between different classifiers, are also presented in Appendix \ref{supp:exp}.}

\textbf{CIFAR-10.} Table \ref{cifar_10_full} highlights the performance of various purification methods on the CIFAR-10 dataset in terms of Std Acc and Robust Acc. RAMRM-Pure$_{\text{MaskDiT}}$ excels with 62. 13\% robust accuracy in $\ell{\infty}$ attacks, 73. 57\% in $\ell_{2}$ attacks, and strong standard accuracy of 93. 11\%. In contrast, traditional methods like the previous method~\citep{shi2021online} perform poorly, achieving only 4.56\% under $\ell_{\infty}$ attacks. In general, our method greatly improves adversarial robustness, with additional WideResNet-70-16 results provided in Supplementary \ref{sec:classifier}.

\textbf{CIFAR100.} Table \ref{tab:cifar100} summarizes the performance of various purification methods on CIFAR-100. Our proposed method achieves strong performance, with AMRM-Pure$_{\text{MaskDiT}}$ reaching the highest standard accuracy of 70.03\% and a robust accuracy of 24.39\% under $\ell_{\infty}$ attacks and 36.51\% under $\ell_{2}$ attacks. RAMRM-Pure$_{\text{MaskDiT}}$ further improves robustness, achieving the best $\ell_{\infty}$ robust accuracy of 29.91\% and $\ell_{2}$ robust accuracy of 43.27\%, outperforming other methods like Diffpure and COUP.

\textbf{SVHN.} Table \ref{tab:svhn} shows that ADBM achieves strong robust accuracy (47.90\% under $\ell_{\infty}$ attacks and 65.70\% under $\ell_{2}$) attacks but is outperformed by RAMRM-Pure$_{\text{MaskDiT}}$, which achieves the best robust accuracy (55.90\% and 70.18\%) with comparable standard accuracy. While RAMRM-Pure$_{\text{MAE}}$ leads in standard accuracy (95.39\%), its robustness is lower. Overall, our MaskDiT-based methods better balance clean and robust performance than ADBM.


\subsection{Comparison with adversarial training}

As shown in Table~\ref{tab:adv_training}, our methods achieve competitive or superior robustness compared to adversarial training baselines, \textbf{without using any extra data}. In contrast, prior works~\citep{rebuffi2021fixing, pang2022robustness, wang2023better} rely on \textbf{1M additional training samples}. Notably, \textbf{RAMRM-Pure$_{\text{MaskDiT}}$ achieves 75.83\% robust accuracy on CIFAR-10}, outperforming all baselines and highlighting the effectiveness of our data-free approach.

\subsection{Inference Time Comparison}

\begin{table}[h] 
\tiny
\centering
\caption{Inference time (s) consumption comparison across different defense models on CIFAR-10 and ImageNet datasets.}
\label{consume}
\begin{tabular}{ccc}
\midrule[1.2pt]
Defense Model & CIFAR10        & ImageNet       \\ \midrule
Diffpure \citep{nie2022diffusion}     & 12.39          & 81.54           \\
\midrule
\cellcolor{greyL}AMRM-Pure$_{\text{MAE}}$      & \cellcolor{greyL}18.25          & \cellcolor{greyL}\textbf{31.51} \\
\cellcolor{greyL}AMRM-Pure$_{\text{MaskDiT}}$      & \cellcolor{greyL}32.85          & \cellcolor{greyL}79.27          \\
\cellcolor{greyL}RAMRM-Pure$_{\text{MAE}}$     & \cellcolor{greyL}\textbf{11.77} & \cellcolor{greyL}27.38              \\
\cellcolor{greyL}RAMRM-Pure$_{\text{MaskDiT}}$    & \cellcolor{greyL}29.73          & \cellcolor{greyL}62.52     \\ \midrule[1.2pt]
\end{tabular}
\vspace{0cm}
\label{time_cons}
\end{table}

Table \ref{time_cons} compares the inference time between different defense models in CIFAR-10 and ImageNet. We calculate the run-time for all methods with a batch size of 32, and our experiments are conducted on an A40 GPU. For CIFAR-10, \textbf{RAMRM-Pure$_{\text{MAE}}$} achieves the fastest time (\textbf{11.77s}), followed by Diffpure (\textbf{12.39s}) and AMRM-Pure$_{\text{MAE}}$ (\textbf{18.25s}). On ImageNet, \textbf{AMRM-Pure$_{\text{MAE}}$} is the most efficient (\textbf{31.51s}), significantly outperforming Diffpure (\textbf{81.54s}). The results highlight that the DiffPure model has the advantage of inference time for smaller datasets, while our proposed model performs better on the metrics of inference time on the ImageNet dataset.

\section{CONCLUSION}
This paper reveals the vulnerability of AMRM to subtle adversarial attacks, caused by adversarial noises that disrupt semantic relations in image patches. To address this, we propose the AMRM-Pure pipeline, a denoising method using Attention Matrix Variation Minimization, which iteratively refines adversarial examples by minimizing reconstruction loss to converge to clean images. We further enhance AMRM-Pure with classifier loss, introducing RAMRM-Pure$_{\text{MAE}}$. We apply two AMRM (MAE and MaskDiT) to our method. Extensive experiments demonstrate that our methods achieve state-of-the-art performance on multiple benchmarks.

\section{ACKNOWLEDGEMENT}
The work was partially supported by the following: 
the Zhejiang Provincial Natural Science Foundation – Exploration Project under No. LMS26F020007,
the Wenzhou Applied Fundamental Research Program (Basic Research) under No. GG20250198,
the WKU 2026 International Frontier Interdisciplinary Research Institute Talent Program under No. WKUTP2026002,
the WKU 2025 International Collaborative Research Program under No. ICRPSP2025001.

\newpage
\bibliographystyle{abbrvnat}
\bibliography{reference}

@String(CVPR= {IEEE Conf. Comput. Vis. Pattern Recog.})

@String(NIPS= {Adv. Neural Inform. Process. Syst.})

@String(ICLR = {Int. Conf. Learn. Represent.})

@String(AAAI = {AAAI})

@String(CVPR  = {CVPR})

@String(NIPS  = {NeurIPS})

@String(ICLR  = {ICLR})

@article{lee2024visualizing,
  title={Visualizing the loss landscape of Self-supervised Vision Transformer},
  author={Lee, Youngwan and Willette, Jeffrey Ryan and Kim, Jonghee and Hwang, Sung Ju},
  journal={arXiv preprint arXiv:2405.18042},
  year={2024}
}

@article{bai2024diffusion,
  title={Diffusion Models Demand Contrastive Guidance for Adversarial Purification to Advance},
  author={Bai, Mingyuan and Huang, Wei and Li, Tenghui and Wang, Andong and Gao, Junbin and Caiafa, C{\'e}sar Federico and Zhao, Qibin},
  journal={Internal Conference on Machine Learning},
  year={2024},
  publisher={ECAI}
}

@inproceedings{madry2018towards,
  title={Towards Deep Learning Models Resistant to Adversarial Attacks},
  author={Madry, Aleksander and Makelov, Aleksandar and Schmidt, Ludwig and Tsipras, Dimitris and Vladu, Adrian},
  booktitle={International Conference on Learning Representations},
  year={2018}
}

@article{lin2024adversarial,
  title={Adversarial Training on Purification (AToP): Advancing Both Robustness and Generalization},
  author={Lin, Guang and Li, Chao and Zhang, Jianhai and Tanaka, Toshihisa and Zhao, Qibin},
  journal={International Conference on Learning Representations},
  year={2024}
}

@inproceedings{he2016deep,
  title={Deep residual learning for image recognition},
  author={He, Kaiming and Zhang, Xiangyu and Ren, Shaoqing and Sun, Jian},
  booktitle={Proceedings of the IEEE conference on computer vision and pattern recognition},
  pages={770--778},
  year={2016}
}

@inproceedings{moosavi2016deepfool,
  title={Deepfool: a simple and accurate method to fool deep neural networks},
  author={Moosavi-Dezfooli, Seyed-Mohsen and Fawzi, Alhussein and Frossard, Pascal},
  booktitle={Proceedings of the IEEE conference on computer vision and pattern recognition},
  pages={2574--2582},
  year={2016}
}

@inproceedings{zagoruyko2016wide,
  title={Wide Residual Networks},
  author={Zagoruyko, Sergey and Komodakis, Nikos},
  booktitle={British Machine Vision Conference 2016},
  year={2016}
}

@article{goodfellow2014explaining,
  title={Explaining and harnessing adversarial examples},
  author={Goodfellow, Ian J and Shlens, Jonathon and Szegedy, Christian},
  journal={International Conference
on Learning Representations},
  year={2015}
}

@article{kurakin2016adversarial,
  title={Adversarial machine learning at scale},
  author={Kurakin, Alexey and Goodfellow, Ian and Bengio, Samy},
  journal={n International Conference
on Learning Representations},
  year={2017}
}

@article{tramer2017ensemble,
  title={Ensemble adversarial training: Attacks and defenses},
  author={Tram{\`e}r, Florian and Kurakin, Alexey and Papernot, Nicolas and Goodfellow, Ian and Boneh, Dan and McDaniel, Patrick},
  journal={International Conference on Learning Representations},
  year={2018}
}

@inproceedings{zhang2019theoretically,
  title={Theoretically principled trade-off between robustness and accuracy},
  author={Zhang, Hongyang and Yu, Yaodong and Jiao, Jiantao and Xing, Eric and El Ghaoui, Laurent and Jordan, Michael},
  booktitle={International conference on machine learning},
  pages={7472--7482},
  year={2019}
}

@inproceedings{athalye2018synthesizing,
  title={Synthesizing robust adversarial examples},
  author={Athalye, Anish and Engstrom, Logan and Ilyas, Andrew and Kwok, Kevin},
  booktitle={International conference on machine learning},
  pages={284--293},
  year={2018}
}

@article{wong2020fast,
  title={Fast is better than free: Revisiting adversarial training},
  author={Wong, Eric and Rice, Leslie and Kolter, J Zico},
  journal={International Conference on Learning Representations},
  year={2020}
}

@inproceedings{liu2021using,
  title={Using single-step adversarial training to defend iterative adversarial examples},
  author={Liu, Guanxiong and Khalil, Issa and Khreishah, Abdallah},
  booktitle={Proceedings of the Eleventh ACM Conference on Data and Application Security and Privacy},
  pages={17--27},
  year={2021}
}

@article{gowal2021improving,
  title={Improving robustness using generated data},
  author={Gowal, Sven and Rebuffi, Sylvestre-Alvise and Wiles, Olivia and Stimberg, Florian and Calian, Dan Andrei and Mann, Timothy A},
  journal={Advances in Neural Information Processing Systems},
  volume={34},
  pages={4218--4233},
  year={2021}
}

@inproceedings{vivek2020single,
  title={Single-step adversarial training with dropout scheduling},
  author={Vivek, BS and Babu, R Venkatesh},
  booktitle={2020 IEEE/CVF Conference on Computer Vision and Pattern Recognition (CVPR)},
  pages={947--956},
  year={2020},
  organization={IEEE}
}

@article{wang2023better,
  title={Better diffusion models further improve adversarial training},
  author={Wang, Zekai and Pang, Tianyu and Du, Chao and Lin, Min and Liu, Weiwei and Yan, Shuicheng},
  journal={International Conference on Machine Learning},
  year={2023}
}

@article{rebuffi2021fixing,
  title={Fixing data augmentation to improve adversarial robustness},
  author={Rebuffi, Sylvestre-Alvise and Gowal, Sven and Calian, Dan A and Stimberg, Florian and Wiles, Olivia and Mann, Timothy},
  journal={arXiv preprint arXiv:2103.01946},
  year={2021}
}

@article{sehwag2021robust,
  title={Robust learning meets generative models: Can proxy distributions improve adversarial robustness?},
  author={Sehwag, Vikash and Mahloujifar, Saeed and Handina, Tinashe and Dai, Sihui and Xiang, Chong and Chiang, Mung and Mittal, Prateek},
  journal={ International Conference on Learning Representations},
  year={2021}
}

@article{samangouei2018defense,
  title={Defense-gan: Protecting classifiers against adversarial attacks using generative models},
  author={Samangouei, Pouya and Kabkab, Maya and Chellappa, Rama},
  journal={ International Conference on Learning Representations},
  year={2018}
}

@article{song2017pixeldefend,
  title={Pixeldefend: Leveraging generative models to understand and defend against adversarial examples},
  author={Song, Yang and Kim, Taesup and Nowozin, Sebastian and Ermon, Stefano and Kushman, Nate},
  journal={International Conference on Learning
Representations},
  year={2017}
}

@inproceedings{yoon2021adversarial,
  title={Adversarial purification with score-based generative models},
  author={Yoon, Jongmin and Hwang, Sung Ju and Lee, Juho},
  booktitle={International Conference on Machine Learning},
  pages={12062--12072},
  year={2021},
  organization={PMLR}
}

@article{nie2022diffusion,
  title={Diffusion models for adversarial purification},
  author={Nie, Weili and Guo, Brandon and Huang, Yujia and Xiao, Chaowei and Vahdat, Arash and Anandkumar, Anima},
  journal={International Conference on Machine Learning},
  year={2022}
}

@article{song2020score,
  title={Score-based generative modeling through stochastic differential equations},
  author={Song, Yang and Sohl-Dickstein, Jascha and Kingma, Diederik P and Kumar, Abhishek and Ermon, Stefano and Poole, Ben},
  journal={International Conference on Learning Representations},
  year={2021}
}

@article{shi2021online,
  title={Online adversarial purification based on self-supervision},
  author={Shi, Changhao and Holtz, Chester and Mishne, Gal},
  journal={ International Conference on Learning Representations},
  year={2021}
}

@inproceedings{pang2022robustness,
  title={Robustness and accuracy could be reconcilable by (proper) definition},
  author={Pang, Tianyu and Lin, Min and Yang, Xiao and Zhu, Jun and Yan, Shuicheng},
  booktitle={International Conference on Machine Learning},
  pages={17258--17277},
  year={2022},
  organization={PMLR}
}

@article{12,
  author={Carlini, N. and Wagner, D. },
  journal={in 2017 IEEE Symposium on Security and Privacy (SP), pp. }, 
  title={Towards evaluating the robustness of neural networks. In 2017 IEEE Symposium on Security and Privacy (SP)}, 
  year={2017},
   pages={39–57}
  }

@article{13,
author = {Dawn Song and Kevin Eykholt and Ivan Evtimov and Earlence Fernandes and Bo Li and Amir Rahmati and Florian Tram{\`e}r and Atul Prakash and Tadayoshi Kohno},
title = {Physical Adversarial Examples for Object Detectors},
journal = {12th USENIX Workshop on Offensive Technologies (WOOT 18)},
year = {2018},
address = {Baltimore, MD},
publisher = {USENIX Association},
month = aug,
}

@article{14,
  title={Adversarial examples for semantic image segmentation},
  author={Fischer, Volker and Kumar, Mummadi Chaithanya and Metzen, Jan Hendrik and Brox, Thomas},
  journal={International Conference on Computer Vision},
  year={2017}
}

@article{15,
  title={A unified gradient regularization family for adversarial examples},
  author={Lyu, Chunchuan and Huang, Kaizhu and Liang, Hai-Ning},
  journal={2015 IEEE international conference on data mining},
  pages={301--309},
  year={2015},
}

@inproceedings{he2022masked,
  title={Masked autoencoders are scalable vision learners},
  author={He, Kaiming and Chen, Xinlei and Xie, Saining and Li, Yanghao and Doll{\'a}r, Piotr and Girshick, Ross},
  booktitle={Proceedings of the IEEE/CVF conference on computer vision and pattern recognition},
  pages={16000--16009},
  year={2022}
}

@article{choromanski2020rethinking,
  title={Rethinking attention with performers},
  author={Choromanski, Krzysztof and Likhosherstov, Valerii and Dohan, David and Song, Xingyou and Gane, Andreea and Sarlos, Tamas and Hawkins, Peter and Davis, Jared and Mohiuddin, Afroz and Kaiser, Lukasz and others},
  journal={International Conference on Learning Representations},
  year={2021}
}

@article{krizhevsky2009learning,
  title={Learning multiple layers of features from tiny images},
  author={Krizhevsky, Alex and Hinton, Geoffrey and others},
  year={2009},
  publisher={Toronto, ON, Canada}
}

@inproceedings{netzer2011reading,
  title={Reading digits in natural images with unsupervised feature learning},
  author={Netzer, Yuval and Wang, Tao and Coates, Adam and Bissacco, Alessandro and Wu, Baolin and Ng, Andrew Y and others},
  booktitle={NIPS workshop on deep learning and unsupervised feature learning},
  volume={2011},
  number={5},
  pages={7},
  year={2011},
  organization={Granada, Spain}
}

@inproceedings{deng2009imagenet,
  title={Imagenet: A large-scale hierarchical image database},
  author={Deng, Jia and Dong, Wei and Socher, Richard and Li, Li-Jia and Li, Kai and Fei-Fei, Li},
  booktitle={2009 IEEE conference on computer vision and pattern recognition},
  pages={248--255},
  year={2009},
  organization={Ieee}
}

@article{zhang2022mask,
  title={How mask matters: Towards theoretical understandings of masked autoencoders},
  author={Zhang, Qi and Wang, Yifei and Wang, Yisen},
  journal={Advances in Neural Information Processing Systems},
  volume={35},
  pages={27127--27139},
  year={2022}
}

@article{cao2022understand,
  title={How to understand masked autoencoders},
  author={Cao, Shuhao and Xu, Peng and Clifton, David A},
  journal={arXiv preprint arXiv:2202.03670},
  year={2022}
}

@inproceedings{zhou2023eliminating,
  title={Eliminating adversarial noise via information discard and robust representation restoration},
  author={Zhou, Dawei and Chen, Yukun and Wang, Nannan and Liu, Decheng and Gao, Xinbo and Liu, Tongliang},
  booktitle={International Conference on Machine Learning},
  pages={42517--42530},
  year={2023},
  organization={PMLR}
}

@article{wu2022denoising,
  title={Denoising masked autoencoders help robust classification},
  author={Wu, QuanLin and Ye, Hang and Gu, Yuntian and Zhang, Huishuai and Wang, Liwei and He, Di},
  journal={ICLR},
  year={2022}
}

@article{you2023beyond,
  title={Beyond pretrained features: noisy image modeling provides adversarial defense},
  author={You, Zunzhi and Liu, Daochang and Han, Bohyung and Xu, Chang},
  journal={Advances in Neural Information Processing Systems},
  volume={36},
  year={2023}
}

@incollection{zhang2024classifier,
  title={Classifier Guidance Enhances Diffusion-based Adversarial Purification by Preserving Predictive Information},
  author={Zhang, Mingkun and Li, Jianing and Chen, Wei and Guo, Jiafeng and Cheng, Xueqi},
  booktitle={ECAI 2024},
  pages={2234--2241},
  year={2024},
  publisher={IOS Press}
}

@article{chen2023robust,
  title={Robust classification via a single diffusion model},
  author={Chen, Huanran and Dong, Yinpeng and Wang, Zhengyi and Yang, Xiao and Duan, Chengqi and Su, Hang and Zhu, Jun},
  journal={ICML},
  year={2024}
}

@article{li2024adbm,
  title={ADBM: Adversarial diffusion bridge model for reliable adversarial purification},
  author={Li, Xiao and Sun, Wenxuan and Chen, Huanran and Li, Qiongxiu and Liu, Yining and He, Yingzhe and Shi, Jie and Hu, Xiaolin},
  journal={ICLR},
  year={2025}
}

@inproceedings{
liu2025towards,
title={Towards Understanding the Robustness of Diffusion-Based Purification: A Stochastic Perspective},
author = {Yiming Liu and Kezhao Liu and Yao Xiao and Ziyi Dong and Xiaogang Xu and Pengxu Wei and Liang Lin},
booktitle={The Thirteenth International Conference on Learning Representations},
year={2025},
}

@article{zhang2023enhancing,
  title={Enhancing adversarial robustness via score-based optimization},
  author={Zhang, Boya and Luo, Weijian and Zhang, Zhihua},
  journal={Advances in Neural Information Processing Systems},
  volume={36},
  pages={51810--51829},
  year={2023}
}

@inproceedings{andriushchenko2020square,
  title={Square attack: a query-efficient black-box adversarial attack via random search},
  author={Andriushchenko, Maksym and Croce, Francesco and Flammarion, Nicolas and Hein, Matthias},
  booktitle={European conference on computer vision},
  pages={484--501},
  year={2020},
  organization={Springer}
}

@inproceedings{croce2020minimally,
  title={Minimally distorted adversarial examples with a fast adaptive boundary attack},
  author={Croce, Francesco and Hein, Matthias},
  booktitle={International Conference on Machine Learning},
  pages={2196--2205},
  year={2020},
  organization={PMLR}
}

@inproceedings{croce2020reliable,
  title={Reliable evaluation of adversarial robustness with an ensemble of diverse parameter-free attacks},
  author={Croce, Francesco and Hein, Matthias},
  booktitle={International conference on machine learning},
  pages={2206--2216},
  year={2020},
  organization={PMLR}
}

@inproceedings{chen2020rays,
  title={Rays: A ray searching method for hard-label adversarial attack},
  author={Chen, Jinghui and Gu, Quanquan},
  booktitle={Proceedings of the 26th ACM SIGKDD International Conference on Knowledge Discovery \& Data Mining},
  pages={1739--1747},
  year={2020}
}

@article{
zheng2024fast,
title={Fast Training of Diffusion Models with Masked Transformers},
author={Hongkai Zheng and Weili Nie and Arash Vahdat and Anima Anandkumar},
journal={Transactions on Machine Learning Research},
issn={2835-8856},
year={2024},

}

@article{vaswani2017attention,
  title={Attention is all you need},
  author={Vaswani, Ashish and Shazeer, Noam and Parmar, Niki and Uszkoreit, Jakob and Jones, Llion and Gomez, Aidan N and Kaiser, {\L}ukasz and Polosukhin, Illia},
  journal={Advances in neural information processing systems},
  volume={30},
  year={2017}
}

@inproceedings{pathak2016context,
  title={Context encoders: Feature learning by inpainting},
  author={Pathak, Deepak and Krahenbuhl, Philipp and Donahue, Jeff and Darrell, Trevor and Efros, Alexei A},
  booktitle={Proceedings of the IEEE conference on computer vision and pattern recognition},
  pages={2536--2544},
  year={2016}
}

@article{nguyen2022fourierformer,
  title={Fourierformer: Transformer meets generalized fourier integral theorem},
  author={Nguyen, Tan and Pham, Minh and Nguyen, Tam and Nguyen, Khai and Osher, Stanley and Ho, Nhat},
  journal={Advances in Neural Information Processing Systems},
  volume={35},
  pages={29319--29335},
  year={2022}
}

@inproceedings{mei2025efficient,
  title={Efficient image-to-image diffusion classifier for adversarial robustness},
  author={Mei, Hefei and Dong, Minjing and Xu, Chang},
  booktitle={Proceedings of the AAAI Conference on Artificial Intelligence},
  volume={39},
  number={6},
  pages={6081--6089},
  year={2025}
}

@article{dou2024improving,
  title={Improving Robust Generalization with Diverging Spanned Latent Space},
  author={Dou, Owen and Gao, Zhiqiang and Shen, Hangchi and Yuan, Ziling and Zhang, Shufei and Huang, Kaizhu},
  journal={Transactions on Machine Learning Research},
  year={2024}
}








\section*{Checklist}

\begin{enumerate}

  \item For all models and algorithms presented, check if you include:
  \begin{enumerate}
    \item A clear description of the mathematical setting, assumptions, algorithm, and/or model. [Yes]
    \item An analysis of the properties and complexity (time, space, sample size) of any algorithm. [Yes]
    \item (Optional) Anonymized source code, with specification of all dependencies, including external libraries. [Yes]
  \end{enumerate}

  \item For any theoretical claim, check if you include:
  \begin{enumerate}
    \item Statements of the full set of assumptions of all theoretical results. [Yes]
    \item Complete proofs of all theoretical results. [Yes]
    \item Clear explanations of any assumptions. [Yes]     
  \end{enumerate}

  \item For all figures and tables that present empirical results, check if you include:
  \begin{enumerate}
    \item The code, data, and instructions needed to reproduce the main experimental results (either in the supplemental material or as a URL). [Yes]
    \item All the training details (e.g., data splits, hyperparameters, how they were chosen). [Yes]
    \item A clear definition of the specific measure or statistics and error bars (e.g., with respect to the random seed after running experiments multiple times). [Yes]
    \item A description of the computing infrastructure used. (e.g., type of GPUs, internal cluster, or cloud provider). [Yes]
  \end{enumerate}

  \item If you are using existing assets (e.g., code, data, models) or curating/releasing new assets, check if you include:
  \begin{enumerate}
    \item Citations of the creator If your work uses existing assets. [Yes]
    \item The license information of the assets, if applicable. [Yes]
    \item New assets either in the supplemental material or as a URL, if applicable. [Yes/No/Not Applicable]
    \item Information about consent from data providers/curators. [Yes]
    \item Discussion of sensible content if applicable, e.g., personally identifiable information or offensive content. [Yes]
  \end{enumerate}

  \item If you used crowdsourcing or conducted research with human subjects, check if you include:
  \begin{enumerate}
    \item The full text of instructions given to participants and screenshots. [Not Applicable]
    \item Descriptions of potential participant risks, with links to Institutional Review Board (IRB) approvals if applicable. [Not Applicable]
    \item The estimated hourly wage paid to participants and the total amount spent on participant compensation. [Not Applicable]
  \end{enumerate}

\end{enumerate}

\clearpage
\appendix
\thispagestyle{empty}

\onecolumn
\aistatstitle{AMRM-Pure: Semantic-Preserving Adversarial Purification: \\
Supplementary Materials}

\section{Limitation}

One limitation of AMRM-Pure$_{\text{MAE}}$ lies in its linear memory consumption with respect to batch size, which may restrict scalability under limited GPU resources.

\section{Broader impact}
We are the first to systematically explore the relationship between adversarial noise and inter-patch semantic information. While existing defense methods primarily focus on suppressing pixel-level perturbations or enhancing model robustness structurally, our work takes a novel perspective by analyzing reconstruction consistency and semantic alignment. We reveal how adversarial perturbations disrupt inter-patch semantic relations and propose a reconstruction paradigm that restores this consistency. This new angle provides a valuable direction for future adversarial defense research and advances the theoretical and practical understanding of robustness from a structure-aware perspective.


\section{Supplement Experiment}
\label{supp:exp}
We have enhanced this section with additional experiments to provide a more comprehensive evaluation of our work. Specifically, we present ImageNet~\citep{deng2009imagenet} results under PGD200 + EoT20~\citep{madry2018towards,athalye2018synthesizing} with the perturbation budgets $\epsilon=\frac{4}{255}$ for $\ell_{\infty}$ attack and $\epsilon=0.5$ for $\ell_{2}$ attack, using ResNet-101~\citep{he2016deep} as the classifier. We also performed ablation studies on CIFAR-10~\citep{krizhevsky2009learning}, CIFAR-100~\citep{krizhevsky2009learning}, and SVHN~\citep{netzer2011reading} using different backbones and evaluated diverse attack scenarios.

\subsection{Performance on ImageNet.}
Table \ref{image_net} presents the standard accuracy and robust accuracy of different purification methods on the ImageNet dataset under the $\ell_{\infty}$ attack. As shown in the table, the ADDT method achieves the highest standard accuracy at \textbf{80.20\%}, slightly outperforming the other methods. However, in terms of robustness, RAMRM-Pure$_{\text{MaskDiT}}$ stands out with a robust accuracy of \textbf{36.87\%}, surpassing all other methods, including ADDT. In contrast, AMRM-Pure$_{\text{MAE}}$ and AMRM-Pure$_{\text{MaskDiT}}$ demonstrate relatively lower robustness, achieving 24.75\% and 32.29\%, respectively. 

\begin{table}[!h]
\centering
\caption{Clean and robust accuracy (\%) with $\ell_{\infty}$ and $\ell_2$ attack on ImageNet obtained
by different purification methods.}
\begin{tabular}{c|ccc}
\midrule[1.2pt]
\multirow{2}{*}{Defense model} & \multirow{2}{*}{Std Acc} & \multicolumn{2}{c}{Robust Acc}   \\ \cline{3-4} 
                               &                          & $\ell_{\infty}$ & $\ell_{2}$     \\ \midrule
Diffpure \citep{nie2022diffusion}                       & 77.51                    & 30.15           & 44.15          \\
ADDT  \citep{liu2021using}                         & \textbf{80.20}           & 35.83           & -          \\ \midrule
\cellcolor{greyL}AMRM-Pure$_{\text{MAE}}$                       & \cellcolor{greyL}67.53                    & \cellcolor{greyL}24.75           & \cellcolor{greyL}35.95          \\
\cellcolor{greyL}AMRM-Pure$_{\text{MaskDiT}}$                   & \cellcolor{greyL}75.52                    & \cellcolor{greyL}32.29           & \cellcolor{greyL}45.57          \\
\cellcolor{greyL}RAMRM-Pure$_{\text{MAE}}$                      & \cellcolor{greyL}78.85                    & \cellcolor{greyL}29.48           & \cellcolor{greyL}42.25          \\
\cellcolor{greyL}RAMRM-Pure$_{\text{MaskDiT}}$                  & \cellcolor{greyL}79.52                    & \cellcolor{greyL}\textbf{36.87}  & \cellcolor{greyL}\textbf{51.17} \\ \midrule[1.2pt]
\end{tabular}
\label{image_net}
\end{table}



\subsection{ Transferability of finetuned purification on new classifiers}

We fine-tune the RAMRM-Pure$_{\text{MAE}}$/RAMRM-Pure$_{\text{MaskDiT}}$ model based on WideResNet-28-10~\citep{zagoruyko2016wide} and replace it with different classifiers for testing experiments. The WideResNet-70-16~\citep{zagoruyko2016wide} and ResNet-50~\citep{he2016deep} are selected for testing process to observe the transferability of our proposed method across different classifiers.

\begin{table}[!h]
\centering
    \caption{Robust accuracy (\%) of different purification methods against $\ell_{\infty}(\epsilon=\frac{8}{255})$ and $\ell_{2}(\epsilon=1)$ adversarial attacks across two classifiers: WideResNet-70-16 and ResNet-50. Here, our method are derived by fine-tuning on the WideResNet-28-10 classifier.}
\begin{tabular}{c|cc|cc}
\midrule[1.2pt]
\multirow{2}{*}{Classifier}  & \multicolumn{2}{c|}{WideResNet-70-16} & \multicolumn{2}{c}{ResNet-50}    \\ \cline{2-5} 
                             & $\ell_{\infty}$    & $\ell_{2}$        & $\ell_{\infty}$ & $\ell_{2}$     \\ \midrule
Diffpure \citep{nie2022diffusion}                    & 42.20              & 60.80             & 38.02           & 54.74          \\ \midrule
\cellcolor{greyL}RAMRM-Pure$_{\text{MAE}}$     & \cellcolor{greyL}44.27              & \cellcolor{greyL}62.42             & \cellcolor{greyL}43.72           & \cellcolor{greyL}60.08          \\
\cellcolor{greyL}RAMRM-Pure$_{\text{MaskDiT}}$ & \cellcolor{greyL}\textbf{58.37}     & \cellcolor{greyL}\textbf{68.55}    & \cellcolor{greyL}\textbf{54.22}  & \cellcolor{greyL}\textbf{67.15} \\ \midrule[1.2pt]
\end{tabular}
\label{transfer}
\end{table}

The results in Table \ref{transfer} highlight the superior performance of our method, particularly RAMRM-Pure$_{\text{MaskDiT}}$, which achieves the highest robust accuracy across both classifiers and attack norms. Notably, the fine-tuned models, initially trained on WideResNet-28-10, demonstrate strong transferability when applied to WideResNet-70-16 and ResNet-50 without the need for retraining. This finding underscores the practicality and scalability of our method, as its fine-tuned models can be seamlessly adapted to new classifiers, providing an efficient and robust defense against adversarial attacks.

\subsection{Performance on unseen threats}

\begin{table}[!h]
\centering
\caption{Robust accuracy (\%) against unseen threats with the setting of $\ell_{1}(\epsilon=12)$ and $\ell_{2}(\epsilon=1)$.}
\begin{tabular}{c|cccccc}
\midrule[1.2pt]
\multirow{2}{*}{Defense model} & \multicolumn{2}{c}{CIFAR-10}    & \multicolumn{2}{c}{CIFAR-100}   & \multicolumn{2}{c}{SVHN} \\ \cline{2-7} 
                               & $\ell_{1}$     & $\ell_{2}$     & $\ell_{1}$     & $\ell_{2}$     & $\ell_{1}$  & $\ell_{2}$ \\ \midrule
Diffpure \citep{nie2022diffusion}                       & 44.30          & 60.80          & 13.51          & 27.53          & 46.10       & 63.30      \\
ADBM \citep{li2024adbm}                           & 49.60          & 63.30          & -              & -              & 51.20       & 65.70      \\ \midrule
\cellcolor{greyL}RAMRM-Pure$_{\text{MAE}}$                      & \cellcolor{greyL}44.41          & \cellcolor{greyL}60.72          & \cellcolor{greyL}12.97          & \cellcolor{greyL}29.58          & \cellcolor{greyL}47.09       & \cellcolor{greyL}60.51      \\
\cellcolor{greyL}RAMRM-Pure$_{\text{MaskDiT}}$                  & \cellcolor{greyL}\textbf{65.11} & \cellcolor{greyL}\textbf{73.57} & \cellcolor{greyL}\textbf{41.15} & \cellcolor{greyL}\textbf{43.27} & \cellcolor{greyL}\textbf{55.53}       & \cellcolor{greyL}\textbf{70.18}      \\ \midrule[1.2pt]
\end{tabular}
\label{unseen}
\end{table}

For the three methods, ADBM, RAMRM-Pure$_{\text{MAE}}$, and RAMRM-Pure$_{\text{MaskDiT}}$, all of which are fine-tuned under the $\ell_{\infty}$ norm, the $\ell_{1}$ and $\ell_{2}$ norms are considered as unseen threats. To verify the robustness of the proposed method, we will now conduct testing under these unseen threats. For a fair comparison on the CIFAR-10 dataset, we employ the WideResNet-70-16 architecture, and we use the WideResNet-28-10 architecture on the CIFAR-100 and SVHN datasets.

Table \ref{unseen} presents the robust accuracy (\%) of different defense models against unseen threats ($\ell_{1}$ and $\ell_{2}$ attacks) on the CIFAR-10, CIFAR-100, and SVHN datasets. As a baseline method, DiffPure \citep{nie2022diffusion} performs moderately on CIFAR-10 and SVHN but poorly on CIFAR-100, especially under $\ell_{1}$ attacks (13.51\%). ADBM outperforms DiffPure on CIFAR-10 and SVHN, but no data is provided for CIFAR-100, suggesting potential limitations or untested performance on this dataset. Our RAMRM-Pure$_{\text{MAE}}$ method slightly outperforms DiffPure on CIFAR-10 and SVHN but underperforms on CIFAR-100 (12.97\% vs. 13.51\%), indicating some limitations on more complex datasets. In contrast, RAMRM-Pure$_{\text{MaskDiT}}$ significantly outperforms all other methods across all datasets and attack types. On CIFAR-10, RAMRM-Pure$_{\text{MaskDiT}}$ achieves accuracies of 65.11\% and 73.57\% under $\ell_{1}$ and $\ell_{2}$ attacks, respectively, far surpassing other methods. On CIFAR-100, although its performance under $\ell_{2}$ attacks is slightly lower than under $\ell_{1}$, it still outperforms other methods. On the SVHN dataset, RAMRM-Pure$_{\text{MaskDiT}}$ also demonstrates considerable robustness performance, particularly under $\ell_{2}$ attacks (70.18\%). Overall, RAMRM-Pure$_{\text{MaskDiT}}$ exhibits the strongest robustness against unseen threats, especially on CIFAR-10 and CIFAR-100, showcasing its superior generalization and defense capabilities, while RAMRM-Pure$_{\text{MAE}}$, though slightly less effective, still outperforms baseline methods in certain scenarios.

\subsection{Defense against adaptive attacks}
Table \ref{extra_attacks} presents the robust accuracy (\%) of various defense methods under different adversarial attacks in the adaptive \(\ell_{2}(\epsilon=1)\)-norm setting on the CIFAR-10 dataset. The evaluated adaptive attacks include C\&W~\citep{12}+EOT~\citep{athalye2018synthesizing}, DeepFool~\citep{moosavi2016deepfool}+EOT, AutoAttack~\citep{croce2020reliable}+EOT, and PGD~\citep{madry2018towards}+EOT. Among the methods, Diffpure and ADBM represent baseline defense approaches, with ADBM generally outperforming Diffpure across all attacks. For instance, ADBM achieves 78.40\% robust accuracy against C\&W+EOT compared to Diffpure's 74.80\%. The pure methods (non-adversarial training approaches) show varying performance: AMRM-Pure$_{\text{MAE}}$ exhibits the lowest robust accuracy across all attacks, while AMRM-Pure$_{\text{MaskDiT}}$ demonstrates stronger performance, particularly against DeepFool+EOT (82.8\%). RAMRM-Pure$_{\text{MAE}}$ shows moderate results, and RAMRM-Pure$_{\text{MaskDiT}}$ consistently outperforms all other methods, achieving the highest robust accuracy against every attack, with 80.58\% for C\&W+EOT, 86.11\% for DeepFool+EOT, 73.59\% for AutoAttack+EOT, and 69.57\% for PGD+EOT. This indicates that RAMRM-Pure$_{\text{MaskDiT}}$ is the most effective defense method in this setting, offering superior robustness across diverse adversarial attacks.
\begin{table}[!h]
\caption{Robust Accuracy (\%) of various defense methods under different attacks in the $\ell_{2}(\epsilon=1)$-norm setting using the exact gradient with WideResNet-70-16 on CIFAR-10.}
\centering
\begin{tabular}{ccccc}
\midrule[1.2pt]
Method        & C\&W+EOT       & DeepFool+EOT   & AutoAttack+EOT & PGD+EOT        \\ \midrule
Diffpure\citep{nie2022diffusion}      & 74.80          & 78.40          & 63.90          & 60.80          \\
ADBM\citep{li2024adbm}          & 78.40          & 84.30          & 66.80          & 66.30          \\ \midrule
\cellcolor{greyL}AMRM-Pure$_{\text{MAE}}$      & \cellcolor{greyL}62.54          & \cellcolor{greyL}65.15          & \cellcolor{greyL}52.75          & \cellcolor{greyL}53.50          \\
\cellcolor{greyL}AMRM-Pure$_{\text{MaskDiT}}$  & \cellcolor{greyL}79.15          & \cellcolor{greyL}82.80           & \cellcolor{greyL}66.43          & \cellcolor{greyL}64.53          \\
\cellcolor{greyL}RAMRM-Pure$_{\text{MAE}}$     & \cellcolor{greyL}72.20          & \cellcolor{greyL}72.29          & \cellcolor{greyL}59.11          & \cellcolor{greyL}60.72          \\
\cellcolor{greyL}RAMRM-Pure$_{\text{MaskDiT}}$ & \cellcolor{greyL}\textbf{80.58} & \cellcolor{greyL}\textbf{86.11} & \cellcolor{greyL}\textbf{73.59} & \cellcolor{greyL}\textbf{69.57} \\ \midrule[1.2pt]
\end{tabular}
\label{extra_attacks}
\end{table}


\subsection{Extra experiments on different classifier}
\label{sec:classifier}
Table \ref{WRN70_} shows our standard and robust accuracy using WideResNet-70-16 under CIFAR-10 and SVHN. Compared with WideResNet-28-10, it shows better results. It means the overparameterization contributes model's robustness. Among all the methods, the effectiveness of the RAMRM-Pure$_{\text{MaskDiT}}$ method is most notable.
\begin{table}[!h]
\centering
\caption{Performance of standard accuracy and robust accuracy (\%) using WideResNet-70-16.}
\begin{tabular}{c|c|ccc|ccc}
\midrule[1.2pt]
\multirow{2}{*}{Method} & \multirow{2}{*}{Architecture} & \multicolumn{3}{c|}{CIFAR10}                      & \multicolumn{3}{c}{SVHN}                          \\ \cline{3-8} 
                        &                               & Std Acc        & $\ell_{\infty}$ & $\ell_{2}$     & Std Acc        & $\ell_{\infty}$ & $\ell_{2}$     \\ \midrule
\cellcolor{greyL}AMRM-Pure$_{\text{MAE}}$                & \cellcolor{greyL}WideResNet-70-16             & \cellcolor{greyL}89.66          & \cellcolor{greyL}42.21           & \cellcolor{greyL}56.77          & \cellcolor{greyL}94.97          & \cellcolor{greyL}17.11           & \cellcolor{greyL}32.75          \\
\cellcolor{greyL}AMRM-Pure$_{\text{MaskDiT}}$            & \cellcolor{greyL}WideResNet-70-16             & \cellcolor{greyL}94.91          & \cellcolor{greyL}52.07           & \cellcolor{greyL}66.55          & \cellcolor{greyL}94.93          & \cellcolor{greyL}46.03           & \cellcolor{greyL}63.77          \\
\cellcolor{greyL}RAMRM-Pure$_{\text{MAE}}$               & \cellcolor{greyL}WideResNet-70-16             & \cellcolor{greyL}91.07          & \cellcolor{greyL}47.92           & \cellcolor{greyL}60.95          & \cellcolor{greyL}94.78          & \cellcolor{greyL}40.79           & \cellcolor{greyL}60.51          \\
\cellcolor{greyL}RAMRM-Pure$_{\text{MaskDiT}}$           & \cellcolor{greyL}WideResNet-70-16             & \cellcolor{greyL}\textbf{93.85} & \cellcolor{greyL}\textbf{63.94}  & \cellcolor{greyL}\textbf{75.50} & \cellcolor{greyL}\textbf{95.91} & \cellcolor{greyL}\textbf{56.02}  & \cellcolor{greyL}\textbf{69.18} \\ \midrule[1.2pt]
\end{tabular}
\label{WRN70_}
\end{table}


\subsection{Performance on Black-box Attack}

To evaluate the effectiveness of against black-box attacks, we adopt three black-box attack methods: FAB \citep{croce2020minimally}, Square \citep{andriushchenko2020square}, and Rays \citep{chen2020rays} on CIFAR-10 and SVHN. The black-box scenario implies that the attacker has no knowledge of the defense method.
Table \ref{black_tab} shows the robustness of various methods against black-box $\ell_{\infty}$ attacks with the perturbation budget $\epsilon = \frac{8}{255}$ using WideResNet-28-10. RMaskDiT still achieves the best results.

\begin{table}[!h]
\footnotesize
\setlength{\tabcolsep}{2pt} 
\caption{Robust accuracy (\% ) against different black-box attacks $\ell_{\infty}(\epsilon=\frac{8}{255})$ with WideResNet-28-10. The ``Vanilla" setting represents the model trained on clean datasets without any defense.}
\centering
\begin{tabular}{c|c|clcclcccc}
\midrule[1.2pt]
\multirow{3}{*}{Method} & \multirow{3}{*}{Architecture} & \multicolumn{4}{c}{CIFAR-10}                                                                           &  & \multicolumn{4}{c}{SVHN}                                                                \\ \cline{3-6} \cline{8-11} 
                        &                               & \multirow{2}{*}{Std Acc} & \multicolumn{3}{c}{Robust Acc}                                              &  & \multirow{2}{*}{Std Acc} & \multicolumn{3}{c}{Robust Acc}                               \\ \cline{4-6} \cline{9-11} 
                        &                               &                          & Square                & FAB                      & RayS                     &  &                          & \multicolumn{1}{l}{Square} & FAB            & RayS           \\ \cline{1-6} \cline{8-11} 
Vanilla                 & WideResNet-28-10             & 96.75                    & 19.15                 & \multicolumn{1}{l}{0.00} & \multicolumn{1}{l}{1.23} &  & 98.11                    & 9.08                       & 14.78          & 16.89          \\
Diffpure \citep{nie2022diffusion}                & WideResNet-28-10             & 89.15                    & 89.15                 & 88.29                    & 90.51                    &  & 93.93                    & 92.15                      & 93.13          & 92.97          \\
ADBM \citep{li2024adbm}                    & WideResNet-28-10             & -                        & \multicolumn{1}{c}{-} & -                        & -                        &  & 93.49                    & 93.32                      & 92.98          & 93.16          \\ \cline{1-6} \cline{8-11} 
\cellcolor{greyL}AMRM-Pure$_{\text{MAE}}$                & \cellcolor{greyL}WideResNet-28-10             & \cellcolor{greyL}88.57                    & \cellcolor{greyL}78.59                 & \cellcolor{greyL}76.43                    & \cellcolor{greyL}77.29                    &  & \cellcolor{greyL}94.54                    & \cellcolor{greyL}92.57                      & \cellcolor{greyL}93.36          & \cellcolor{greyL}93.41          \\
\cellcolor{greyL}AMRM-Pure$_{\text{MaskDiT}}$            & \cellcolor{greyL}WideResNet-28-10             & \cellcolor{greyL}92.03                    & \cellcolor{greyL}90.96                 & \cellcolor{greyL}92.25                    & \cellcolor{greyL}\textbf{93.39}           &  & \cellcolor{greyL}94.91                    & \cellcolor{greyL}92.80                      & \cellcolor{greyL}92.59 & \cellcolor{greyL}92.73          \\
\cellcolor{greyL}RAMRM-Pure$_{\text{MAE}}$               & \cellcolor{greyL}WideResNet-28-10             & \cellcolor{greyL}90.09                    & \cellcolor{greyL}90.25                 & \cellcolor{greyL}89.15                    & \cellcolor{greyL}92.31                    &  & \cellcolor{greyL}94.47                    & \cellcolor{greyL}92.73                      & \cellcolor{greyL}93.27          & \cellcolor{greyL}93.36          \\
\cellcolor{greyL}RAMRM-Pure$_{\text{MaskDiT}}$           & \cellcolor{greyL}WideResNet-28-10             & \cellcolor{greyL} \textbf{93.11}                    & \cellcolor{greyL}\textbf{93.27}        & \cellcolor{greyL}\textbf{93.38}           & \cellcolor{greyL}93.03                    &  & \cellcolor{greyL} \textbf{95.39}                    & \cellcolor{greyL}\textbf{94.15}             & \cellcolor{greyL}\textbf{94.18} & \cellcolor{greyL}\textbf{94.88} \\ \midrule[1.2pt]
\end{tabular}
\label{black_tab}
\end{table}



\subsection{Robust under BPDA attack}

We evaluate the robustness of our model under a strong white-box attack setting using BPDA combined with EoT set to 20. The results show as follow:

\begin{table}[h]
\centering
\caption{Clean and robust accuracy (\%) under BPDA attack ($\epsilon = \frac{8}{255}$) on CIFAR-10.}
\begin{tabular}{c|c|c|c}
\hline
\multirow{1}{*}{Method}  & \multirow{1}{*}{Architecture} & Std Acc & Robust Acc ($\ell_\infty$) \\ \hline
Diffpure \citep{nie2022diffusion} & WRN-28-10 & 89.20 & 78.53 \\ \hline
\rowcolor{greyL} AMRM-Pure$_{\text{MAE}}$       & WRN-28-10 & 88.57 & 78.89 \\
\rowcolor{greyL} AMRM-Pure$_{\text{MaskDiT}}$   & WRN-28-10 & 92.03 & 83.44 \\
\rowcolor{greyL} RAMRM-Pure$_{\text{MAE}}$      & WRN-28-10 & 90.09 & 80.17 \\
\rowcolor{greyL} RAMRM-Pure$_{\text{MaskDiT}}$  & WRN-28-10 & \textbf{93.11} & \textbf{85.41} \\ \hline
\end{tabular}
\label{table_bpda}
\end{table}

Table \ref{table_bpda} presents the comparison of clean and robust accuracy under BPDA attack ($\epsilon = \frac{8}{255}$) on the CIFAR-10 dataset. While the conventional Diffpure method achieves decent robustness, our proposed methods demonstrate significant improvements in both clean and robust accuracy. In particular, RAMRM-Pure$_{\text{MaskDiT}}$ achieves the highest clean accuracy (93.11\%) and robust accuracy (85.41\%), highlighting its superior purification capability and enhanced resistance to adversarial attacks. These results validate the effectiveness of our approach in improving semantic reconstruction and adversarial robustness.

\subsection{A specific attacks}

We consider a white-box adversarial attack that jointly optimizes reconstruction fidelity and classification error by minimizing a weighted combination of reconstruction loss and negative classification loss. Specifically, given an input $x$, the attacker generates an adversarial example $x_{\mathrm{adv}}$ by solving the following optimization problem:
\begin{equation}
x_{\mathrm{adv}} = \arg\min_{x'} \; (1 - \alpha)\,\mathcal{L}_{\mathrm{rec}}(x') \;-\; \alpha\,\mathcal{L}_{\mathrm{cls}}(x')
\end{equation}
where $\mathcal{L}_{\mathrm{rec}}$ denotes the reconstruction loss and $\mathcal{L}_{\mathrm{cls}}$ is the classification loss that encourages misclassification. The trade-off parameter $\alpha \in [0,1]$ controls the balance between preserving reconstruction quality and inducing misclassification. The attack is implemented using a strong PGD-based procedure with sufficient iterations and step size tuning to avoid gradient obfuscation. This formulation generalizes standard adversarial attacks by incorporating a reconstruction constraint, aiming to produce adversarial examples that remain visually consistent while still fooling the classifier.

\paragraph{Results and Analysis.}
We evaluate the effectiveness of the proposed joint-loss attack under different settings of the trade-off coefficient $\alpha$ and step size $\beta$.

\textbf{Effect of $\alpha$.}
We vary $\alpha \in \{0.1, 0.3, 0.5, 0.7, 0.9\}$ to study the trade-off between reconstruction fidelity and adversarial strength. The results are shown in Table~\ref{tab:alpha}.

\begin{table}[h]
\caption{Effect of the trade-off parameter $\alpha$ on the joint-loss attack.}
\centering
\small
\begin{tabular}{c|ccccc}
\toprule
$\alpha$ & 0.1 & 0.3 & 0.5 & 0.7 & 0.9 \\
\midrule
AMRM-Pure$_{\text{MaskDiT}}$ & 89.77 & 88.29 & 87.54 & 87.10 & 87.24 \\
RAMRM-Pure$_{\text{MaskDiT}}$ & 90.17 & 89.25 & 89.57 & 89.11 & 88.27 \\
\bottomrule
\end{tabular}
\label{tab:alpha}
\end{table}

As $\alpha$ increases, the adversarial objective becomes stronger, but the reconstruction quality degrades. Specifically, when $\alpha = 0.1$, the reconstruction loss is low ($\mathcal{L}_{rec}^{adv} = 0.04$), but the negative classification loss remains small ($-\mathcal{L}_{cls} = -0.19$), indicating a weak attack. In contrast, when $\alpha = 0.9$, the attack becomes much stronger ($-\mathcal{L}_{cls} = 4.4$), but the reconstruction loss increases significantly ($\mathcal{L}_{rec}^{adv} = 0.19$), leading to severely degraded reconstructions. These results reveal a clear trade-off between reconstruction fidelity and adversarial effectiveness.

\textbf{Effect of $\beta$.}
We further vary the step size $\beta \in \{0.01, 0.05, 0.1, 0.5, 1\}$, and report the results in Table~\ref{tab:beta}.

\begin{table}[h]
\caption{Effect of the step size $\beta$ on the joint-loss attack.}
\centering
\small
\begin{tabular}{c|ccccc}
\toprule
$\beta$ & 0.01 & 0.05 & 0.1 & 0.5 & 1 \\
\midrule
AMRM-Pure$_{\text{MaskDiT}}$ & 84.49 & 83.22 & 87.54 & 88.59 & 89.27 \\
RAMRM-Pure$_{\text{MaskDiT}}$ & 85.59 & 87.92 & 89.57 & 88.62 & 89.54 \\
\bottomrule
\end{tabular}
\label{tab:beta}
\end{table}

In table \ref{tab:beta}, we observe that the attack becomes more effective when $\beta$ is around $0.05$--$0.1$. However, even under the best step size, the proposed attack still does not outperform the standard full-gradient white-box attack used in our paper.

From the above results, we conclude that the joint-loss attack struggles to achieve a satisfactory balance between reconstruction fidelity and adversarial effectiveness. In particular, no choice of $\alpha$ can simultaneously ensure low reconstruction loss and strong adversarial impact. This suggests that the reconstruction objective and classification objective are inherently conflicting under this formulation, making the attack weaker than a standard full-gradient white-box attack.

\subsection{Ablation study}

\subsubsection{Impact of time step number}

To investigate the impact of time steps on the denoising process in MaskDiT, we conduct experiments by observing the robust accuracy at different time steps, aiming to understand how varying time steps influence the model’s ability to effectively remove noise and improve overall performance.

\begin{table}[!h]
\centering
\caption{Impact of time step on CIFAR-10, and all configurations align with Table \ref{cifar_10_full}.}
\begin{tabular}{c|lccl}
\midrule[1.2pt]
Time steps                   & \multicolumn{1}{c}{15} & 20    & 25             & 30             \\ \midrule
AMRM-Pure$_{\text{MaskDiT}}$  & 49.27                  & 50.41 & 50.57          & \textbf{51.69} \\
RAMRM-Pure$_{\text{MaskDiT}}$ & 49.22                  & 51.34 & \textbf{62.13} & 58.22          \\ \midrule[1.2pt]
\end{tabular}
\label{step_num}
\end{table}

As shown in Table \ref{step_num}, our method exhibits a stable increase in robust accuracy as time steps increase, peaking at \textbf{51.69} when the metric of time steps is set as 30. In contrast, RAMRM-Pure$_{\text{MaskDiT}}$ achieves its highest accuracy of \textbf{62.13} when the metric of time steps is set as 25, but experiences a slight drop at the 30$^{th}$ step, indicating its sensitivity to the optimal time step selection.

\subsubsection{Ablation on the Effectiveness of MaskDiT framework}

To explore the origins of the enhanced performance observed with AMRM-Pure$_{\text{MaskDiT}}$ and RAMRM-Pure$_{\text{MaskDiT}}$, we conducted a series of controlled experiments in Table \ref{supplyment_ours}. We begin with training two models, e.g., MaskDiT and RMaskDiT, used for purification. Firstly, following the DiffPure \citep{nie2022diffusion}, we conduct the reconstruction-based purification processes for purification, which are MaskDiT$_{\text{/purification}}$ and RMaskDiT$_{\text{/purification}}$. This reconstruction based purification involves a forward (noise addition) and backward (denoising) pass. The reconstructed images are treated as purified results to investigate whether the superior performance stems from generative ability of selected models. On the other hand, we then compared the above performances with the proposed AMV-based purification methods by using the same models, where the purification processes are named as AMRM-Pure$_{\text{MaskDiT}}$ and RAMRM-Pure$_{\text{MaskDiT}}$. By keeping the model architecture consistent across all variants, we enabled a direct comparison of the effectiveness of different denoising strategies. It is note that MaskDiT and RMaskDiT indeed possess a certain capability to withstand stronger adversarial attacks; however, their effectiveness is still much lower than that of our proposed AMRM-Pure$_{\text{MaskDiT}}$ and RAMRM-Pure$_{\text{MaskDiT}}$.


\begin{table}[h]
\centering
\footnotesize
\caption{Ablation analysis on different denoising components across various datasets.}
\begin{tabular}{c|c|cc|cc|cc}
\midrule[1.2pt]
\multirow{2}{*}{Method}           & \multirow{2}{*}{Architecture} & \multicolumn{2}{c|}{CIFAR10}    & \multicolumn{2}{c|}{CIFAR100}   & \multicolumn{2}{c}{SVHN}        \\ \cline{3-8} 
                                  &                               & Std acc        & Robust acc     & Std acc        & Robust acc     & Std acc        & Robust acc     \\ \midrule
MaskDiT$_{/\text{purification}}$  &WRN-28-10              & 91.13          & 42.99          & 63.29          & 13.57          & 92.28          & 40.07          \\
RMaskDiT$_{/\text{purification}}$ & WRN-28-10              & 90.57          & 47.57          & 64.15          & 18.55          & 93.03          & 45.19          \\
\cellcolor{greyL}AMRM-Pure$_{\text{MaskDiT}}$                      & \cellcolor{greyL}WRN-28-10              & \cellcolor{greyL}92.03          & \cellcolor{greyL}50.57          & \cellcolor{greyL}\textbf{70.03} & \cellcolor{greyL}24.39          & \cellcolor{greyL}94.91          & \cellcolor{greyL}46.57          \\
\cellcolor{greyL}RAMRM-Pure$_{\text{MaskDiT}}$                     & \cellcolor{greyL}WRN-28-10              & \cellcolor{greyL}\textbf{93.11} & \cellcolor{greyL}\textbf{62.13} & \cellcolor{greyL}69.87          & \cellcolor{greyL}\textbf{29.91} & \cellcolor{greyL}\textbf{95.39} & \cellcolor{greyL}\textbf{55.90} \\ \midrule[1.2pt]
\end{tabular}
\label{supplyment_ours}
\end{table}

\subsection{Sensitivity Analysis}
\label{senstiv}
In this subsection, taking the CIFAR10 under same setting with Table \ref{cifar_10_full}, we analyze the impact of step size, mask ratio and step size. The result is described in Fig. \ref{sens_STEP}. It confirms our theoretical analysis.
\begin{figure}[H]
  \begin{minipage}[b]{0.23\linewidth}
    \centering
    \subfloat[Step size\label{or1aa}]{\includegraphics[width=\linewidth]{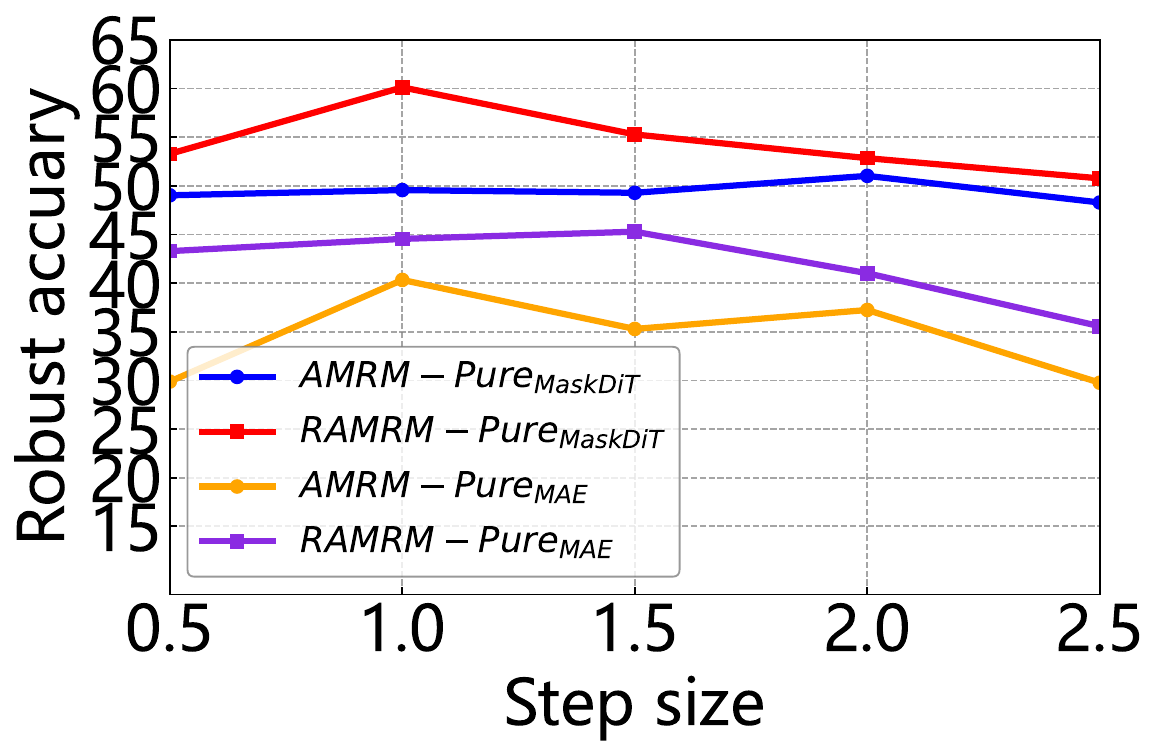}}
  \end{minipage}
   \hfill
  \begin{minipage}[b]{0.23\linewidth}
    \centering
    \subfloat[Mask ratio\label{m1aa}]{\includegraphics[width=\linewidth]{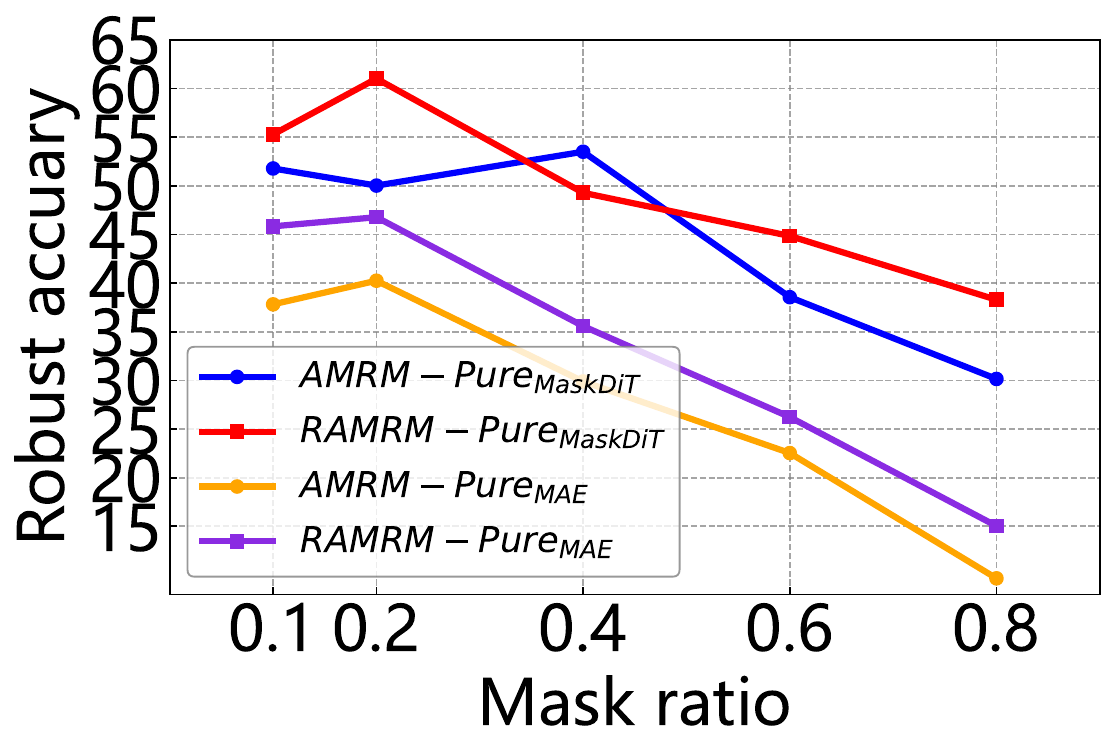}}
  \end{minipage}
   \hfill
   \begin{minipage}[b]{0.23\linewidth}
    \centering
    \subfloat[Iteration /w MAE\label{m1aa}]{\includegraphics[width=\linewidth]{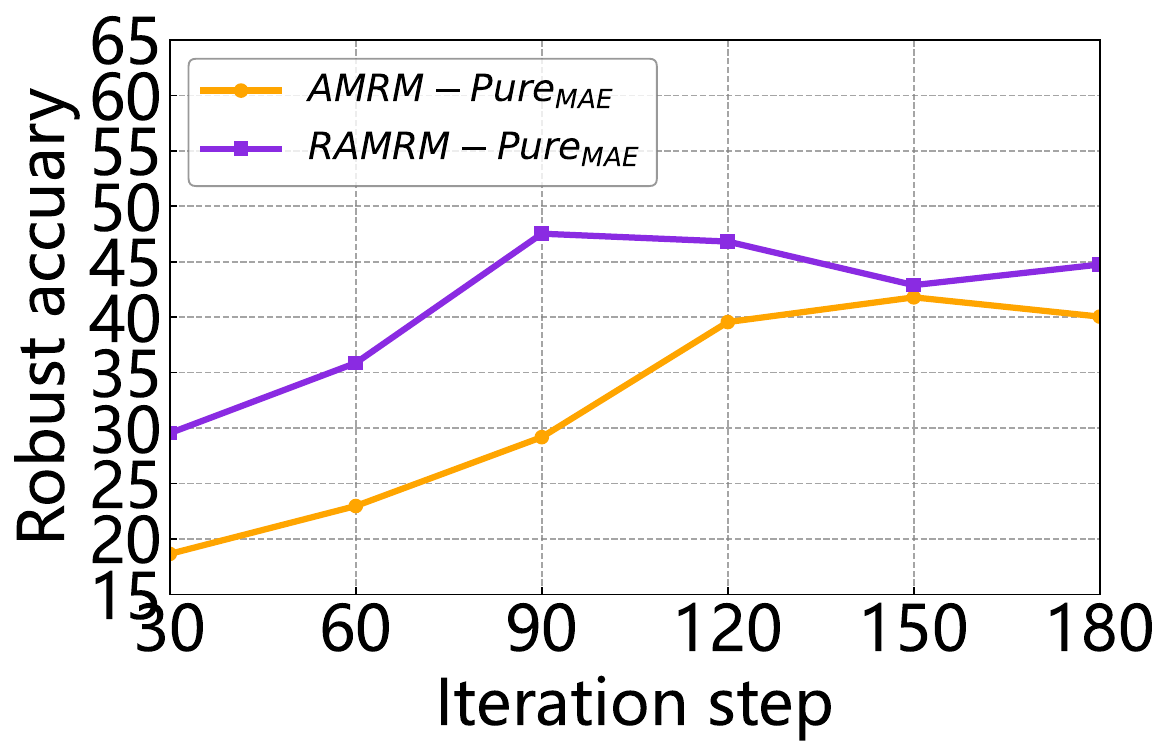}}
  \end{minipage}
   \hfill
   \begin{minipage}[b]{0.23\linewidth}
    \centering
    \subfloat[Iteration /w MaskDiT \label{m1aa}]{\includegraphics[width=\linewidth]{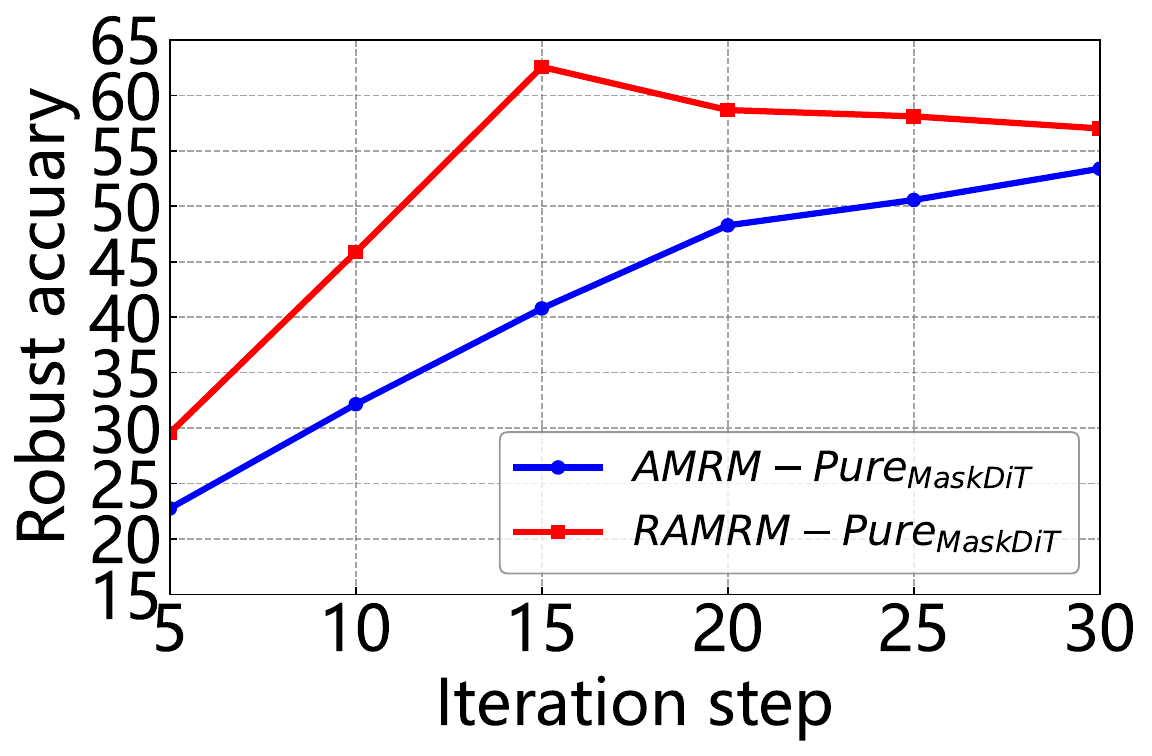}}
  \end{minipage}
   \hfill
  \caption{Sensitivity analysis}
  \label{sens_STEP}
\end{figure}

\subsubsection{Is non-attentive MRM effective?}

To investigate whether non-attentive MRMs are effective, we adopt a CNN-based architecture similar to MAE, namely Context Encoders (CE) \citep{pathak2016context}, and propose CE-Pure in a manner analogous to AMRM-Pure$_{\text{MAE}}$. We show their results in Table \ref{cnn_based} Context Encoders, like MAE, are self-supervised methods based on masking and reconstruction, but differ in architectural design: CE relies on CNNs that excel at capturing local structural information and are widely used for image inpainting, while MAE is Transformer-based and better at modeling global contextual information. We evaluate CE-Pure on CIFAR-10 and SVHN using WRN-28 as the classifier, and adopt PGD200+EOT20 with full gradient for consistency with the main paper. The results are as follows:

\begin{table}[ht]
\centering
\caption{Comparison of CE-Pure and AMRM-Pure$_{\text{MAE}}$ results on CIFAR-10 and SVHN using WRN-28 as the classifier with Sta Acc and Robust Acc ($\ell_{\infty}=\frac{8}{255}$).}
\begin{tabular}{c|cc|cc}
\hline
\multirow{2}{*}{Method} & \multicolumn{2}{c|}{CIFAR-10}   & \multicolumn{2}{c}{SVHN}        \\ \cline{2-5} 
                        & Std Acc        & Robust Acc     & Std Acc        & Robust Acc     \\ \hline
CE-Pure                 & 82.59          & 6.75           & 85.33          & 8.59           \\
\cellcolor{greyL}AMRM-Pure$_{\text{MAE}}$                & \cellcolor{greyL} \textbf{88.57} & \cellcolor{greyL} \textbf{40.53} & \cellcolor{greyL} \textbf{95.39} & \cellcolor{greyL} \textbf{55.90} \\ \hline
\end{tabular}
\label{cnn_based}
\end{table}

The results show that the robustness of AMRM-Pure$_{\text{MAE}}$ is significantly stronger than that of CE-Pure. This mechanism cannot be transferred to CNNs, thus proving that AMRM has better robustness.


\section{Robust MAE and MaskDiT analysis}
\label{robust:purification}

\begin{figure}[h]
  \centering
  \subfloat[ \label{fig:purification_iterations}]{
    \includegraphics[width=0.45\linewidth]{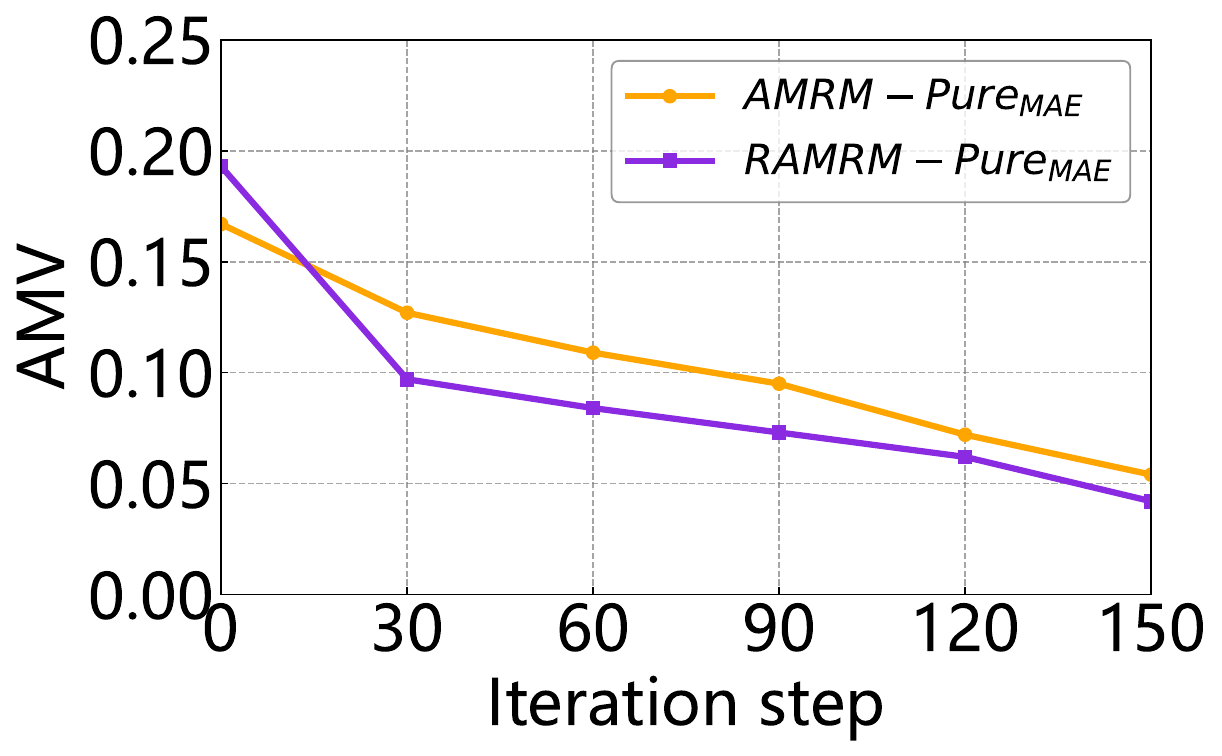}
  }
  \hfill
  \subfloat[ \label{fig:purification_iterations_md}]{
    \includegraphics[width=0.45\linewidth]{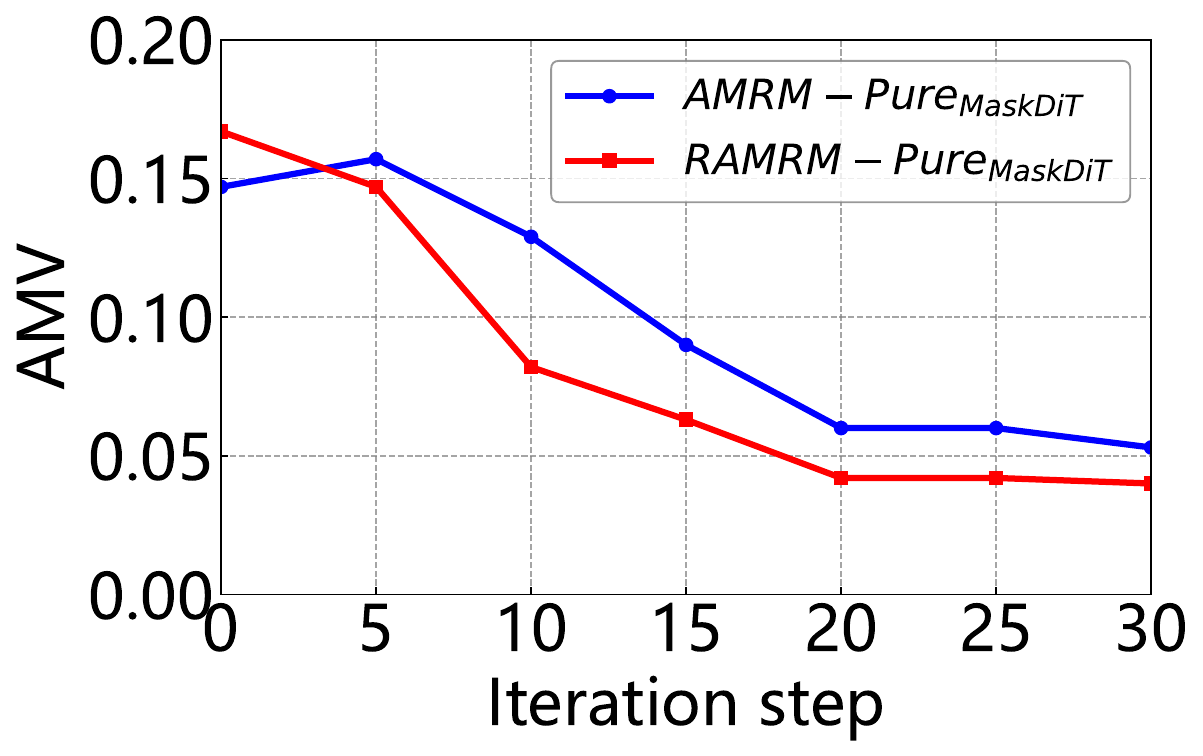}
  }
  \caption{Impact of purification iterations on the AMV using CIFAR-10.}
  \label{fig:combined_purification}
\end{figure}

\subsection{Fine-tuning of RMRM-Pure}
\label{robust:purification}

\textbf{Stage I:} We begin by generating adversarial example dataset $\text{X}_{adv}'$, and it achieves the purifier-classifier system as follows:
\begin{equation}
    \text{X}_{adv}' =  \max_{\delta}\left[\sum_{(x,y)\in \text{X}} \mathcal{L}_\mathcal{C}\left(\mathcal{P}_{\theta}(x'+\delta),y\right)\right],
    \label{miniz}
\end{equation}
where \( x' \) represents the perturbed sample \( x \) with added Gaussian noise, and \( y \) denotes its corresponding label from the training dataset \( \text{X} \). 
The functions \( \mathcal{P}_{\theta}(\cdot) \) and \( \mathcal{L}_\mathcal{C}(\cdot) \) correspond to the purification process and the loss of the classifier, respectively.
$x'_{adv}$ is an adversarial sample designed to target the entire purifier-classifier system. It is utilized during the subsequent fine-tuning stage to improve the overall robustness of the system. 

\textbf{Stage II:} We choose to use the generated adversarial example $x'_{adv}$ from adversarial dataset $\text{X}_{adv}'$ for fine-tuning the purification model as follows:
\begin{equation}
\min_{\theta} \mathcal{L}_{fine}(x_{adv}',y,\theta) = \min_{\theta}\sum_{(x_{adv}',y)\in \text{X}_{adv}'}\mathcal{L}_\mathcal{C}\left(x_{adv}',y\right),
\label{fine_tune}
\end{equation}
where $\theta$ represents the weight of the purification model. 

Compared to AMRM-Pure, RAMRM-Pure optimizes Eq. (\ref{miniz}) to steer denoised images toward the classifier domain of natural datasets. 
This operation ensures that the attention distribution of denoised images more closely resembles that of natural samples, thereby reducing AMV and preserving inter-patch semantic relationships. As a result, RAMRM-Pure achieves superior robustness and generalization.

\section{Validation of MaskDiT}
\label{maskdit_val}

\begin{figure*}[h]
  \centering
  \subfloat[ \label{fig:maskdit_per}]{
    \includegraphics[width=0.45\linewidth]{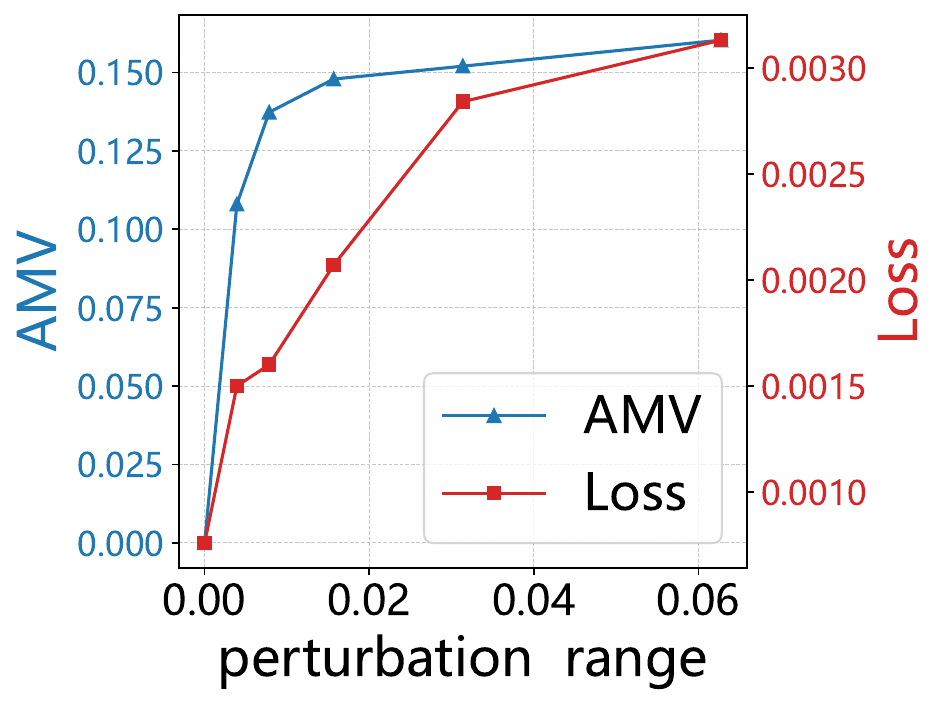}
  }
  \hfill
  \subfloat[ \label{fig:maskdit_pur}]{
    \includegraphics[width=0.45\linewidth]{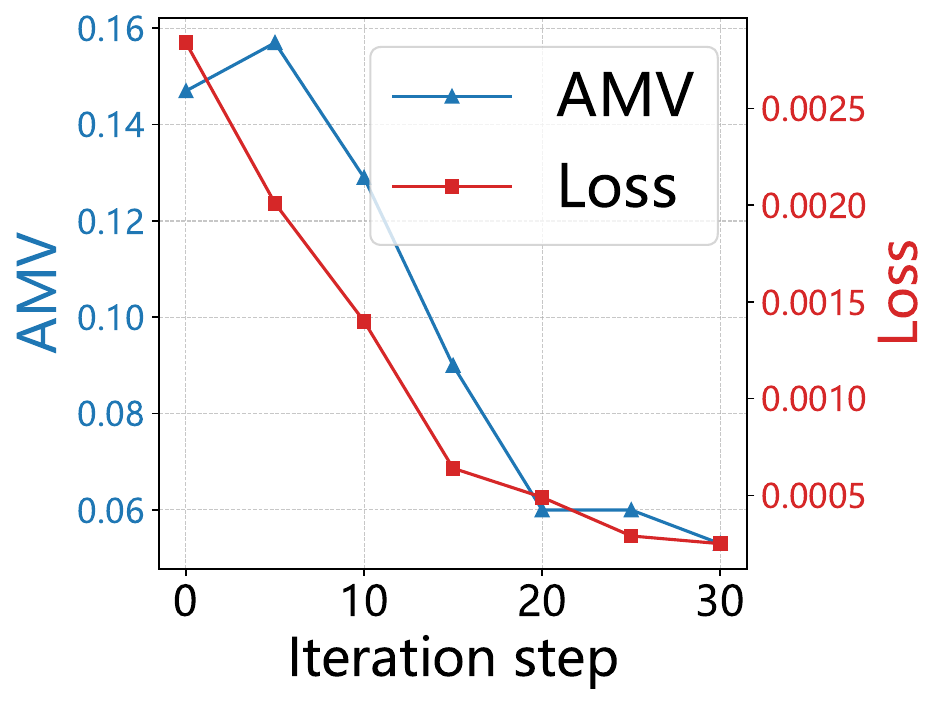}
  }
  \caption{The relationship between the trends of MaskDit loss, AMV, and perturbation.}
  \label{fig:combined_purification_maskdit}
\end{figure*}

Figure \ref{fig:combined_purification_maskdit} compares the reconstruction loss, AMV, and perturbation of MaskDiT with those of MAE shown in Figures \ref{trens_forard}, revealing similar patterns. Like MAE, MaskDiT is highly sensitive to noise, with AMV increasing sharply under minimal perturbations (e.g., $\delta = \frac{1}{255}$). Based on this observation, we extend AMRM-Pure$_{\text{MAE}}$ manner to MaskDiT.

\section{Algorithm}
\label{sec:algo}

\par Algorithm \ref{attack_algo} describes the AMRM-Pure defense procedure. The fine-tuning process of RAMRM-Pure consists of two stages.
\begin{algorithm}[H]
\caption{AMRM-Pure.}
    \label{attack_algo}
    \small
\begin{algorithmic}[1]
    \renewcommand{\algorithmicrequire}{\textbf{Input:}}
    \renewcommand{\algorithmicensure}{\textbf{Output:}}
   \REQUIRE Adversarial Example $x_{adv}$,  Step Size $\lambda$, Number of iteration $S$, clipping threshold $\eta$.
    \ENSURE Denoised data $x_{den}$.
    
     \STATE $\text{s} \gets 0$ \\
     \STATE  {$ x_{adv}^{s} \gets x_{adv}$}\\
     \WHILE  {$\text{s} \le \text{S}$}
     \STATE $\text{s} \gets \text{s} +1$ \\
     \STATE Gain the AMRM reconstruction loss for adversarial examples $\mathcal{L}_{\text{rec}}(x_{adv}^{s})$
     \STATE $ \Delta_{s} = \text{sign} (\nabla_{x}  \mathcal{L}_{\text{rec}}(x_{adv}^{s})) $
      \STATE $ x_{adv}^{s} \gets clip(x_{adv}^{s} - \lambda \Delta_s,\eta)$
    \ENDWHILE \\
    \STATE $x_{den} \gets x_{adv}^{S}$
\end{algorithmic} 
\label{alg1}
\end{algorithm}

\section{Visualization Analysis}
\label{visualization_more}

\par The visualization analysis of our proposed method is illustrated as Fig. \ref{amv_}. It qualitatively verifies the effectiveness of our proposed method.
\begin{figure}[H]

    
  \begin{minipage}[b]{0.22\linewidth}
    \centering
    \subfloat[\label{m_2}]{\includegraphics[width=\linewidth]{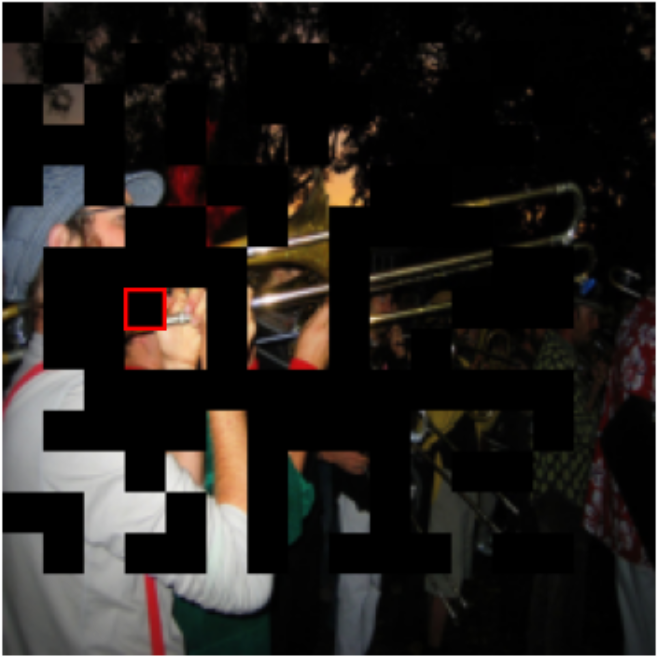}}
  \end{minipage}
   \hfill
  \begin{minipage}[b]{0.22\linewidth}
    \centering
    \subfloat[\label{o_2}]{\includegraphics[width=\linewidth]{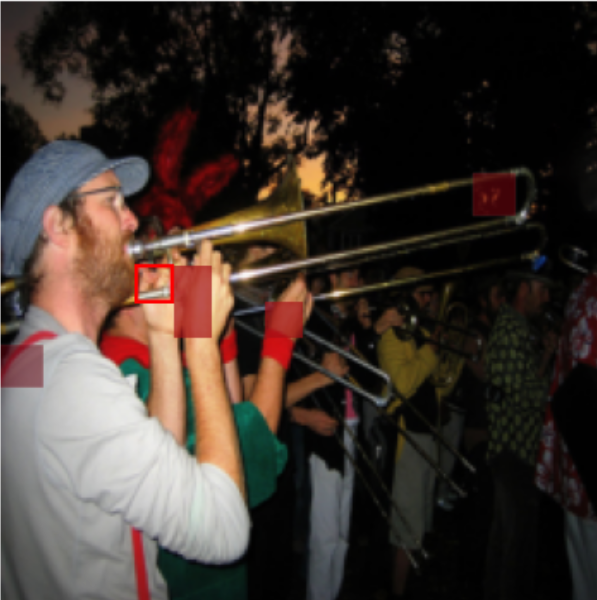}}
  \end{minipage}
   \hfill
  \begin{minipage}[b]{0.22\linewidth}
    \centering
    \subfloat[\label{adv_2}]{\includegraphics[width=\linewidth]{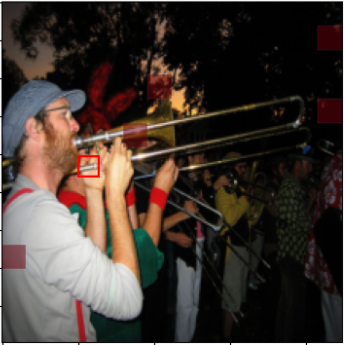}}
  \end{minipage}
  \hfill
  \begin{minipage}[b]{0.22\linewidth}
    \centering
    \subfloat[\label{purification_2}]{\includegraphics[width=\linewidth]{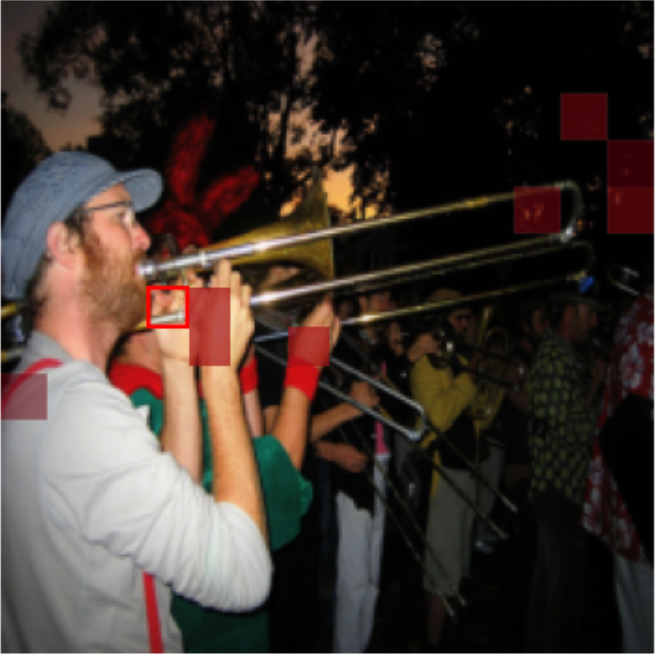}}
  \end{minipage}
  \hfill
   \begin{minipage}[b]{0.045\linewidth}
    \centering
    \vspace*{0.1cm}
    {\includegraphics[width=\linewidth]{figs/2d.jpg}}
  \end{minipage}
  \hfill

  \begin{minipage}[b]{0.22\linewidth}
    \centering
    \subfloat[\label{m_3}]{\includegraphics[width=\linewidth]{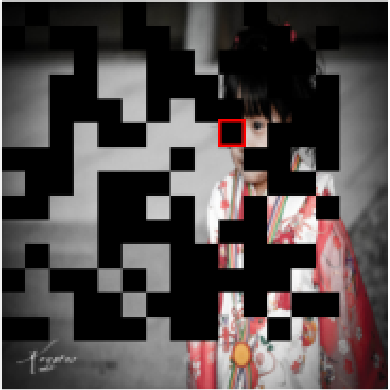}}
  \end{minipage}
   \hfill
  \begin{minipage}[b]{0.22\linewidth}
    \centering
    \subfloat[\label{o_3}]{\includegraphics[width=\linewidth]{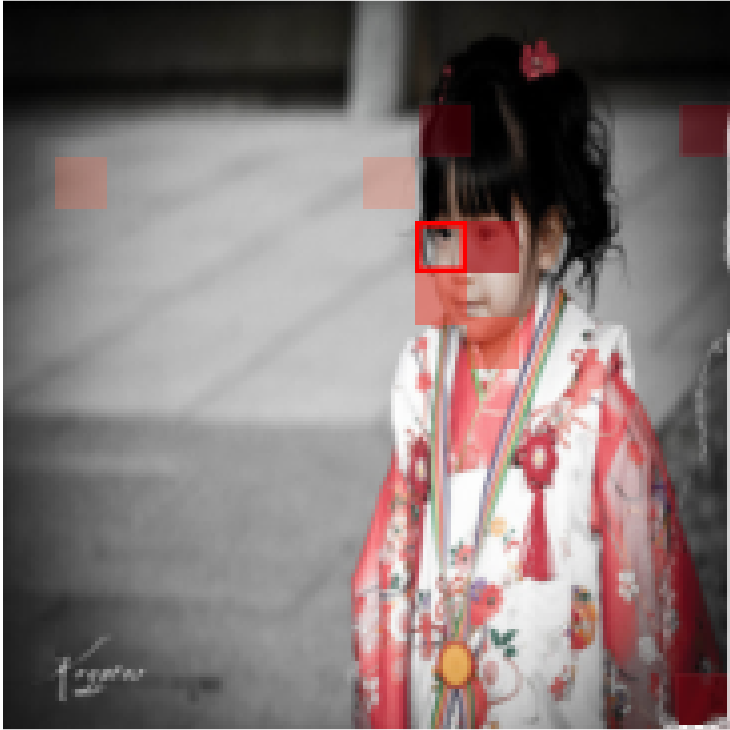}}
  \end{minipage}
   \hfill
  \begin{minipage}[b]{0.22\linewidth}
    \centering
    \subfloat[\label{adv_3}]{\includegraphics[width=\linewidth]{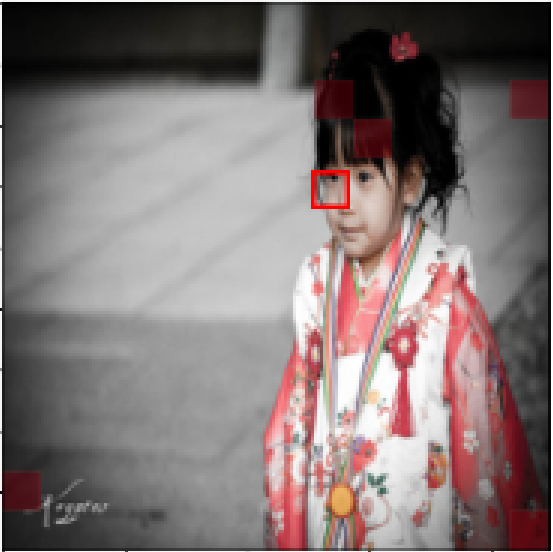}}
  \end{minipage}
    \hfill
  \begin{minipage}[b]{0.22\linewidth}
    \centering
    \subfloat[\label{purification_3}]{\includegraphics[width=\linewidth]{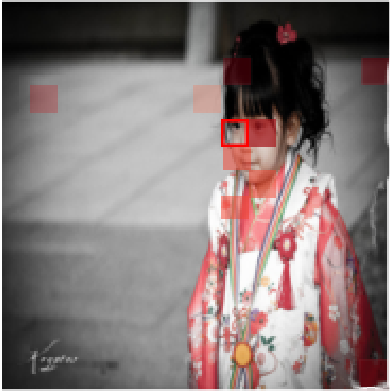}}
  \end{minipage}
  \hfill
   \begin{minipage}[b]{0.045\linewidth}
    \centering
    {\includegraphics[width=\linewidth]{figs/2d.jpg}}
  \end{minipage}
    \hfill

  \caption{
  The first column, Fig. (a) and (e), represents the Mask Matrix.
The second column, Fig. (b) and (f), illustrates the Attention Weights for clean samples.
The third column, Fig. (c) and (k), depicts the Attention Weights for adversarial examples.
The fourth column, Fig. (d) and (l), showcases the Attention Weights for denoised samples (by our AMRM-Pure$_{\text{MAE}}$). 
  Patches with a deeper red color mean the elements with more attention. The data is sampled from the ImageNet dataset~\citep{deng2009imagenet}.}
  \label{amv_}
    \vspace{-.15in}
\end{figure}

\section{Convergence Analysis of Purification Process.} 
\label{Convergence}

For a well-trained MAE model, the reconstruction loss $\mathcal{L}_{\text{rec}}(x)$ is expected to reach a local minimum $\mathcal{L}_{\text{rec}}^*$ ,where the input $x$ corresponds to a clean example, i.e., $\mathcal{L}_{\text{rec}}(x) \approx \mathcal{L}_{\text{rec}}^*$.
Motivated by prior analysis ~\citep{lee2024visualizing, zhang2022mask} that the loss landscape of MAE is smoother and exhibits wider convex regions, the reconstruction loss $\mathcal{L}_{\text{rec}}$ can be regarded as weakly convex within the neighborhood $[x - \delta, x + \delta]$ around the clean input $x$.
Purified samples at the $s$-th iteration of the denoising process are represented as $x_{adv}^s$, and their corresponding reconstruction loss is denoted as $\mathcal{L}_{\text{rec}}^s=\mathcal{L}_{\text{rec}}(x_{adv}^s, \mathbf{U})$, where $\mathbf{U}\in \mathbb{R}^{n\times n}$ represents the MAE mask.

\begin{theorem}	
\label{thm3}
      Let $\{\mathbf{U}_i\}_{i=1}^{E}$ be mask set which contains all possible masks with mask ratio $\rho$. After $S$ optimization iterations  according to Eq.\ref{results},  it holds that:
      {\scriptsize 
      \begin{align*}
    & \frac{1}{(1-\rho)E}\sum_{e=1}^{E}[\mathcal{L}_{\text{rec}}(\frac{1}{S+1}\sum_{s=0}^{S}x_{adv}^s, \mathbf{U}_e) - \mathcal{L}_{\text{rec}}^*] 
   \leq  \frac{1}{(1-\rho)E} \sum_{e=1}^E[\frac{||x_{adv}-x||^2_2}{2\lambda (S+1)} + \frac{\lambda}{2( S+1)}\sum_{s=0}^{S}||\nabla_x\mathcal{L}_{\text{rec}}(x_{adv}^{s}, \mathbf{U}_e)||^2_2], \nonumber
\end{align*}
} 
where $\lambda$ is the step size and $\eta$ is the clipping threshold. 
\end{theorem}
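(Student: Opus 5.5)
The plan is to treat the purification iteration of Eq.~(\ref{results}) as (sub)gradient descent on $\mathcal{L}_{\text{rec}}(\cdot,\mathbf{U}_e)$ for each fixed mask $\mathbf{U}_e$, and then run the textbook online/convex gradient descent argument with the clean sample $x$ as comparator. Concretely, I will use the convexity assumed above: within the neighbourhood $[x-\delta,x+\delta]$ the loss is treated as convex, and $\mathcal{L}_{\text{rec}}(x,\mathbf{U}_e)\approx\mathcal{L}_{\text{rec}}^*$. I will idealize the update as $x_{adv}^{s+1}=x_{adv}^s-\lambda g_s$ with $g_s=\nabla_x\mathcal{L}_{\text{rec}}(x_{adv}^s,\mathbf{U}_e)$. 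The clipping by $\eta$ is a projection onto a convex box containing $x$, so it is nonexpansive toward $x$ and only helps the distance inequality. The sign operator I will either replace by the raw gradient, which matches the gradient-norm term in the bound, or absorb into the step.

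The key per-step inequality comes from expanding
\[
\|x_{adv}^{s+1}-x\|_2^2\le\|x_{adv}^s-x\|_2^2-2\lambda\langle g_s,x_{adv}^s-x\rangle+\lambda^2\|g_s\|_2^2 .
\]
Convexity gives $\mathcal{L}_{\text{rec}}(x_{adv}^s,\mathbf{U}_e)-\mathcal{L}_{\text{rec}}^*\le\langle g_s,x_{adv}^s-x\rangle$. Combining the two,
\[
\mathcal{L}_{\text{rec}}(x_{adv}^s,\mathbf{U}_e)-\mathcal{L}_{\text{rec}}^*\le\frac{\|x_{adv}^s-x\|_2^2-\|x_{adv}^{s+1}-x\|_2^2}{2\lambda}+\frac{\lambda}{2}\|g_s\|_2^2 .
\]
Summing over $s=0,\dots,S$ telescopes the distance terms, leaving $\|x_{adv}^0-x\|_2^2=\|x_{adv}-x\|_2^2$, and then I divide by $S+1$. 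Jensen's inequality (convexity again) moves the average inside: $\mathcal{L}_{\text{rec}}(\frac{1}{S+1}\sum_s x_{adv}^s,\mathbf{U}_e)\le\frac{1}{S+1}\sum_s\mathcal{L}_{\text{rec}}(x_{adv}^s,\mathbf{U}_e)$. This gives the bound for each mask. Averaging over $e=1,\dots,E$ and multiplying both sides by the positive constant $\frac{1}{1-\rho}$ yields exactly the stated inequality. That factor just mirrors the normalization in Eq.~(\ref{mae:loss}).

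The main obstacle is justifying convexity along the whole trajectory and at the averaged point. The iterates must stay in the region where $\mathcal{L}_{\text{rec}}$ is (weakly) convex, which I will secure by noting that clipping with $\eta\le\delta$ keeps every $x_{adv}^s$ in $[x-\delta,x+\delta]$; the average of the iterates then also lies in this box, since it is convex. If only weak convexity holds, an extra $\frac{\mu}{2}\|x_{adv}^s-x\|^2$ term appears, and I would either absorb it under the approximation $\mathcal{L}_{\text{rec}}(x)\approx\mathcal{L}_{\text{rec}}^*$ or assume the modulus is negligible near $x$. A secondary gap is the sign step: in the stated form it is handled by reading $\Delta_s$ as the gradient direction. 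A rigorous sign-gradient version would instead replace $\|g_s\|_2^2$ by the dimension $d$ and pick up an $\ell_1$ inner product, so I would state the idealization explicitly.
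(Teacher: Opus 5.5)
Your proposal is correct and is essentially the paper's own argument, and it is more careful than the paper's write-up. Both use the per-mask convexity inequality at the iterate, expand the clipped step, telescope, apply Jensen to the averaged iterate, average over masks with $\mathcal{L}_{\text{rec}}(x,\mathbf{U}_e)\approx\mathcal{L}_{\text{rec}}^*$, and idealize away the sign operator (the paper does this silently); the paper, however, states the clipping inequality in the reverse direction and mis-indexes its per-step bounds so that they do not telescope as written. Two small points: a single trajectory cannot be gradient descent on every $\mathcal{L}_{\text{rec}}(\cdot,\mathbf{U}_e)$ at once (a looseness the paper shares), so the average over $e$ is literally valid only if each step uses the mask-averaged gradient (or a uniformly random mask, in expectation), after which $\bigl\|\frac{1}{E}\sum_{e}\nabla_x\mathcal{L}_{\text{rec}}(x_{adv}^s,\mathbf{U}_e)\bigr\|_2^2\le\frac{1}{E}\sum_{e}\|\nabla_x\mathcal{L}_{\text{rec}}(x_{adv}^s,\mathbf{U}_e)\|_2^2$ yields exactly the stated right-hand side; and if the clip box is centred at $x_{adv}$, having it contain $x$ and lie inside $[x-\delta,x+\delta]$ requires $\|x_{adv}-x\|_\infty\le\eta\le\delta-\|x_{adv}-x\|_\infty$, not merely $\eta\le\delta$.
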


\begin{proof}
The proof is relegated to Supplementary Material I.4.
\end{proof}


Theorem~\ref{thm3} provides an upper bound on the gap between the averaged reconstruction loss during purification and the optimal loss $\mathcal{L}_{\text{rec}}^*$. As $S$ increases, the first term vanishes at $\mathcal{O}(1/S)$ and the second term decreases as the gradient norm $\|\nabla_x \mathcal{L}_{\text{rec}}(x_{adv}^{s}, \mathbf{U}_e)\|_2^2$ diminishes. Hence, the RHS tends to zero, implying that the LHS also converges to zero. As a result, $\mathcal{L}_{\text{rec}}\left(\frac{1}{S+1} \sum_{s=0}^{S} x_{adv}^s, \mathbf{U}_e\right) \to \mathcal{L}^*_{\text{rec}}(x)$, indicating that the denoised sample progressively approximates the clean example in the reconstruction space.

\section{Proof of Theoretical Analysis}

\subsection{Assumption}
\label{exp:assumption}

\noindent \textbf{Assumption 1.} \textit{There exists a pseudo-inverse encoder \(f_g\) that satisfies \(\|g(f_g(a)) - a\|_2 \leq c_{rec}\) for any non-degenerate decoder, where $a$ can be ether $x_1$ or $x_2$.} Here,  $x_1$ and $x_2$ indicate visible portion and masked patches of input image $x$. $c_{rec}$ is reconstruction bias, which symbolizes the disparity between the output of MAE and the original, unmasked image.

\noindent \textbf{Remark.}  
To facilitate the theoretical analysis of MAE, we borrow the above reasonable assumption from the previous study~\citep{zhang2022mask}. Intuitively, this assumption states that within the MAE framework, the decoder \(g\) is non-degenerate---i.e., its outputs retain meaningful information---and there exists a corresponding pseudo-inverse encoder \(f_g\), such that their composition \(h_g = g \circ f_g\) can approximately reconstruct either the visible patches \(x_1\) or the masked patches \(x_2\) of the input image. Physically, this implies that the decoder in MAE has sufficient capacity for recovering local structures from latent representations, a property empirically verified in many Transformer-based autoencoding models. This assumption provides the theoretical foundation for connecting MAE’s reconstruction loss with alignment loss and helps interpret MAE as implicitly performing contrastive alignment through its masking mechanism.

\noindent \textbf{Assumption 2 (Lipschitz Continuity).}
Let $\mathbf{A}^{\mathrm{dec}}_{i} = \Phi\bigl(\{\mathbf{A}^{\mathrm{dec}}_{i,t}\}_{t=1}^T\bigr)$ denote the final decoder attention matrix obtained from the per-layer attention matrices $\{\mathbf{A}^{\mathrm{dec}}_{i,t}\}_{t=1}^T$. We assume $\Phi$ is $L$-Lipschitz in the domain of interest, i.e., there exists a constant $L>0$ such that for any sets of matrices $\{\mathbf{X}_t\}_{t=1}^T$ and $\{\mathbf{Y}_t\}_{t=1}^T$:
\[
\bigl\|\Phi(\mathbf{X}_1,\dots,\mathbf{X}_T)
-\Phi(\mathbf{Y}_1,\dots,\mathbf{Y}_T)\bigr\|
\;\le\;
L\sum_{t=1}^T\|\mathbf{X}_t-\mathbf{Y}_t\|.
\]
Under this assumption, for the adversarially perturbed attention matrices $\{\mathbf{A}^{\mathrm{dec}}_{\mathrm{adv}, i,t}\}$, the final attention matrix satisfies:
\[
\bigl\|\mathbf{A}^{\mathrm{dec}}_{\mathrm{adv}, i}
-\mathbf{A}^{\mathrm{dec}}_{i}\bigr\|^2
\;\le\;
L^2\,T
\sum_{t=1}^T
\Bigl\|\mathbf{A}^{\mathrm{dec}}_{\mathrm{adv}, i,t}
-\mathbf{A}^{\mathrm{dec}}_{i,t}\Bigr\|^2.
\]
Hence, we can further write as:
\[
\bigl\|\mathbf{A}^{\mathrm{dec}}_{\mathrm{adv}, i}
-\mathbf{A}^{\mathrm{dec}}_{i}\bigr\|^2
\;\;\le\;\;
\frac{H}{T}
\sum_{t=1}^T
\Bigl\|\mathbf{A}^{\mathrm{dec}}_{\mathrm{adv}, i,t}
-\mathbf{A}^{\mathrm{dec}}_{i,t}\Bigr\|^2,
\]
by setting $H = L^2 T^2$, thereby bounding the overall adversarial effect via the per-layer differences.

\subsection{Proof of Theorem 3.1}
\label{proof_thm1}


\textit{Proof:}

\begin{align*}
||\mathbf{A}^{t}_{\text{adv}} - \mathbf{A}^{t}||_2 
&= \left\| \text{softmax}(\mathbf{Q}^{t}_{\text{adv}}(\mathbf{K}^{t}_{\text{adv}})^T) - \text{softmax}(\mathbf{Q}^{t}(\mathbf{K}^{t})^T) \right\|_2
\end{align*}

To streamline the equation, we employ kernel methods as a substitute for the $\text{softmax}(\cdot)$ function~\citep{choromanski2020rethinking}. Let the kernel function be denoted as $\phi(\cdot)$:

\begin{align*}
\text{softmax}(\mathbf{QK}^{T}) \approx \langle \phi(\mathbf{Q}), \phi(\mathbf{K}) \rangle
\end{align*}

The definition is written as follows:
\[
\phi(x) = \frac{d(x)}{\sqrt{k}}\{f(\omega_1^\top x),...,f(\omega_k^\top x)\}, \quad
d(x) = \exp\left(-\frac{||x||^2}{2}\right), \quad f(x) = \exp(x)
\]

Thus, we obtain:
\begin{align*}
\mathbf{A}^{t}_{\text{adv}} - \mathbf{A}^{t} 
&\approx \frac{1}{m} \left[
\exp\left(-\frac{||\mathbf{Q}_{\text{adv}}^t||^2+||\mathbf{K}_{\text{adv}}^t||^2}{2}\right) \sum^{k}_{i=0} \exp\left(\omega_i^\top(\mathbf{Q}^{t}_{\text{adv}}+\mathbf{K}^{t}_{\text{adv}})\right) \right. \\
&\quad\left. - \exp\left(-\frac{||\mathbf{Q}^t||^2+||\mathbf{K}^t||^2}{2}\right) \sum^{k}_{i=0} \exp\left(\omega_i^\top(\mathbf{Q}^{t}+\mathbf{K}^{t})\right) \right]
\end{align*}

For: \(\mathbf{Q}_{\text{adv}}^t = \mathbf{Q}^t + \Delta \mathbf{Q}^t\), \(\mathbf{K}_{\text{adv}}^t = \mathbf{K}^t + \Delta \mathbf{K}^t\)

Following \citep{nguyen2022fourierformer,moosavi2016deepfool}, using Taylor expansion approximations, the formulation can be deduced as follows:

\[
\exp\left( -\frac{\|\mathbf{Q}_{\text{adv}}^t\|^2 + \|\mathbf{K}_{\text{adv}}^t\|^2}{2} \right) 
\ge \exp\left( -\frac{\|\mathbf{Q}^t\|^2 + \|\mathbf{K}^t\|^2}{2} \right) 
\left( 1 - (\mathbf{Q}^t)^\top \Delta \mathbf{Q}^t - (\mathbf{K}^t)^\top \Delta \mathbf{K}^t \right)
\]

Let \(\mathbf{S} = \mathbf{Q}^t + \mathbf{K}^t\), \(\Delta \mathbf{S} = \Delta \mathbf{Q}^t + \Delta \mathbf{K}^t\), there exists a following equation:

\[
\sum_{i=0}^{k} \exp\left( \omega_i^\top (\mathbf{S} + \Delta \mathbf{S}) \right)
\ge \sum_{i=0}^{k} \exp\left( \omega_i^\top \mathbf{S} \right) 
+ \sum_{i=0}^{k} \exp\left( \omega_i^\top \mathbf{S} \right) \omega_i^\top \Delta \mathbf{S}
\]

The definition is written as follows:

\[
\mathbf{B} = \sum_{i=0}^{k} \exp\left( \omega_i^\top \mathbf{S} \right), \quad
\mathbf{Y} = \sum_{i=0}^{k} \exp\left( \omega_i^\top \mathbf{S} \right) \omega_i
\]

Then, we deduce the following sum equation:

\[
\sum_{i=0}^{k} \exp\left( \omega_i^\top (\mathbf{Q}_{\text{adv}}^t + \mathbf{K}_{\text{adv}}^t) \right) 
\approx \mathbf{B} + \mathbf{Y}^\top \Delta \mathbf{S}
= \mathbf{B} + \mathbf{Y}^\top (\Delta \mathbf{Q}^t + \Delta \mathbf{K}^t)
\]

The substitution into difference is:

\begin{align*}
D &\approx \frac{1}{m} \exp\left( -\frac{\|\mathbf{Q}^t\|^2 + \|\mathbf{K}^t\|^2}{2} \right) 
\left[ \left( 1 - (\mathbf{Q}^t)^\top \Delta \mathbf{Q}^t - (\mathbf{K}^t)^\top \Delta \mathbf{K}^t \right)
\left( \mathbf{B} + \mathbf{Y}^\top (\Delta \mathbf{Q}^t + \Delta \mathbf{K}^t) \right) - \mathbf{B} \right] \\
&\approx \frac{1}{m} \exp\left( -\frac{\|\mathbf{Q}^t\|^2 + \|\mathbf{K}^t\|^2}{2} \right) 
\left[ \mathbf{Y}^\top (\Delta \mathbf{Q}^t + \Delta \mathbf{K}^t) 
- \mathbf{B} \left( (\mathbf{Q}^t)^\top \Delta \mathbf{Q}^t + (\mathbf{K}^t)^\top \Delta \mathbf{K}^t \right) \right]
\end{align*}

We group and finalize to get a concluding formulation:

\begin{align*}
||\mathbf{A}^t_{\text{adv}} - \mathbf{A}^t||_2 
&\approx \frac{\exp\left( -\frac{\|\mathbf{Q}^t\|^2 + \|\mathbf{K}^t\|^2}{2} \right)}{m} 
\left\| \left( \mathbf{Y} - \mathbf{B} \mathbf{Q}^t \right)^\top \Delta \mathbf{Q}^t 
+ \left( \mathbf{Y} - \mathbf{B} \mathbf{K}^t \right)^\top \Delta \mathbf{K}^t \right\|_2 \\
&= \gamma \left\| \left( \mathbf{Y} - \mathbf{B} \mathbf{Q}^t \right)^\top \Delta \mathbf{Q}^t 
+ \left( \mathbf{Y} - \mathbf{B} \mathbf{K}^t \right)^\top \Delta \mathbf{K}^t \right\|_2
\end{align*}

where the variables' definitions are:
\begin{itemize}
\item \( \mathbf{B} = \sum_{i=0}^{k} \exp\left( \omega_i^\top (\mathbf{Q}^t + \mathbf{K}^t) \right) \),
\item \( \mathbf{Y} = \sum_{i=0}^{k} \exp\left( \omega_i^\top (\mathbf{Q}^t + \mathbf{K}^t) \right) \omega_i \),
\item \( \gamma = \frac{\exp\left( -\frac{\|\mathbf{Q}^t\|^2 + \|\mathbf{K}^t\|^2}{2} \right)}{m} \).\\
\end{itemize}
Therefore, we get 
\begin{equation}
    ||\mathbf{A}_{adv}^{t}-\mathbf{A}^{t}||_2 \ge
    \gamma \left\| \left[\left( \mathbf{Y} - \mathbf{B} \mathbf{Q}^t \right)^\top \mathbf{W}^t_Q + \left( \mathbf{Y} - \mathbf{B} \mathbf{K}^t \right)^\top \mathbf{W}^t_K \right]\delta_t \right\|_2,
\end{equation}

The proof is complete.

\subsection{Proof of Theorem 3.2}
\label{proof:thm2}


\textit{Proof:}

\begin{equation*}
\begin{aligned}
     \mathcal{L}_{\text{rec}}^{adv} &= \frac{1}{N}\sum_{i=1}^{N}||g(f(x_{i_1}^{adv}))-x_{i_2}^{adv} ||^2 \\
\end{aligned}
\end{equation*}

\begin{equation*}
\begin{aligned}
    & =\frac{1}{N}\sum_{i=1}^{N}[||g(f(x_{i_1}^{adv}))-x_{i_2}^{adv}||_2 +c_{\text{rec}} -c_{\text{rec}}] \\
    &\ge \frac{1}{N}\sum_{i=1}^{N}[||g(f(x_{i_1}^{adv}))-x_{i_2}^{adv}||^2 +|| g(f_g(x_{i_2})) -x_{i_2}||^2 -c_{\text{rec}}]  \\
    &= \frac{1}{N}\sum_{1}^{N}[||g(f(x_{i_1}^{adv}))-x_{i_2}^{adv}+g(f(x_{i_1}))-g(f(x_{i_1}))||^2 +|| g(f_g(x_{i_2})) -x_{i_2}||^2 -c_{\text{rec}}]  \\
    &\geq \frac{1}{N}\sum_{i=1}^{N}[\frac{1}{2}||g(f(x_{i_1}^{adv}))-x_{i_2}^{adv}+g(f(x_{i_1}))-g(f(x_{i_1})) + g(f_g(x_{i_2})) -x_{i_2}||^2 -c_{\text{rec}} ] \\
    &\geq \frac{1}{2N}\sum_{i=1}^{N}[||g(f(x_{i_1}^{adv}))-x_{i_2}^{adv}+g(f(x_{i_1}))-g(f(x_{i_1})) + g(f_g(x_{i_2})) -x_{i_2}||^2 -2c_{\text{rec}} ] \\
    & \geq \frac{1}{2N}\sum_{i=1}^{N}[||g(f(x_{i_1}))-x_{i_2}||^2+||g(f(x_{i_1}^{adv}))-g(f(x_{i_1}))||^2+||g(f_g(x_{i_2})-x_{i_2}^{adv}||^2-2c_{\text{rec}}]          \\
    &\geq \frac{1}{2}\mathcal{L}_{\text{rec}}+ \frac{1}{2N}\sum_{i=1}^{N}[||g(f(x_{i_1}^{adv}))-g(f(x_{i_1}))||^2+||g(f_g(x_{i_2})-x_{i_2}^{adv}||^2-2c_{\text{rec}}].  
\end{aligned}
\end{equation*}

Since we have  $||g(f_g(x_{i_2})-x_{i_2}^{adv}||^2 = ||g(f_g(x_{i_2})-x_{i_2}+\delta||^2$, and $||\delta||_2$ is tiny, it leads $||g(f_g(x_{i_2})-x_{i_2}+\delta||^2 \geq ||g(f_g(x_{i_2})-x_{i_2})||^2-||\delta||^2$. 
Following the previous study~\citep{cao2022understand}, the decoder $g(\cdot)$ in a MAE can be represented as an interpolation of the encoder output, denoted by $g(f(x)) = \mathbf{A}_{\text{dec}} \mathbf{V}_{\text{enco}}$. $A_{\text{dec}}$ represents the interpolation weights, and $\mathbf{V}_{\text{enco}}$ is the encoder output.

\begin{equation*}
\begin{aligned}
\mathcal{L}_{\text{rec}}^{adv} \geq \frac{1}{2}\mathcal{L}_{\text{rec}}+ \frac{1}{2N}\sum_{i=1}^{N}[||\textbf{A}^{dec}_{adv_i}\mathbf{V}^{enco}_{adv_i} -\textbf{A}^{dec}_i\mathbf{V}^{enco}_i||^2-||\delta||^{2} - c_{\text{rec}}] 
\end{aligned}
\end{equation*}
\begin{equation*}
\begin{aligned}
&= \frac{1}{2}\mathcal{L}_{\text{rec}}+ \frac{1}{2N}\sum_{i=1}^{N}[||\mathbf{A}^{dec}_{adv_i}\mathbf{V}^{enco}_{adv_i} -\mathbf{A}^{dec}_i\mathbf{V}^{enco}_i + \mathbf{A}^{dec}_{i}\mathbf{V}^{enco}_{adv_i}- \mathbf{A}^{dec}_{i}\mathbf{V}^{enco}_{adv_i} ||^2-||\delta||^{2} - c_{\text{rec}}]
\\& \geq  \frac{1}{2}\mathcal{L}_{\text{rec}}+ \frac{1}{2N}\sum_{i=1}^{N}[||\mathbf{V}^{enco}_{adv_i}(\mathbf{A}^{dec}_{adv_i}-\mathbf{A}^{dec}_{i}) ||^2 + ||\mathbf{A}^{dec}_i(\mathbf{V}^{enco}_i - \mathbf{V}^{enco}_{adv_i}) ||^2-||\delta||^{2} - c_{\text{rec}}].
\end{aligned}
\end{equation*}

Since $||\mathbf{A}^{dec}_i(\mathbf{V}^{enco}_i - \mathbf{V}^{enco}_{adv_i}) ||^2 \ge ||\delta ||^2$ and a ratio constant $C_A$, we can get the following equation:
\begin{equation*}
\begin{aligned}
\mathcal{L}^{adv}_{\text{rec}} \geq \frac{1}{2}\mathcal{L}_{\text{rec}}+ \frac{1}{2N}\sum_{i=1}^{N}[C_{A}||(\mathbf{A}^{dec}_{adv_i}-\mathbf{A}^{dec}_{i})||^2 - c_{\text{rec}}]. 
\end{aligned}
\end{equation*}

Based on Assumption 2, there exists a constant \( H \) such that \( \left\| \mathbf{A}^{\text{dec}}_{\text{adv},i} - \mathbf{A}^{\text{dec}}_i \right\|^2 \approx \frac{H}{T} \sum_{t=1}^{T} \left\| \mathbf{A}^{\text{dec}}_{\text{adv},i,t} - \mathbf{A}^{\text{dec}}_{i,t} \right\|^2. \)
$T$ represents the number of layers in the decoder, and $\mathbf{A}^{dec}_{adv_i,t}$ and $\mathbf{A}^{dec}_{i,t}$ are the attention matrices at the $t$-th decoder layer corresponding to $i$-th adversarial example and the clean example, respectively. Therefore, we can conclude the proof:
\begin{equation*}
\begin{aligned}
\mathcal{L}^{adv}_{\text{rec}} \geq \frac{1}{2}\mathcal{L}_{\text{rec}}+\frac{1}{2NT}\sum_{t=1}^{T}\sum_{i=1}^{N}[HC_{A}||(\mathbf{A}^{dec}_{adv_{i,t}}-\mathbf{A}^{dec}_{i,t})||^2-c_{\text{rec}}]. 
\end{aligned}
\end{equation*}
The proof is complete.

\subsection{Proof of Theorem 4.1}
\label{proof_thm3}

\textit{Proof:}

We assume $\mathcal{L}_{\text{rec}}$ function is convex at the area $[x-\delta,x+\delta]$. For fixed mask $U_0$, we can get the following inequality:
\begin{equation*}
\begin{aligned}
    &\mathcal{L}_{\text{rec}}(x_{adv},U_0) \leq \mathcal{L}_{\text{rec}}(x,U_0) +<\nabla_{x}L(x,U_0),x_{adv}-x> \\
&\iff \mathcal{L}_{\text{rec}}(x_{adv},U_0) - L_{\text{rec}}(x,U_0) \leq <\nabla_{x}L(x,U_0),x_{adv}-x>.
\end{aligned}
\end{equation*}
We define $y^1_{adv}=x_{adv} -\nabla_{x}\mathcal{L}(x_{adv},U_0), $\\
$$
\begin{aligned}
&\iff \mathcal{L}_{\text{rec}}(x_{adv},U_0) - \mathcal{L}_{\text{rec}}(x,U_0) \leq <\frac{x_{adv}-y_{adv}^1}{\lambda},x_{adv}-x>\\
&\Rightarrow \mathcal{L}_{\text{rec}}(x_{adv},U_0) - \mathcal{L}_{\text{rec}}(x,U_0) \leq \frac{<x_{adv}-y_{adv}^1,x_{adv}-x>}{\lambda}\\ 
&= \frac{(x_{adv})^2-x_{adv}x-x_{adv}y_{adv}^1+y_{adv}^1x}{\lambda} \\
&=\frac{2(x_{adv})^2-2x_{adv}x-2x_{adv}y_{adv}^1+2y_{adv}^1x}{2\lambda} \\
&\frac{(X_{adv})^2-2x_{adv}x+(x_{adv})^2 -2y_{adv}x^1_{adv}+x^2-x^2+(y^1_{adv})^2-(y^1_{adv})^2}{2\lambda} \\
&=\frac{||x_{adv}-x||^2_2+||x_{adv}-y_{adv}^1||^2_2-||y^1_{adv}-x||^2_2}{2\lambda}.
\end{aligned}
$$
Therefore, we can get the inequality:
\begin{equation*}
\begin{aligned}
\mathcal{L}_{\text{rec}}(x_{adv},U_0) - \mathcal{L}_{\text{rec}}(x,U_0) \leq \frac{||x_{adv}-x||^2_2-||y_{adv}^1-x||^2_2}{2\lambda}+\frac{\lambda}{2}||\nabla_X\mathcal{L}(x_{adv},U_0)||^2_2.
\end{aligned}
\end{equation*}
Since $||y_{adv}^1-x||^2_2\leq ||clip(y_{adv}^1,\eta)-x||^2_2$, and $\eta$ is the clipping threshold, we define $x_{adv}^1 = clip(y_{adv}^1,\eta)$ and get a derivation:
\begin{equation*}
\begin{aligned}
\mathcal{L}_{\text{rec}}(x_{adv},U_0) - \mathcal{L}_{\text{rec}}(x,U_0) \leq \frac{||x_{adv}-x||^2_2-||x_{adv}^1-x||^2_2}{2\lambda}+\frac{\lambda}{2}||\nabla_x\mathcal{L}(x_{adv},U_0)||^2_2.
\end{aligned}
\end{equation*}

As the same theory, the law is written as the following formulation:
\begin{equation*}
\left\{
\begin{aligned}
& s=0\\
&\mathcal{L}_{\text{rec}}(x_{adv},U_0) - \mathcal{L}_{\text{rec}}(x,U_0) \leq \frac{||x_{adv}-x||^2_2-||x_{adv}^1-x||^2_2}{2\lambda}+\frac{\lambda}{2}||\nabla_X\mathcal{L}(x_{adv}^1,U_0)||^2_2 \\
& s=1\\
&\mathcal{L}_{\text{rec}}(x_{adv}^1,U_0) - \mathcal{L}_{\text{rec}}(x,U_0) \leq \frac{||x_{adv}-x||^2_2-||x_{adv}^2-x||^2_2}{2\lambda}+\frac{\lambda}{2}||\nabla_x\mathcal{L}(x_{adv}^2,U_0)||^2_2\\
& s=2\\
&\mathcal{L}_{\text{rec}}(x_{adv}^2,U_0) - \mathcal{L}_{\text{rec}}(x,U_0) \leq \frac{||x_{adv}-x||^2_2-||x_{adv}^3-x||^2_2}{2\lambda}+\frac{\lambda}{2}||\nabla_x\mathcal{L}(x_{adv}^3,U_0)||^2_2
\\
&\hspace*{7.5cm}......
\\
& s=S\\
&\mathcal{L}_{\text{rec}}(x_{adv}^S,U_0) - \mathcal{L}_{\text{rec}}(x,U_0) \leq \frac{||x_{adv}-x||^2_2-||x_{adv}^{S+1}-x||^2_2}{2\lambda}+\frac{\lambda}{2}||\nabla_x\mathcal{L}(x_{adv}^S,U_0)||^2_2
\end{aligned}\right.
\end{equation*}

We sum all the above terms as follows:
\begin{equation*}
\begin{aligned}
    &\sum^S_{s=0}(\mathcal{L}_{\text{rec}}(x_{adv}^s,U_0)-\mathcal{L}_{\text{rec}}(x,U_0)) \leq \frac{||x_{adv}-x||^2_2-||x_{adv}^{S+1}-x||^2_2}{2\lambda} +\frac{\lambda}{2}\sum_{s=0}^{S}||\nabla_x\mathcal{L}(x_{adv}^{s},U_0)||^2_2
    \\
    &\leq \frac{||x_{adv}-x||^2_2}{2\lambda} +\frac{\lambda}{2}\sum_{s=0}^{S}||\nabla_x\mathcal{L}(x_{adv}^{s},U_0)||^2_2
\end{aligned}
\end{equation*}

The average loss upper bound $\frac{1}{S+1}\sum^S_{s=0}(\mathcal{L}_{\text{rec}}(x_{adv}^s,U_0)-\mathcal{L}_{\text{rec}}(x,U_0))$ is given as the following inequality:
\begin{equation*}
\begin{aligned}
    \frac{1}{S+1}\sum^S_{s=0}(\mathcal{L}_{\text{rec}}(x_{adv}^s,U_0)-\mathcal{L}_{\text{rec}}(x,U_0)) \leq \frac{||x_{adv}-x||^2_2}{2\lambda (S+1)} +\frac{\lambda}{2 (S+1)}\sum_{s=0}^{S}||\nabla_x\mathcal{L}(x_{adv}^{s},U_0)||^2_2.
\end{aligned}
\end{equation*}

Due to the weak convexity of $\mathcal{L}_{\text{rec}}(\cdot)$, it is evident that $\frac{1}{S+1}\sum^S_{s=0}\mathcal{L}_{\text{rec}}(x_{\text{adv}}^s,U_0) \geq \mathcal{L}_{\text{rec}}\left(\frac{1}{S+1}\sum^S_{s=0}X_{\text{adv}}^s,U_0\right)$ and we can obtain the formulation as follows:
\begin{equation*}
\begin{aligned}
    \mathcal{L}_{\text{rec}}\left(\frac{1}{S+1}\sum^S_{s=0}x_{\text{adv}}^s,U_0\right)-\mathcal{L}_{\text{rec}}(x,U_0)) \leq \frac{||x_{adv}-x||^2_2}{2\lambda (S+1)} +\frac{\lambda}{2 (S+1)}\sum_{s=0}^{S+1}||\nabla_x\mathcal{L}(x_{adv}^{s},U_0)||^2_2.
\end{aligned}
\end{equation*}
For any mask $U_e$ and $e\in [1,E]$, we can obtain similar results. Thus, we can draw the results:
\begin{equation*}
\begin{aligned}
    &\frac{1}{(1-\rho)E}\sum_{e=1}^{E}[\mathcal{L}_{\text{rec}}\left(\frac{1}{S+1}\sum^S_{s=0}x_{\text{adv}}^s,U_e\right)-\mathcal{L}_{MAE}(x,U_e))] \leq \frac{1}{(1-\rho)E} \sum_{e=1}^E[\frac{||x_{adv}-x||^2_2}{2\lambda (S+1)}\\ 
    &+\frac{\lambda}{2( S+1)}\sum_{s=0}^{S}||\nabla_x\mathcal{L}(x_{adv}^{s},U_e)||^2_2].
\end{aligned}
\end{equation*}
For a fixed masked rate $p$, the proof has been completed. $\mathcal{L}_{\text{rec}}(X,U_e)) \approx \mathcal{L}_{\text{rec}}^*(X)$ for any $e \in [1,E]$, and we can conclude the proof:
\begin{equation*}
\begin{aligned}
    &\frac{1}{(1-\rho)E}\sum_{e=1}^{E}[\mathcal{L}_{\text{rec}}\left(\frac{1}{S+1}\sum^S_{s=0}x_{\text{adv}}^s,U_e\right)-\mathcal{L}_{\text{rec}}^*(x)] \leq \frac{1}{(1-\rho)E} \sum_{e=1}^E[\frac{||x_{adv}-x||^2_2}{2\lambda (S+1)}\\ 
    &+\frac{\lambda}{2( S+1)}\sum_{s=0}^{S+1}||\nabla_x\mathcal{L}(x_{adv}^{s},U_e)||^2_2].
\end{aligned}
\end{equation*}
The proof is complete.
\section{Details of Mask Autoencoder}

\textbf{MAE:} For CIFAR-10, CIFAR-100, and SVHN, we use Base-MAE~\citep{he2022masked}, setting the image size to 32 and the patch size to 4, while leaving the other parameters unchanged. For ImageNet, we directly used Large-MAE.

For MAE training on CIFAR-10, CIFAR-100, and ImageNet, we use A100 GPUs, training for 2000 epochs with the learning rate of $1e^{-3}$, followed by an additional 1000 epochs with the learning rate of $1.5e^{-4}$. Our training batch size is 64. For ImageNet, we directly use the checkpoint provided by the author. 

\noindent \textbf{MaskDiT:} For SVHN, CIFAR100, and CIFAR10, the network first divides the 32x32 CIFAR-10 image into non-overlapping 8x8 patches, with each patch sized 4x4, resulting in a total of 64 patches. Each patch is mapped to a 128-dimensional embedding space through a linear projection layer, and learnable positional encodings are added. During training, 50\% of the patches are randomly masked, and only the unmasked patches are fed into an encoder composed of 6 Transformer blocks, each consisting of multi-head self-attention and a feed-forward network. Subsequently, the encoded unmasked patches are concatenated with learnable mask tokens and passed into a decoder composed of 3 lightweight Transformer blocks. Finally, a linear projection layer maps the decoded patches back to the 4x4x3 patch space, completing image reconstruction and denoising score prediction.
For ImageNet, we use patch size as 16$\times$ 16 and other parameter are same with original work~\citep{zheng2024fast}. Both network training process and parameters are aligned.

\subsection{Details of Robust fine-tuning}

For CIFAR-10, CIFAR-100, and SVHN, AMRM-Pure$_{\text{MAE}}$ is fine-tuned for 100 epochs, while for ImageNet, it is fine-tuned for 3 epochs. In contrast, MaskDiT is fine-tuned for 25 epochs on CIFAR-10, CIFAR-100, and SVHN, and 1 epoch on ImageNet.

\subsection{Details of Experiment}
\label{supp:det}

\subsubsection{Hyperparameters in Denoising Process}
All hyperparameters related to the denoising process are encompassed in Table~\ref{patch} and \ref{itera}, covering the number of denoising iterations, step size, mask rate, and patch size for each dataset. The asterisk (*) indicates that the step size decays by 0.5 after more than half of the iterations have been completed.
\begin{table}[!h]
\setlength{\tabcolsep}{3.5pt} 
\caption{Hyperparameters of Step Size and Patch Size}
\label{patch}
\small
\centering
\begin{tabularx}{\linewidth}{@{}>{\RaggedRight}X cccc cccc @{}} 
\toprule
\multirow{2}{*}{Dataset} & \multicolumn{4}{c}{Step Size} & \multicolumn{4}{c}{Patch Size} \\
\cmidrule(lr){2-5} \cmidrule(lr){6-9}
& \rotatebox{45}{MAE} & \rotatebox{45}{RMAE} & \rotatebox{45}{MaskDiT} & \rotatebox{45}{RMaskDiT} 
& \rotatebox{45}{MAE} & \rotatebox{45}{RMAE} & \rotatebox{45}{MaskDiT} & \rotatebox{45}{RMaskDiT} \\ 
\midrule
CIFAR-10   & 1* & 1* & 1 & 1 & 4 & 4 & 4 & 4 \\
CIFAR-100  & 1 & 1 & 1 & 1 & 4 & 4 & 4 & 4 \\
SVHN       & 1* & 1 & 1 & 1 & 4 & 4 & 4 & 4 \\
ImageNet   & 1* & 1 & 1 & 1 & 16 & 16 & 16 & 16 \\
\bottomrule
\end{tabularx}

\vspace{2mm}
\end{table}

\begin{table}[htbp]
\setlength{\tabcolsep}{3pt} 
\caption{Hyperparameters of Mask Rate and Iteration Numbers}
\label{itera}
\small
\centering
\begin{tabularx}{\linewidth}{@{}>{\raggedright}X *{8}{S[table-format=1.2]} @{}}
\toprule
\multirow{2}{*}{Dataset} & \multicolumn{4}{c}{Mask Rate} & \multicolumn{4}{c}{Iterations} \\
\cmidrule(lr){2-5} \cmidrule(lr){6-9}
& \rotatebox{45}{MAE} & \rotatebox{45}{RMAE} & \rotatebox{45}{MaskDiT} & \rotatebox{45}{RMaskDiT} 
& \rotatebox{45}{MAE} & \rotatebox{45}{RMAE} & \rotatebox{45}{MaskDiT} & \rotatebox{45}{RMaskDiT} \\
\midrule
CIFAR-10   & 0.25 & 0.25 & 0.50 & 0.50 & 150 & 90  & 25 & 15 \\
CIFAR-100  & 0.30 & 0.25 & 0.50 & 0.50 & 100 & 65  & 20 & 20 \\
SVHN       & 0.30 & 0.25 & 0.50 & 0.50 & 150 & 80  & 20 & 20 \\
ImageNet   & 0.20 & 0.25 & 0.30 & 0.30 & 150 & 20  & 25 & 20 \\
\bottomrule
\end{tabularx}

\vspace{2mm}
\end{table}

\end{document}